
\documentclass{article}

\usepackage{microtype}
\usepackage{graphicx}
\usepackage{tikz}
\usepackage{subcaption}
\usepackage{booktabs} 
\usepackage{nicefrac}
\usepackage{framed}
\usepackage[utf8]{inputenc} 
\usepackage[T1]{fontenc}    
\usepackage{hyperref}       
\usepackage{url}            
\usepackage{booktabs}       
\usepackage{amsfonts}       
\usepackage{nicefrac}       
\usepackage{microtype}      
\usepackage{xcolor}         
\usepackage{soul}
\usepackage{color}
\usepackage{epsfig}
\usepackage{epstopdf}
\usepackage{graphicx}
\usepackage{dsfont}
\usepackage{bm}
\usepackage{bbm}
\usepackage{amsmath}
\usepackage{amsthm}
\usepackage{verbatim}
\usepackage{colortbl}
\usepackage{lipsum}
\usepackage{tabularx,booktabs}
\usepackage{multirow}
\usepackage{caption}
\usepackage{subcaption}
\usepackage{enumitem}
\usepackage{framed}
\usepackage{float}
\usepackage{wrapfig}
\usepackage{adjustbox}
\usepackage[linesnumbered,ruled,vlined,algo2e]{algorithm2e}
\usepackage{mathtools}

\usepackage{hyperref}
\newcommand{\Prob}[1]{\Pr\left[#1\right]}

    \newcommand{\Adv}{{\mathbf{\bf Adv}}}

\newcommand{\bits}{{\{0,1\}}}

\usepackage[accepted]{icml2025}




\newcommand{\calD}{{\cal D}}

\newcommand{\cmt}[1]{{\color{blue} \footnotesize \# #1}}

\newcommand{\getsr}{{\:\stackrel{{\scriptscriptstyle \hspace{0.2em}\$}} {\leftarrow}\:}}

\newcommand{\getsp}{{\:\stackrel{{\scriptscriptstyle \hspace{0.2em}\mathcal{D}}} {\leftarrow}\:}}


\newcommand{\datt}{{d_{\textup{attn}}}}

\newcommand{\experimentv}[1]{\underline{{#1}}\smallskip}
\usepackage{amsmath}
\usepackage{amssymb}
\usepackage{mathtools}
\usepackage{amsthm}
\usepackage{graphicx}
\usepackage{subcaption}
\usepackage[capitalize,noabbrev]{cleveref}

\theoremstyle{plain}
\newtheorem{theorem}{Theorem}
\newtheorem*{theorem*}{Theorem}

\newtheorem{lemma}{Lemma}

\theoremstyle{definition}
\newtheorem{definition}{Definition}

\theoremstyle{remark}
\newtheoremstyle{boldremark}
  {\topsep} 
  {\topsep} 
  {}        
  {}        
  {\bfseries} 
  {}       
  { }       
  {\thmname{#1} \thmnumber{\textbf{#2}.} \thmnote{#3}} 

\theoremstyle{boldremark}
\newtheorem{remark}{Remark}
\usepackage[textsize=tiny]{todonotes}

\icmltitlerunning{Theoretically Unmasking Inference Attack Against LDP-Protected Clients in Federated Vision Models}

\begin{document}

\twocolumn[
\icmltitle{Theoretically Unmasking Inference Attacks Against LDP-Protected Clients in Federated Vision Models}



\icmlsetsymbol{equal}{*}

\begin{icmlauthorlist}
\icmlauthor{Quan Nguyen}{equal,yyy}
\icmlauthor{Minh N. Vu}{equal,comp}
\icmlauthor{Truc Nguyen}{sch}
\icmlauthor{My T. Thai}{yyy}
\end{icmlauthorlist}

\icmlaffiliation{yyy}{Department of CISE, University of Florida, USA.}
\icmlaffiliation{comp}{Center for Nonlinear Studies, Los Alamos National Lab, USA}
\icmlaffiliation{sch}{Computational Science Center, National Renewable Energy Laboratory, USA}

\icmlcorrespondingauthor{My T. Thai}{mythai@cise.ufl.edu}

\icmlkeywords{Machine Learning, ICML}

\vskip 0.3in
]



\printAffiliationsAndNotice{\icmlEqualContribution} 

\begin{abstract}
    
    Federated Learning enables collaborative learning among clients via a coordinating server while avoiding direct data sharing, offering a perceived solution to preserve privacy. However, recent studies on Membership Inference Attacks (MIAs) have challenged this notion, showing high success rates against unprotected training data. While local differential privacy (LDP) is widely regarded as a gold standard for privacy protection in data analysis, most studies on MIAs either neglect LDP or fail to provide theoretical guarantees for attack success rates against LDP-protected data.
    
    To address this gap, we derive theoretical lower bounds for the success rates of low-polynomial-time MIAs that exploit vulnerabilities in fully connected or self-attention layers. We establish that even when data are protected by LDP, privacy risks persist, depending on the privacy budget. 
    Practical evaluations on federated vision models confirm considerable privacy risks, revealing that the noise required to mitigate these attacks significantly degrades models' utility. 

\end{abstract}

\section{Introduction}
\label{sect:intro}

Federated Learning (FL) \cite{mcmahan2017communication} is a decentralized machine learning paradigm where multiple devices or nodes (e.g., smartphones, edge devices, or distributed servers) collaboratively train a shared model while keeping their data localized. Due to this property, it has long been heralded as a robust solution for privacy-preserving machine learning. However, FL itself does not provide formal privacy guarantees, as the model updates (gradients or weights) can still reveal sensitive information about the training data \cite{{fang2024privacy, yang2023gradient}}. A prominent attack against FL is the membership inference attack (MIAs) \cite{shokri2017membership, hu2022membership} in which the server 
seeks to determine whether a particular record was part of the model's training dataset. 

To mitigate such privacy risks, local differential privacy (LDP) \cite{dwork2014algorithmic, wang2020comprehensive} emerged as a prominent solution to limit the privacy leakage of local training data. Under LDP, individuals perturb their data locally before sharing it with the server, thereby eliminating the need for a trusted centralized curator. However, recent studies \cite{nguyen2023active,chen2021differential} demonstrated that tackling \textit{active membership inference attacks (AMI)} conducted by dishonest FL servers requires adding large privacy-preserving noise that would also damage FL utility. 
In AMI, the server actively poisons the global model before distributing it to clients, enabling them to infer private information. Such attacks typically require multiple training iterations or the use of shadow models, resulting in non-trivial time complexity, and provide no theoretical privacy risks to FL. Building on this, \cite{vu2024analysis} proposed a low-polynomial-time attack, presenting two active membership inference attacks on LLMs with guaranteed theoretical success rates on unprotected data. 

Despite these advancements, most existing works rely heavily on empirical validation and lack a robust theoretical foundation or guarantees for attack success rates, especially under LDP. This challenge arises from the randomness of noise introduced by privacy mechanisms, which varies across iterations and clients, making it challenging to analyze the effectiveness of attacks in a theoretical framework. 

This paper takes a step back and establishes a broader view of the principle of privacy risk imposed by dishonest servers in federated vision models under LDP from both theoretical and practical perspectives. Our study focuses on the active adversary setting, where the FL server acts dishonestly by manipulating the trainable weights of a vision model to breach privacy. Specifically, we aim to demonstrate that clients' data, even under LDP protection, are fundamentally vulnerable to AMI attacks carried out by dishonest servers. For that purpose, we analyze attacks that exploit the trainable fully connected (FC) layers and self-attention layers in FL updates as both are widely adopted in federated vision models. Our main contributions are summarized as follows:
\begin{itemize}
    \item We derive theoretical lower and upper bounds (Theorem~\ref{theorem:ami_ldp} and \ref{theorem:ami_upperbound}) on the success rates of a low-polynomial-time attack \cite{vu2024analysis} that exploits vulnerabilities in FC layers, showing that privacy risks persist under LDP protection depending on the privacy budget. 
    \item For transformer-based vision models such as ViTs \cite{dosovitskiy2021an}, we extend the attack on LLMs in \cite{vu2024analysis} to continuous domain and  derive theoretical lower bounds on the vulnerability of the self-attention mechanisms against a low-polynomial-time attack that exploit's the layer's memorization mechanism. (Theorem~\ref{theorem:api_ldp}).
    \item Our experimental results in real-world state-of-the-art vision models such as ViTs and ResNet \cite{he2016deep} demonstrate that AMI attacks achieve notably high success rates observed even under stringent LDP protection (i.e., small privacy budgets $\epsilon$) that considerably degrade the model's utility (Section \ref{sect:experiments}). Furthermore, we consider both traditional FL where clients train a small model (ResNet) and the parameter-efficient-fine-tuning paradigm, where clients often utilize large foundation models for pre-training and fine-tune only some layers or parameters (ViTs).
\end{itemize}

\section{Background and Related Works}\label{sect:prelim}
\textbf{Federated Learning with Local Differential Privacy (FL-LDP).} In Federated Learning (FL) ~\cite{mcmahan2017communication}, a central server orchestrates training while clients store data locally. The server initializes the model parameters $\theta$, and in each training iteration, a subset of clients computes gradients of the loss function $\mathcal{L}$ on local data $D$, i.e., $\dot{\theta} = \nabla_\theta \mathcal{L}_{\Phi}(D)$. These gradients are aggregated and sent to the server, which updates the parameters accordingly. Training continues until convergence.

To protect the privacy leakage in FL, Local Differential Privacy (LDP) \cite{dwork2006differential, erlingsson2014rappor} has been introduced. LDP is a privacy-preserving mechanism that mitigates risks by perturbing individual data before it leaves the client’s device in FL. 
LDP ensures that the server cannot infer sensitive client information directly from the shared data as defined below.  

\begin{definition}{$\varepsilon$-LDP.}  
A randomized mechanism $\mathcal{M}$ satisfies $\varepsilon$-LDP if, for any two inputs $x$ and $x'$ and all possible outputs $\mathcal{O} \in \textup{Range}(\mathcal{M})$,  
\[
Pr[\mathcal{M}(x) = \mathcal{O}] \leq e^{\varepsilon} Pr[\mathcal{M}(x') = \mathcal{O}],
\]  
where $\varepsilon$ is the privacy budget, and $\textup{Range}(\mathcal{M})$ denotes all possible outputs of $\mathcal{M}$. In FL-LDP, $\varepsilon$ controls the privacy-utility trade-off: smaller $\varepsilon$ enhances privacy by introducing more noise to the data, but this can reduce the utility of the aggregated updates for the global model. While there are other privacy-preserving techniques for FL, such as secure aggregation using multi-party computation (SMPC) or homomorphic encryption \cite{bonawitz2017practical,nguyen2023active}, these are orthogonal research directions and are discussed separately in Appendix. \ref{appx:alternative_priv}. 
\label{Local Differential Privacy}  
\end{definition}  
\textbf{Federated Foundation Models via Parameter-Efficient Fine-Tuning (PEFT).} PEFT methods adapt large pre-trained models to specific tasks by modifying a small subset of parameters, reducing computational and storage costs. Key approaches include LoRA \cite{hu2022lora}, which adds trainable low-rank matrices; Adapter Modules \cite{yin2023adapter}, which insert lightweight layers; BitFit \cite{zaken2022bitfit}, which updates bias terms; and Prompt Tuning \cite{lester2021power}, which optimizes input embeddings. In our work, we specifically analyze scenarios where trainable layers are fully connected or self-attention layers.

\textbf{Membership Inference Attacks (MIAs) in FL.}  
MIAs in FL aim to identify if a specific data point was part of a client’s training set. Although FL keeps data local, model updates exchanged between clients and the server can still leak information. Passive attacks \cite{shokri2017membership, zhang2020gan} involve an honest-but-curious server observing the model updates, while {\bf Active Membership Inference} (AMI) attacks involve a dishonest server poisoning the global models, e.g., maliciously modifying model parameters, before dispatching them to clients. The first AMI attack in FL was introduced by \cite{nasr2019comprehensive}, relying on multiple FL iterations. A stronger, single-iteration AMI attack requiring training a separate neural network was later proposed by \cite{nguyen2023active}. Both approaches have non-trivial time complexity and do not establish theoretical privacy risks in FL. Recently, \cite{vu2024analysis} introduced two AMI attacks that exploit fully connected and attention layers in LLMs, achieving a high success rate in compromising membership information of unprotected client data.

The primary focus of our study is to demonstrate the existence of low-complexity adversaries with provably high attack success rates, particularly when the data is protected by any ideal LDP mechanism. Since our research aims to assess the resilience of privacy-preserving techniques against real-world adversarial threats, we chose to examine two state-of-the-art AMI attacks on the FC and Attention layers proposed by \cite{vu2024analysis}, which have low-polynomial time complexity. These attacks allow the server to exploit FC layers to perfectly infer membership information (Theorem 1 in \cite{vu2024analysis}) and to exploit self-attention layers to achieve a similarly high success rate (Theorem 2 in \cite{vu2024analysis}). However, their theoretical analysis is only applicable to the non-LDP setting. 
\begin{figure*}[!ht]
    \centering
    \includegraphics[width=0.95\linewidth]{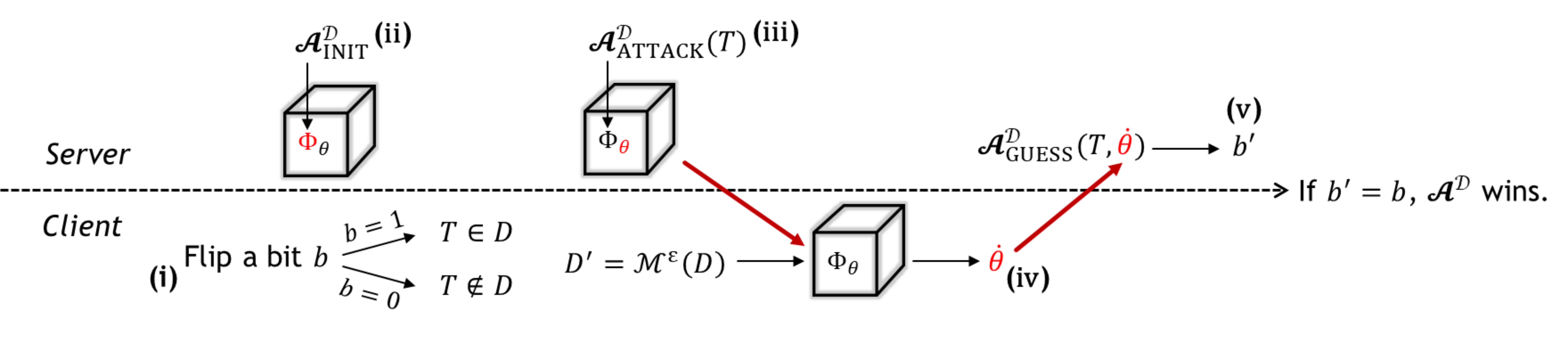}
    \caption{Active inference security game under LDP: a random bit $b$ determines the state of the data $D$ \textbf{(i)}, the server $\mathcal{A}^\mathcal{D}$ specifies $\Phi$ \textbf{(ii)} and $\theta$ \textbf{(iii)}, gradients on LDP-protected data $D'$ are sent back \textbf{(iv)}, and $\mathcal{A}^\mathcal{D}$ guesses $b$ \textbf{(v)}.} 
    \label{fig:security_game}
\end{figure*}

\section{AMI Attacks} \label{sect:ami_security_games}
\subsection{AMI threat models} \label{subsect:ami_threat_ldp}
The AMI 
threat models under LDP are formalized through the security games $\mathsf{Exp}^{\textup{AMI}}_{\textsc{LDP}}$, as described in Fig.~\ref{fig:security_game}, following standard security frameworks~\cite{nguyen2023active, vu2024analysis}. Further details about the security games can be found in Appendix~\ref{appx:threat_models}. 
In $\mathsf{Exp}^{\textup{AMI}}_{\textsc{LDP}}$, the adversarial server $\mathcal{A}^\mathcal{D}$ (superscript $\mathcal{D}$ indicates that the server knows the data distribution of the client’s private data) comprises three components: $\mathcal{A}^\mathcal{D}_{\mathsf{INIT}}$, $\mathcal{A}^\mathcal{D}_{\mathsf{ATTACK}}$, and $\mathcal{A}^\mathcal{D}_{\mathsf{GUESS}}$. In $\mathsf{Exp}^{\textup{AMI}}_{\textsc{LDP}}$, a random bit $b$ determines if a target sample $T$ is in the client’s data $D$. Each client applies LDP to perturb their data, generating $D' = \mathcal{M}^\varepsilon(D) = \{\mathcal{M}^\varepsilon(X)\}_{X \in D}$. The server's $\mathcal{A}^\mathcal{D}_{\mathsf{INIT}}$ selects a model $\Phi$, and $\mathcal{A}^\mathcal{D}_{\mathsf{ATTACK}}$ crafts parameters $\theta$ using $T$. Clients compute gradients $\dot{\theta} = \nabla_\theta \mathcal{L}_{\Phi}(D')$ and send them back. With $\dot{\theta}$, $\mathcal{A}^\mathcal{D}_{\mathsf{GUESS}}$ infers $b$, effectively identifying whether $T \in D$. The advantage of the adversarial server $\mathcal{A}^\mathcal{D}$ in the security game is given by:
 \begin{align}\label{eq:advantage_def}
      \Adv^{\textup{AMI}}_{\textsc{LDP}}(\mathcal{A}^\mathcal{D}) &= 2 \Pr[\mathsf{Exp}^{\textup{AMI}}_{\textsc{LDP}}(\mathcal{A}^\mathcal{D}) = 1] - 1 \\
      &= \Pr[b'=1|b=1] + \Pr[b'=0|b=0] - 1 \notag
\end{align}
where $\frac{1}{2}\Pr[b'=1|b=1] + \frac{1}{2}\Pr[b'=0|b=0]$ denotes the success rate of the attack. The existence of an adversary with a high advantage implies a high privacy risk/vulnerability of the protocol described in the security game.



\subsection{FC-based AMI adversary}\label{subsect:fc-ami}
We analyze the FC-based AMI adversary first introduced in \cite{vu2024analysis}. In that paper, they proved the existence of an AMI adversary that exploits two FC layers to achieve a perfect membership inference success rate for unprotected data. The FC-based adversary $\mathcal{A}^\mathcal{D}_{\mathsf{FC}}$ is designed to detect a target sample \(T\) with dimension $d_T$ within local training data $D$ during FL training. For analysis of FC-based adversary, we represent the dataset as \( D = \{X_i\}_{i=1}^n \), where \( X_i \in \mathcal{X} \), and \( \mathcal{X} \subseteq \mathbb{R}^{d_X} \). The model \(\Phi\) introduces two adversarial fully connected (FC) layers. The first layer has weights \(W_1\) of size \(2d_T \times d_T\) and biases \(b_1\) of size \(2d_T\), structured to encode the target \(T\). The second layer outputs a single neuron, with weights \(W_2[1,:]\) and bias \(b_2[1]\) set to target the presence of \(T\). These parameters are defined as: 
\begin{align}
         W_1 \leftarrow \begin{bmatrix}
        I_{d_T} \\ 
        - I_{d_T}
        \end{bmatrix}, \quad  b_1 \leftarrow \begin{bmatrix}
        - T \\ 
        T 
        \end{bmatrix}\\
        W_2[1,:] \leftarrow -1_{d_T}^\top, \quad   b_2[1] \leftarrow \tau^\mathcal{D} \label{eq:tau_d}
\end{align}
where $\tau^\mathcal{D}$ controls the allowable \(L_1\)-distance between inputs and \(T\). Upon receiving input \(X\), the two FC layers compute $z_0 := \max\{b_2[1] - \|X - T\|_{L_1}, 0\}$. If \(X = T\), \(z_0\) activates, and the gradient of \(b_2[1]\) is non-zero. For \(b_2[1] = \tau^\mathcal{D} > 0\) small enough, \(z_0 = 0\) for \(X \neq T\), leaving the gradient zero. The adversary uses the gradient of \(b_2[1]\) as an indicator of the presence of \(T\) in the local data. A non-zero gradient implies \(T\) exists, while a zero gradient indicates it does not. The FC attack $\mathcal{A}^\mathcal{D}_{\mathsf{FC}}$ does not need any distributional information to work on unprotected data. In fact, the attacker just needs to specify $\tau^\mathcal{D}$ (\(2\)) small enough such that $\tau^\mathcal{D} < \|X_1 - X_2\|_{L_1}$ for any \(X_1 \neq X_2\) in the model’s dictionary. Since the dictionary or the pre-trained feature extractor is public, selecting $\tau^\mathcal{D}$ does not require any additional information.The description of $\mathcal{A}^\mathcal{D}_{\mathsf{FC}}$ is given in Appendix ~\ref{appx:ami_adv}.

\subsection{Attention-based AMI adversary}\label{subsect:attn-ami}
The proposed $\mathcal{A}^\calD_\textsf{Attn}$ \cite{vu2024analysis} leverages the memorization capability of self-attention, a property that was indirectly explored in~\cite{ramsauer2020hopfield}. That study demonstrates that self-attention can be interpreted as equivalent to the \textit{Hopfield} layer, which is specifically designed to integrate memorization directly within the layer. Building on this perspective, $\mathcal{A}^\calD_\textsf{Attn}$ employs a tailored configuration of attention to facilitate the memorization of local training data while selectively excluding the target of inference.

The dataset is represented as \( D = \{x_i\}_{i=1}^n \), where \( x_i \in \mathcal{X} \), \( \mathcal{X} \subseteq \mathbb{R}^{d_X \times N_X} \), and each column \( x_j \in \mathbb{R}^{d_X} \) is referred to as a \textit{pattern}. Since $\mathcal{A}^\calD_\textsf{Attn}$ operates at a pattern level, the target of inference is the pattern derived from embedding the target \( T \), denoted as \( v \in \mathbb{R}^{d_X} \). The underlying intuition of this approach involves configuring an attention head to memorize the input batch while excluding the target pattern. This configuration introduces a measurable discrepancy between the output of the filtered attention head and that of a non-filtered head. The resulting gap can then be exploited to infer the victim’s data. A detailed explanation of the components of this attack are further elaborated in Appx.~\ref{appx:api_self_adv} or readers can refer to \cite{vu2024analysis} for the full description of the attack.
\subsection{Attention-based AMI attack against Vision Transformer}\label{subsect:att-ami-vit} While \cite{vu2024analysis} focuses on exploiting the attention layer in LLMs to infer the presence of a pattern in a dataset, we further extend this attack to the continuous image domain, where we exploit the attention layer used in ViTs. We formulate the attack $\mathcal{A^\mathcal{D}}_{\mathsf{Attn}}$ in the context of attacking ViT models as follow: Given an image \( I \in \mathbb{R}^{H \times W \times C} \), the image is divided into $L$ non-overlapping patches (see Fig. \ref{fig:vit_model}). The patches are flattened into vectors, and projected into an embedding space using a linear projection matrix \( W_{\text{embed}} \). The result of this embedding layer is:
\[
x_j = \text{Flatten}(I_j) W_{\text{embed}} + p_j, \quad j = 1, \dots, L
\]
where \( I_j \in \mathbb{R}^{\frac{H}{\sqrt{L}} \times \frac{W}{\sqrt{L}} \times C} \) represents the \( j \)-th patch of the image \( I \) and $p_j$ is the corresponding positional encoding. 
The resulting set of embeddings \( \{x_j\}_{j=1}^L \) is then passed through the Vision Transformer (ViT) architecture, where the attention mechanism operates on these embeddings. When incorporating LDP noise into ViTs, we apply the noise directly to these embeddings before they enter the attention layers.  We employ a similar attack strategy to the one used in the attention-based AMI adversary, but in the context of the ViT’s embeddings. Implementation details of the attack are given in Appx. \ref{app:vit}. 

\begin{figure}[h]
    \centering
    \begin{subfigure}{0.48\textwidth}
        \centering
        \begin{tikzpicture}[scale=0.4]
            \definecolor{ballcolor}{RGB}{200,200,255}
            \definecolor{highlightcolor}{RGB}{255,200,200}
            \definecolor{alphabetcolor}{RGB}{200,255,200}
            
            \draw[blue, dashed] (0,0) circle[radius=1];
            \node at (0, 1.5) {$B_{1}(T, \Delta^\mathcal{X})$};
            
            \fill[black] (0,0) circle[radius=2pt] node[below left] {$X = T$};
            \fill[black] (2,-2) circle[radius=2pt] node[below right] {$X_1$};
            \fill[black] (-2,2) circle[radius=2pt] node[above left] {$X_2$};
            
            \draw[highlightcolor, thick, dashed] (2,-2) circle[radius=1];
            \draw[highlightcolor, thick, dashed] (-2,2) circle[radius=1];
            
            \fill[red] (2.5,1.5) circle[radius=2pt] node[right] {$\mathcal{M}^\varepsilon(X)$};
            
            \draw[blue, thick, ->] (0,0) -- (1,0) node[midway, above] {$\Delta^\mathcal{X}$};
            
            \draw[red, dashed] (0,0) -- (2.5,1.5);
            
            \draw[dashed, green] (0,0) circle[radius=4];
            \node at (0, -4) {Discrete Alphabet $\mathcal{X}$};
        \end{tikzpicture}
        \caption{The protected version of the target $\mathcal{M}^\varepsilon(X)$ jumps out of $ B_{1}(T, \Delta^\mathcal{X})$.}
    \end{subfigure}
    \hspace{0.02\textwidth} 
    \begin{subfigure}{0.48\textwidth}
        \centering
        \begin{tikzpicture}[scale=0.4]
            \definecolor{ballcolor}{RGB}{200,200,255}
            \definecolor{highlightcolor}{RGB}{255,200,200}
            \definecolor{alphabetcolor}{RGB}{200,255,200}
            
            \draw[blue, dashed] (0,0) circle[radius=1];
            \node at (0, 1.5) {$B_{1}(T, \Delta^\mathcal{X})$};
            
            \fill[black] (0,0) circle[radius=2pt] node[below left] {$T$};
            \fill[red] (0,-0.5) circle[radius=2pt] node[below left] {$\mathcal{M}^\varepsilon(X)$};
            \fill[black] (2,-2) circle[radius=2pt] node[below right] {$X_1$};
            \fill[black] (-2,2) circle[radius=2pt] node[above left] {$X$};
            
            \draw[highlightcolor, thick, dashed] (2,-2) circle[radius=1];
            \draw[highlightcolor, thick, dashed] (-2,2) circle[radius=1];
            
            \draw[blue, thick, ->] (0,0) -- (1,0) node[midway, above] {$\Delta^\mathcal{X}$};
            
            \draw[red, dashed] (0,-0.5) -- (-2,2);
            
            \draw[dashed, green] (0,0) circle[radius=4];
            \node at (0, -4) {Discrete Alphabet $\mathcal{X}$};
        \end{tikzpicture}
        \caption{$X \neq T$ such that $ \mathcal{M}^\varepsilon(X) \in B_{1}(T, \Delta^\mathcal{X})$}
    \end{subfigure}
    \caption{Scenarios when $\mathcal{A}^\mathcal{D}_{\mathsf{FC}} $ fail.}\label{fig:balls}
    
\end{figure}

 \section{Privacy Leakage Analysis} \label{sect:ami_mlp}
This section presents our theoretical analysis for assessing the risk of leaking membership information of users' local training data in FL under LDP. Given the security game $\mathsf{Exp}^{\textup{AMI}}_{\textsc{LDP}}$ defined in Section \ref{subsect:ami_threat_ldp}, we generalize the lower bound and upper bound for the advantage of the adversarial server $\mathcal{A}^\calD_\textsf{FC}$ in Theorem~\ref{theorem:ami_ldp} and Theorem~\ref{theorem:ami_upperbound}, respectively. Finally, we provide the lower bound for the advantage of $\mathcal{A}^\calD_\textsf{Attn}$ under LDP in Theorem~\ref{theorem:api_ldp}. 

\subsection{FC-based AMI on LDP-Protected Data} \label{subsect:ami_mlp_ldp}
 We now theoretically show that the adversary $\mathcal{A}^\mathcal{D}_{\mathsf{FC}} $ constructed in Subsect.~\ref{subsect:fc-ami} can also be used to expose the true privacy risks of LDP-protected data w.r.t AMI in FL and state it in Theorem~\ref{theorem:ami_ldp}.  First, we need to lay out some assumptions on the data and the LDP mechanisms.

The data $X$ is assumed to be from a discrete alphabet $\mathcal{X}$, in which the $L_1$ metric is well-defined. In our analysis, we denote $\mathcal{X}$ as the set of possible output values of the LDP algorithm (see remark~\ref{remark:cardinalityx}). We denote $\Delta^\mathcal{X} \coloneqq \min_{X, Y \in \mathcal{X}} \| X - Y \|_{L_1}/2$. Note that $\Delta^\mathcal{X}$ is a statistic of $\mathcal{D}$ and is known by the server. We denote $B_{1}(X, \Delta^\mathcal{X})$ to be a ball of radius $\Delta^\mathcal{X}$ centering around $X$ in the $L_1$ norm. Given an LDP mechanism $\mathcal{M}$ with budget $\varepsilon$ applied on an alphabet $ \mathcal{X}$, $P_{\mathcal{M}^\varepsilon}$ denotes the probability that the protected version of a point is not in the ball of radius $\Delta^\mathcal{X}$ centering at that point: $P_{\mathcal{M}^\varepsilon}:= \Pr\left[ \mathcal{M}^\varepsilon(X) \notin  B_{1}(X, \Delta^\mathcal{X}) \right]$.
Intuitively, a smaller $\varepsilon$ would impose more LDP noise, resulting in a larger $ P_{\mathcal{M}^\varepsilon}$. 

\begin{theorem} \label{theorem:ami_ldp}
Given the security game $\mathsf{Exp}^{\textup{AMI}}_{\textsc{LDP}}$, there exists an AMI adversary $\mathcal{A}^\mathcal{D}_{\mathsf{FC}}$ whose time complexity is $\mathcal{O} (d_X^2)$ such that $   \Adv^{\mathsf{AMI}}_{\textsc{LDP}}(\mathcal{A}^\mathcal{D}_{\mathsf{FC}})  \geq  1 - \frac{n + |\mathcal{X}|-1}{|\mathcal{X}|-1} P_{\mathcal{M}^\varepsilon}$, where $n$ is the size of the dataset $D$, $|\mathcal{X}|$ is the cardinality of the possible output values of the LDP-mechanism and $P_{\mathcal{M}^\varepsilon}$ is the probability that the LDP-mechanism makes the protected version of data point inside the neighborhood of another data point.  (Proof in Appx.~\ref{appx:ami_mlp_ldp}) 
\end{theorem}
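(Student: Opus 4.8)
The plan is to bound the advantage through its two conditional components, using the identity $\Adv^{\mathsf{AMI}}_{\textsc{LDP}}(\mathcal{A}^\mathcal{D}_{\mathsf{FC}}) = \Pr[b'=1\mid b=1] - \Pr[b'=1\mid b=0]$, which follows from the advantage definition~\eqref{eq:advantage_def} together with $\Pr[b'=0\mid b=0] = 1 - \Pr[b'=1\mid b=0]$. First I would fix the attack threshold to $\tau^\mathcal{D} = \Delta^\mathcal{X}$. Because $\Delta^\mathcal{X}$ is half the minimum pairwise $L_1$-distance in $\mathcal{X}$, the balls $\{B_1(Y,\Delta^\mathcal{X})\}_{Y\in\mathcal{X}}$ are pairwise disjoint and each contains exactly one alphabet point; since every LDP output lies in $\mathcal{X}$, the event that a protected sample $\mathcal{M}^\varepsilon(X)$ activates the indicator $z_0>0$ (i.e. $\|\mathcal{M}^\varepsilon(X)-T\|_{L_1}<\tau^\mathcal{D}$) coincides exactly with $\mathcal{M}^\varepsilon(X)=T$. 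Thus $b'=1$ iff some protected sample equals $T$, and the whole analysis reduces to the per-source probability of being mapped onto $T$.

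For the true-positive term ($b=1$), the point $T$ itself lies in $D$, so $\mathcal{M}^\varepsilon(T)$ is among the protected samples; the indicator certainly fires whenever $\mathcal{M}^\varepsilon(T)=T$, an event of probability $1 - P_{\mathcal{M}^\varepsilon}$ by the definition of $P_{\mathcal{M}^\varepsilon}$ (using $B_1(T,\Delta^\mathcal{X})\cap\mathcal{X}=\{T\}$). Hence $\Pr[b'=1\mid b=1]\ge 1 - P_{\mathcal{M}^\varepsilon}$, and no distributional information about the other points is needed here.

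For the false-positive term ($b=0$), every $X_i\in D$ satisfies $X_i\ne T$, and a false alarm occurs iff $\mathcal{M}^\varepsilon(X_i)=T$ for some $i$. A union bound gives $\Pr[b'=1\mid b=0]\le \sum_{i=1}^n \Pr[\mathcal{M}^\varepsilon(X_i)=T]$, so it suffices to prove the per-point bound $\Pr[\mathcal{M}^\varepsilon(X_i)=T]\le P_{\mathcal{M}^\varepsilon}/(|\mathcal{X}|-1)$. The right-hand side is precisely the average, over the $|\mathcal{X}|-1$ incorrect symbols, of the total misreport mass $\sum_{Y\ne X_i}\Pr[\mathcal{M}^\varepsilon(X_i)=Y]=P_{\mathcal{M}^\varepsilon}$; for a symmetric (ideal) discrete LDP mechanism this mass is spread uniformly over the wrong symbols, so each term equals this average and the bound holds with equality. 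Substituting into the union bound yields $\Pr[b'=1\mid b=0]\le \tfrac{n}{|\mathcal{X}|-1}P_{\mathcal{M}^\varepsilon}$, and combining the two pieces gives $\Adv \ge (1-P_{\mathcal{M}^\varepsilon}) - \tfrac{n}{|\mathcal{X}|-1}P_{\mathcal{M}^\varepsilon} = 1 - \tfrac{n+|\mathcal{X}|-1}{|\mathcal{X}|-1}P_{\mathcal{M}^\varepsilon}$. The $\mathcal{O}(d_X^2)$ runtime follows from a single forward/backward pass through the two crafted FC layers, whose weight matrices are of size $\mathcal{O}(d_T\times d_X)$.

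The main obstacle is exactly this per-point false-positive bound, i.e. controlling $\Pr[\mathcal{M}^\varepsilon(X_i)=T]$ by the \emph{average} misreport probability $P_{\mathcal{M}^\varepsilon}/(|\mathcal{X}|-1)$ rather than by the crude $P_{\mathcal{M}^\varepsilon}$. The $\varepsilon$-LDP inequality only compares two different \emph{inputs} mapping to the same output and says nothing about how a fixed input distributes its misreport mass across different outputs; a priori all of that mass could concentrate on $T$, which would only give the much weaker bound $nP_{\mathcal{M}^\varepsilon}$. The uniform-misreport structure of the ideal symmetric mechanism (the regime the theorem targets) is what closes this gap, so the crux is to argue that the $1/(|\mathcal{X}|-1)$ factor genuinely stems from this symmetry and cannot be inherited from the raw LDP guarantee alone.
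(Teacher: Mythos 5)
Your proposal follows the paper's proof almost step for step: same adversary with $\tau^\mathcal{D}=\Delta^\mathcal{X}$, same reduction of the guess to whether some protected sample lands in $B_1(T,\Delta^\mathcal{X})$, the same bound $1-P_{\mathcal{M}^\varepsilon}$ on the true-positive term, and the same union bound over the $n$ points for the false-positive term. The one genuine divergence is exactly the step you flag as the crux, the per-point bound $\Pr[\mathcal{M}^\varepsilon(X)\in B_1(T,\Delta^\mathcal{X})]\le P_{\mathcal{M}^\varepsilon}/(|\mathcal{X}|-1)$. You obtain the $1/(|\mathcal{X}|-1)$ factor by assuming the mechanism is symmetric, i.e.\ that a misreporting input spreads its mass uniformly over the $|\mathcal{X}|-1$ wrong symbols. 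The paper instead writes $\Pr[\mathcal{M}^\varepsilon(X)\in B_1(T,\cdot)]=\Pr[\mathcal{M}^\varepsilon(X)\in B_1(T,\cdot)\mid \mathcal{M}^\varepsilon(X)\notin B_1(X,\cdot)]\cdot P_{\mathcal{M}^\varepsilon}$ and bounds the conditional factor by $1/(|\mathcal{X}|-1)$ using the randomness of $T$: conditioned on a misreport, the output sits in one of the $|\mathcal{X}|-1$ other disjoint balls, and since $T$ is drawn independently of the mechanism's coins, it coincides with that particular ball with probability at most $1/(|\mathcal{X}|-1)$ on average over $T$. The trade-off is real: your version holds for every fixed target but only for symmetric mechanisms (it is exact for GRR, which is the paper's worked example in Theorem~\ref{theorem:ami_grr}), whereas the paper's version needs no structural assumption on how the misreport mass is distributed but implicitly requires the target to be (effectively uniformly) random over the alphabet and independent of the mechanism --- if $T\sim\mathcal{D}$ is far from uniform and the mechanism biases its misreports toward high-probability symbols, the paper's averaging step is itself not airtight. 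So your diagnosis that the $1/(|\mathcal{X}|-1)$ factor cannot come from the raw $\varepsilon$-LDP inequality alone is correct; the paper simply closes that gap with randomness of $T$ rather than with symmetry of $\mathcal{M}^\varepsilon$, and you should state whichever of the two auxiliary assumptions you rely on explicitly, since the theorem statement itself mentions neither.
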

We use $\mathcal{A}^\mathcal{D}_{\mathsf{FC}} $ constructed in Subsect.~\ref{subsect:fc-ami} with $\tau^\mathcal{D} = \Delta^\mathcal{X}$ to show Theorem~\ref{theorem:ami_ldp}. Given input $X$, target $T$ and LDP mechanism $\mathcal{M}^\varepsilon$, the goal of $\mathcal{A}^\mathcal{D}_{\mathsf{FC}}$ is to configure the first two FC layers so that the first row of the second layer computes $z_0 := \max\{\tau^\mathcal{D} - \|\mathcal{M}^\varepsilon(X) - T\|_{L_1}, 0\}$. If $\|\mathcal{M}^\varepsilon(X) - T\|_{L_1} < \tau^\mathcal{D} $, then \(z_0\) activates and the gradient of \(b_2[1]\) is non-zero. For $\tau^\mathcal{D} = \Delta^\mathcal{X}$\(\), if $ \mathcal{M}^\varepsilon(X) \notin B_{1}(T, \Delta^\mathcal{X})$, then $z_0 = 0$, leaving the gradient zero. Conversely, if $ \mathcal{M}^\varepsilon(X) \in B_{1}(T, \Delta^\mathcal{X})$, then $z_0 > 0$ and the gradient $\dot{\theta}(b_2[1]) > 0$. The adversary uses the gradient of \(b_2[1]\) as an indicator of the presence of \(T\) in the local data. A non-zero gradient implies \(T\) exists in the data, while a zero gradient indicates it does not. Intuitively, $\mathcal{A}^\mathcal{D}_{\mathsf{FC}} $ fails if either (i) the protected version of the data $\mathcal{M}^\varepsilon(X)$ jumps out of $ B_{1}(T, \Delta^\mathcal{X})$ when $X = T$ or (ii) there is an $X \neq T$ such that $ \mathcal{M}^\varepsilon(X) \in B_{1}(T, \Delta^\mathcal{X})$. These scenarios are illustrated in Fig. \ref{fig:balls}. The probabilities of the two events are bounded by $ P_{\mathcal{M}^\varepsilon}$ and $ n P_{\mathcal{M}^\varepsilon}/(|\mathcal{X}|-1)$, respectively (see Appx.~\ref{appx:ami_mlp_ldp}). 
\begin{figure*}[h]
     \centering
         \includegraphics[height=0.13\linewidth]{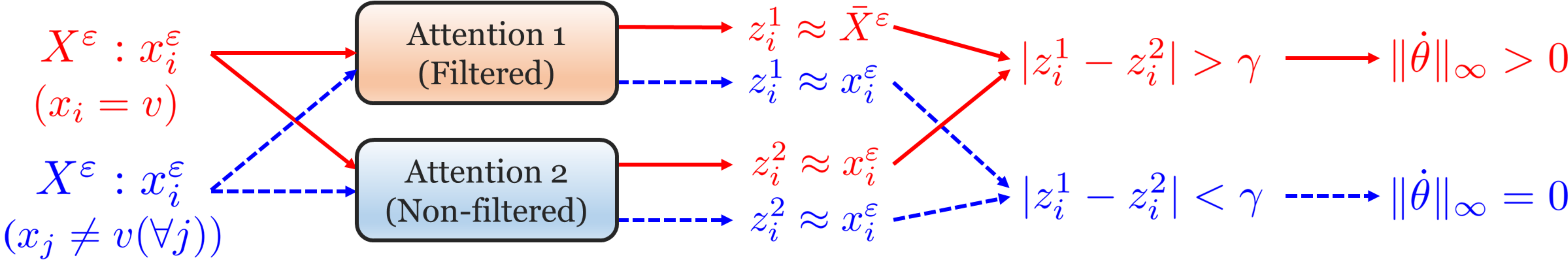}
     \caption{The adversarial server exploits self-attention mechanism to conduct inference attack of victim's protected local training data $D'$ in FL: If $ x_i$ in the data equals to the target pattern $v$, the input to the filtered attention heads is the pertubed version $x_i^\varepsilon$ and the output $z^1_i$ of the filtered head is close to the protected pattern's average $\Bar{X}^\varepsilon$  instead of $x_i^\varepsilon$. This creates non-zero gradients on weights computing on the difference of attention heads' outputs. The attack fails when the added noise is large and the embedding of protected data overlaps at the center of the embedding.}
     \label{fig:attack_principle}
 \end{figure*}
Theorem~\ref{theorem:ami_ldp} demonstrates the trade-off between privacy and data utility: a highly protected data would have high  $P_{\mathcal{M}^\varepsilon}$, thus lowering the advantage of the adversary; however, its distortion from the original data is large as a result. 
\begin{remark}\textbf{Lower bound of Theorem~\ref{theorem:ami_ldp}.}
    It is non-trivial to obtain $P_{\mathcal{M}^\varepsilon}$ for Theorem~\ref{theorem:ami_ldp} due to the dependencies on the data as well as the specific LDP mechanisms. To demonstrate the intuition behind the proof, we provide an example of how to derive $P_{\mathcal{M}^\varepsilon}$ for Generalized Random Response (GRR \cite{warner1965randomized}), a classical LDP algorithm, and the corresponding theoretical lower bound for GRR-protected data in Theorem~\ref{theorem:ami_grr} (Appx.~\ref{app:lower-bound-grr}).  We also simulate the lower theoretical bound in Theorem~\ref{theorem:ami_ldp} for data protected by LDP algorithms BitRand \cite{jiang2022flsys}, GRR, dBitFlipPM \cite{ding2017collecting} and RAPPOR \cite{erlingsson2014rappor} in Figs.~\ref{fig:ldp_fc_cifar10} and ~\ref{fig:ldp_fc_cifar100}. Those theoretical lower bounds are shown along with the success rates of some AMI attacks (discussed in Section \ref{sect:ami_security_games}) for comparison. 
\end{remark}

\begin{remark} \label{remark:cardinalityx} \textbf{Cardinality of $\mathcal{X}$.} $|\mathcal{X}|$ is dependent on the specific LDP algorithm used. For binary LDP algorithms such as Binary Randomized Response \cite{warner1965randomized} or RAPPOR \cite{erlingsson2014rappor}, $|\mathcal{X}| = 2$ due to the binary nature of the values being perturbed (e.g., a "yes/no" or "0/1" response). In contrast, for generalized $k$-ary randomized response algorithms, such as Generalized Randomized Response (GRR) or k-RAPPOR, $|\mathcal{X}| = k$, where $k > 2$ represents the cardinality of the set of possible output values of the algorithm. This allows for more nuanced privacy-preserving mechanisms, where the response set consists of $k$ different values, each with a specific probability distribution determined by the privacy parameters of the algorithm.  In modern bit-flipping algorithms such as OME or BitRand, the original data or embedding features are first converted into binary vectors of size $b$. The LDP mechanisms are then applied on top of those binary representations of the signal by flipping random bits. In this case, $|\mathcal{X}| = 2^b$. For large enough $b$, $\frac{n + |\mathcal{X}|-1}{|\mathcal{X}|-1} \approx 1$ and $\Adv^{\mathsf{AMI}}_{\textup{LDP}}(\mathcal{A}^\mathcal{D}_{\mathsf{FC}})  \approx  1 - P_{\mathcal{M}^\varepsilon}$.
\end{remark}
\begin{theorem} \label{theorem:ami_upperbound}
For all AMI adversary $\mathcal{A}$ of the security game $\mathsf{Exp}^{\mathsf{AMI}}_{\textup{LDP}}$, we have
\begin{align}
   \Adv^{\mathsf{AMI}}_{\textup{LDP}}(\mathcal{A}^\mathcal{D}_{\mathsf{FC}})  \leq \frac{e^\epsilon - 1}{e^\epsilon + 1} \label{eq:upperDP}
\end{align}
(Proof in Appx.~\ref{app:upper-bound-ami})
\end{theorem}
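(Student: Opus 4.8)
The plan is to bound the advantage of an arbitrary adversary by reducing the security game to a binary hypothesis test between two distributions that are ``$\varepsilon$-close'' in the differential-privacy sense, and then invoking the tight relation between such closeness and statistical distinguishability. First I would observe that, once $\mathcal{A}^\mathcal{D}_{\mathsf{INIT}}$ and $\mathcal{A}^\mathcal{D}_{\mathsf{ATTACK}}$ have fixed the model $\Phi$ and the parameters $\theta$, the only information the guessing stage $\mathcal{A}^\mathcal{D}_{\mathsf{GUESS}}$ receives is the gradient $\dot{\theta} = \nabla_\theta \mathcal{L}_\Phi(D')$, which is a deterministic function of the protected dataset $D' = \mathcal{M}^\varepsilon(D)$. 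Hence $b'$ is a (possibly randomized) function of $D'$ alone, and by the data-processing inequality the advantage cannot exceed that of the optimal test observing $D'$ directly. Writing $P_1$ and $P_0$ for the laws of $D'$ conditioned on $b=1$ and $b=0$, the quantity $\Pr[b'=1\mid b=1]+\Pr[b'=0\mid b=0]-1 = \Pr_{P_1}[A]-\Pr_{P_0}[A]$ for the acceptance event $A=\{b'=1\}$, so the best rule achieves exactly the total-variation distance and $\Adv^{\mathsf{AMI}}_{\textsc{LDP}}(\mathcal{A}) \le \|P_1 - P_0\|_{\mathrm{TV}}$.

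The next step is to show that $P_1$ and $P_0$ satisfy a pointwise $\varepsilon$-indistinguishability bound. In the game the bit $b$ only controls whether the target slot of $D$ holds $T$ (case $b=1$) or an independent non-target sample (case $b=0$), while the remaining records are coupled to be identical; thus $D$ under the two hypotheses differs in exactly one record. Because each client perturbs its records independently through $\mathcal{M}^\varepsilon$, the laws $P_1$ and $P_0$ are product measures whose factors agree except at the differing slot. All common factors cancel in the likelihood ratio, leaving only the ratio of $\Pr[\mathcal{M}^\varepsilon(T)=\cdot]$ to $\Pr[\mathcal{M}^\varepsilon(X)=\cdot]$ at that slot, which the $\varepsilon$-LDP guarantee of Definition~\ref{Local Differential Privacy} bounds by $e^\varepsilon$ (and symmetrically by $e^{-\varepsilon}$). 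Consequently $e^{-\varepsilon} \le P_1(\mathbf{o})/P_0(\mathbf{o}) \le e^\varepsilon$ for every realizable output $\mathbf{o}$.

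Finally I would prove the elementary extremal fact that any two distributions with pointwise ratio confined to $[e^{-\varepsilon},e^\varepsilon]$ obey $\|P_1-P_0\|_{\mathrm{TV}} \le \frac{e^\varepsilon-1}{e^\varepsilon+1}$, which together with the previous two steps yields the theorem. Let $A=\{\mathbf{o}: P_1(\mathbf{o}) \ge P_0(\mathbf{o})\}$ be the optimal region and set $\beta = P_1(A)$, $\alpha = P_0(A)$, so $\|P_1-P_0\|_{\mathrm{TV}} = \beta - \alpha$. The ratio bound on $A$ gives $\alpha \ge e^{-\varepsilon}\beta$, while applied on the complement it gives $1-\alpha \le e^{\varepsilon}(1-\beta)$; maximizing $\beta-\alpha$ subject to these two linear constraints, the optimum occurs where both are tight, at $\beta = e^\varepsilon/(e^\varepsilon+1)$ and $\alpha = 1/(e^\varepsilon+1)$, giving $\beta-\alpha = (e^\varepsilon-1)/(e^\varepsilon+1)$. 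The main obstacle is not this optimization but the second step: one must argue carefully that the two hypotheses can be coupled to differ in a single LDP-protected record, so that the per-record LDP guarantee lifts to a pointwise bound on the joint law of $D'$; once that adjacency is established, the data-processing reduction and the extremal computation are routine.
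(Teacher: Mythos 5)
Your proposal is correct and takes essentially the same route as the paper's proof in Appendix~\ref{app:upper-bound-ami}: both reduce the game to distinguishing two datasets $D_1$ and $D_2$ differing in a single record, lift the per-record $\varepsilon$-LDP guarantee to the adversary's view (you via the pointwise likelihood ratio of the product law of $D'$ followed by data processing, the paper via Lemma~\ref{Lemma:ldp_to_dp} on post-processing), and then perform the same two-constraint optimization --- your $(\alpha,\beta)$ linear program is exactly the paper's pair of inequalities giving $P_W \leq e^{\epsilon} P_L$ and hence $P_W \leq e^{\epsilon}/(1+e^{\epsilon})$. The single-record coupling you flag as the main obstacle is treated at the same level of informality in the paper (it simply sets $D_1 = D$ and $D_2 = D \setminus \{t_1\} \cup \{t_0\}$), so there is no gap relative to the published argument.
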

This theorem shows the theoretical upper bound for   $\Adv^{\mathsf{AMI}}_{\textup{LDP}}(\mathcal{A}^\mathcal{D}_{\mathsf{FC}})$, given the privacy budget $\epsilon$. Accordingly, we measure the adversary's attack success rate as $\frac{1}{2} (1+ \Adv^{\mathsf{AMI}}_{\textup{LDP}}(\mathcal{A}^\mathcal{D}_{\mathsf{FC}}))$ (see Equation \ref{eq:advantage_def}) and plot the theoretical upper bound alongside the lower bound of AMI attack success rate and empirical success rate against real-world datasets under LDP protection in Figs.~\ref{fig:ldp_fc_cifar10} and ~\ref{fig:ldp_fc_cifar100}. 


\subsection{Attention-based AMI on LDP-Protected Data} \label{subsect:ami_attn_ldp}
Our theoretical result on the vulnerability of private data to  Attention-based AMI attack under LDP is quantified on \textit{Separation of Patterns}, an intrinsic measure of data:
\begin{definition} (Separation of Patterns~\cite{ramsauer2020hopfield}).
For a pattern $x_i$ in a data point $X = \{x_j\}_{j=1}^{N_X}$, its separation $\Delta_i$ from $X$ is $\Delta_i \coloneqq \min _{j, j \neq i}\left(x_i^\top x_i- x_i^\top {x}_j\right)= x_i^\top x_i-\max_{j} x_i^\top {x}_j$. We say the pattern $i$ is separated from the data point $X$ if $\Delta_i >0$. We say $X$ is $\Delta$-separated if $\Delta_i \geq \Delta$ for all $i \in \{1,\cdots,N_X\}$. A data $D$ is $\Delta$-separated if all $X$ in $D$ are $\Delta$-separated. 
 \end{definition}
  For the attention-based attack, we represent the victim's dataset as \( D = \{X_i\}_{i=1}^n \), where \( X_i \in \mathcal{X} \), and \( \mathcal{X} \subseteq \mathbb{R}^{d_X \times N_X} \). For any 2-dimensional array \( X \), each column \( x_j \in \mathbb{R}^{d_X} \) is referred to as a \textit{pattern}. Since the LDP mechanisms are generally applied at a pattern level in 2-dimensional data~\cite{qu2021natural, yue2021differential}, the distortion imposed by LDP is modeled by a noise $r_i$ added to each pattern: $X^\varepsilon= \mathcal{M}^\varepsilon(X) = \{ x_i + r_i\}_{i=1}^{N_X}$. We assume $r_i$ is bounded by a norm budget $R^\varepsilon$ that is defined by specific mechanisms and applications. \( M \) denotes the maximum \( L_2 \)-norm of all patterns, defined as \( M = \max_{X \in D} \max_{x_j \in X} \|x_j\| \). The adjusted pattern's norm is then upper-bounded by $ M^\varepsilon = \sqrt{M^2 + {R^\varepsilon}^2}$. Regarding data separation under LDP, denoted by $\Delta^\varepsilon$, whose value is not easily obtainable even when the LDP mechanism is known, we can generally expect $\Delta^\varepsilon \geq \Delta$. The reason is, as the noise $r_i$ is independent of the patterns, it makes the patterns less aligned. This intuition is demonstrated via an example in Appx.~\ref{appx:ldp_on_data}. 

 The notion of $\Delta^\varepsilon$-separated helps capture the intrinsic difficulty of the data for the inference task: the less separating the data, i.e., a smaller $\Delta^\varepsilon$, the harder for the adversary to detect the patterns. However, it is not beneficial to impose a low separation on the data in practice since it would impair the model's performance. Note that, if $D$ is considered as the data after preprocessing, $\Delta^\varepsilon$ can be manipulated by the choice of preprocessing methods for the FL model, which are often specified by the server. We are now ready to state Theorem~\ref{theorem:api_ldp} that analyzes the vulnerability of LDP-protected data to Attention-based AMI in FL. \newpage
 \begin{figure*}[!ht]
    \centering
    \begin{minipage}[t]{0.25\textwidth}
        \centering
        \includegraphics[width=\textwidth]{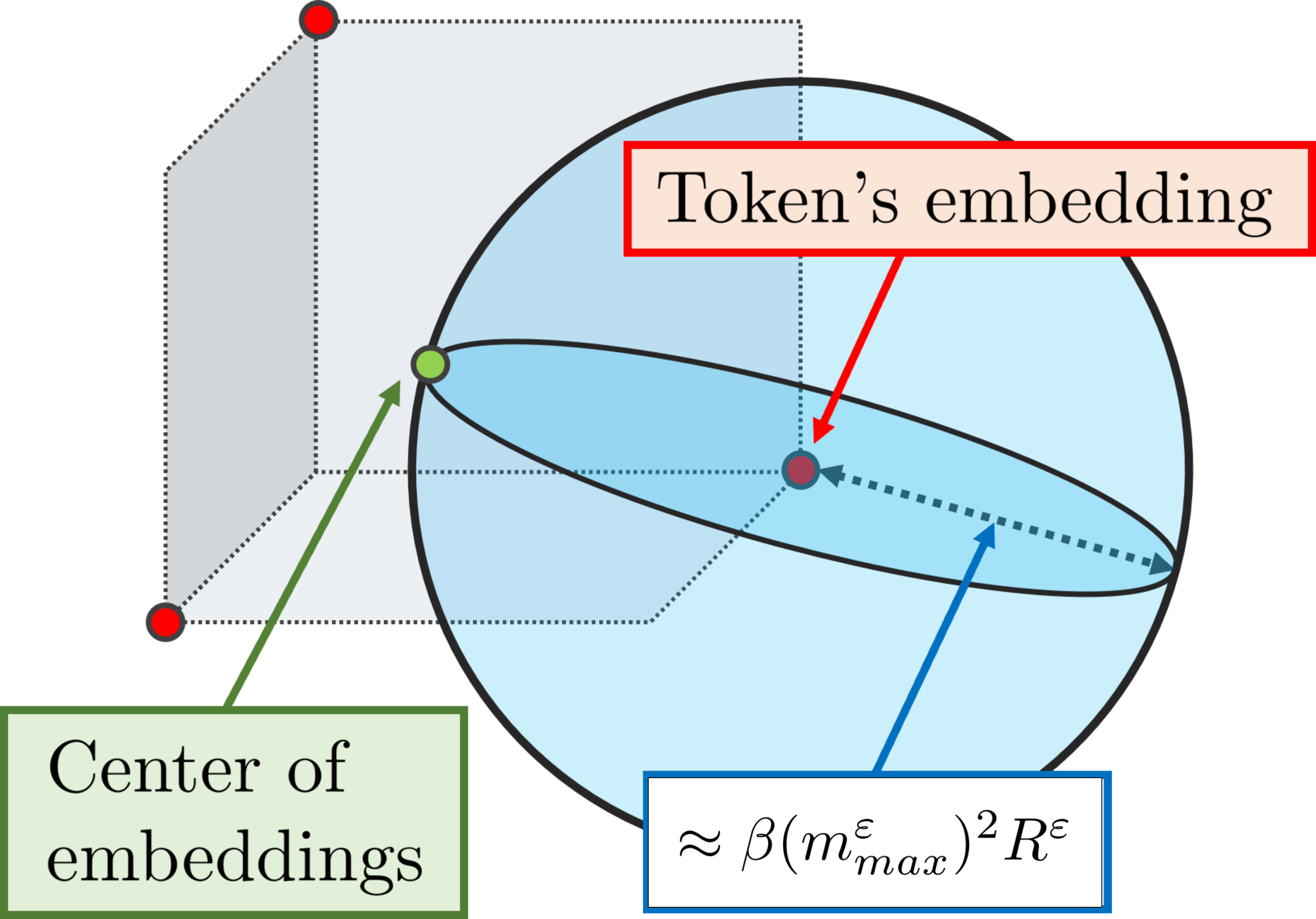}
        \caption{For large $R^\varepsilon$ s.t. $P_{\textup{box}} \approx 1$, the embedding of protected data overlaps at the center of the embedding and impairs data utility.}
        \label{fig:exp:pbox}
    \end{minipage}
    \hfill
    \begin{minipage}[t]{0.35\textwidth}
        \centering
        \includegraphics[width=\textwidth]{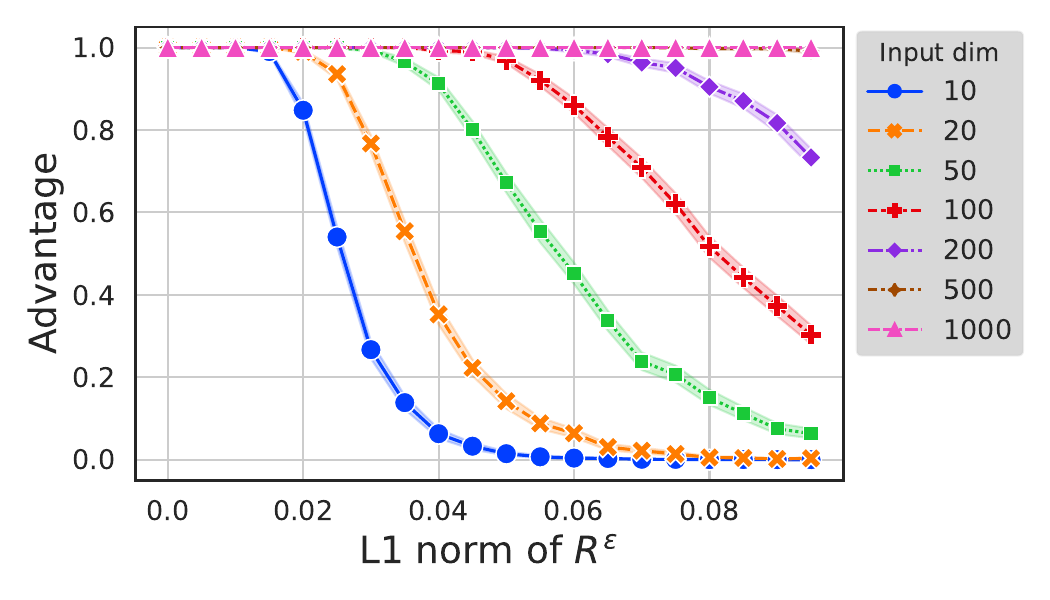}
        \caption{ $\Adv^{\textup{AMI}}_{\textsc{LDP}}(\mathcal{A}^\mathcal{D}_{\mathsf{Attn}})$ on one-hot data using Monte-Carlo simulation}
        \label{fig:onehot}
    \end{minipage}
    \hfill
    \begin{minipage}[t]{0.35\textwidth}
        \centering
        \includegraphics[width=\textwidth]{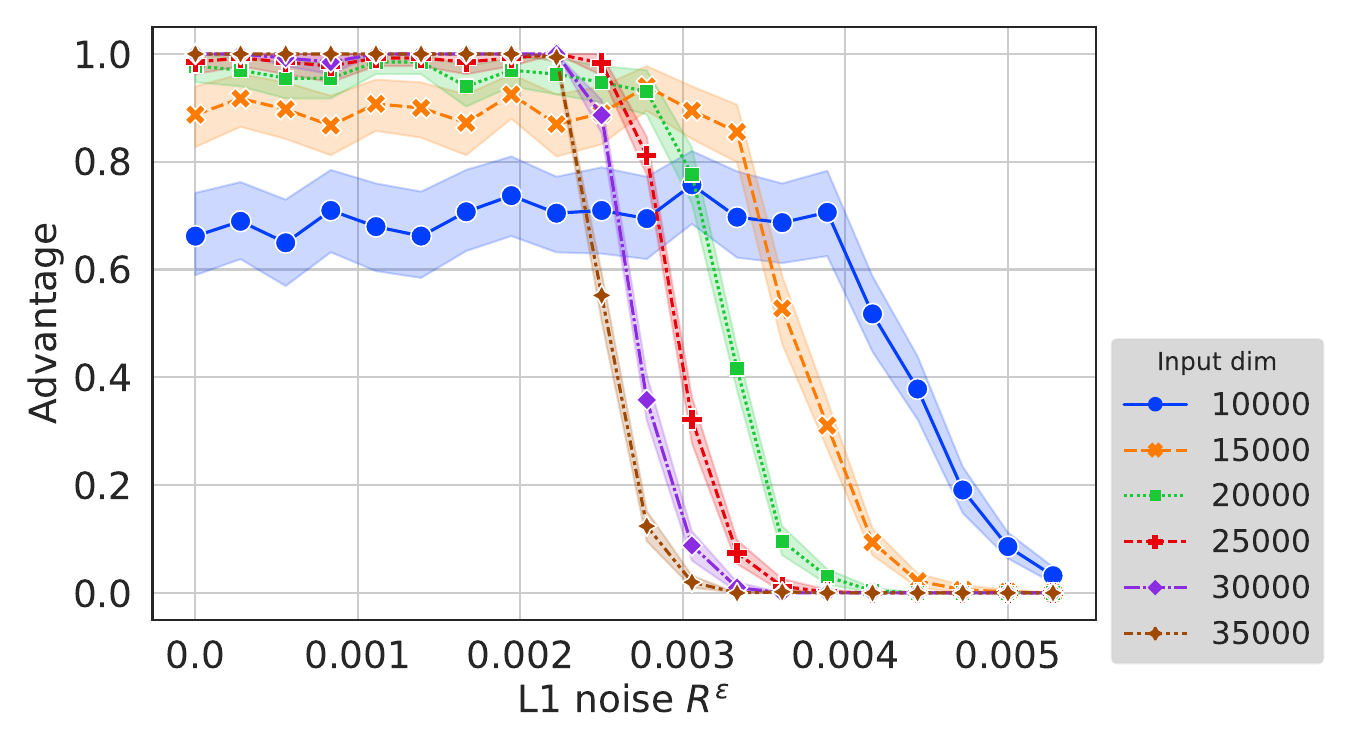}
        \caption{ $\Adv^{\textup{AMI}}_{\textsc{LDP}}(\mathcal{A}^\mathcal{D}_{\mathsf{Attn}})$ on spherical data using Monte-Carlo simulation}
        \label{fig:spherical}
    \end{minipage}
\end{figure*}
 \begin{theorem}  \label{theorem:api_ldp}
Given a $\Delta^\varepsilon$-separated data $D^{\mathcal{M}_\varepsilon}$ (the LDP-protected version of the data $D$) with i.i.d patterns of the security game $\mathsf{Exp}^{\textup{AMI}}_{\textsc{LDP}}$, for any $\beta > 0$ large enough such that
            \begin{align}
            \Delta^\varepsilon \geq \frac{2}{\beta N_X} + \frac{1}{\beta} \log (2 (N_X -1) N_X \beta {M^\varepsilon}^2), 
            \label{eq:delta_conditon_main}
            \end{align}
            there exists an AMI adversary, $\mathcal{A}^\mathcal{D}_{\mathsf{Attn}}$, that exploits the self-attention layer  with a time complexity of $\mathcal{O} (d_X^3)$  such that $\Adv^{\textup{AMI}}_{\textsc{LDP}}(\mathcal{A}^\mathcal{D}_{\mathsf{Attn}})$ is lower bounded by:
            \begin{align} 
                \Adv^{\textup{AMI}}_{\textsc{LDP}}(\mathcal{A}^\mathcal{D}_{\mathsf{Attn}}) \geq P_{\textup{proj}}^{\mathcal{D}^{\mathcal{M}_\varepsilon}}\left( \frac{1}{\beta N_X M^\varepsilon}\right) \notag\\ 
                + P_{\textup{proj}}^{\mathcal{D}^{\mathcal{M}_\varepsilon}}\left( \frac{1}{\beta N_X M^\varepsilon}\right)^{2 n N_X} \notag \\
                - P_{\textup{box}}^{\mathcal{D}^{\mathcal{M}_\varepsilon}} \left(3\Bar{\Delta^\varepsilon} + \beta ({m_{max}^\varepsilon})^2 R^\varepsilon \right) - 1 
                \label{eq:adv_ami_ldp_main}
            \end{align}
where $\Bar{\Delta}^\varepsilon \coloneqq  2 M^\varepsilon (N_X -1) \exp \left( 2/N_X-\beta  \Delta^\varepsilon \right)$ and ${\mathcal{D}^{\mathcal{M}_\varepsilon}}$ is the distribution of the protected data ${D^{\mathcal{M}_\varepsilon}}$ induced by the original data distribution $\mathcal{D}$ and the LDP-mechanism ${\mathcal{M}_\varepsilon}$. $m^\varepsilon_x = \frac{1}{N_X} \sum_{i=1}^{N_X} x^\varepsilon_i$ is the arithmetic mean of all LDP-protected patterns and $m^{\varepsilon}_{\text{max}} = \max_{1 \leq i \leq N_X} \|x_i - m^{\varepsilon}_x\|$. Here, $P_{\textup{proj}}^\mathcal{D^{\mathcal{M}_\varepsilon}}(\delta)$  is the probability that the projected component between two independent patterns drawn from $\mathcal{D^{\mathcal{M}_\varepsilon}}$ is smaller than  $\delta$ and $P_{\textup{box}}^\mathcal{D^{\mathcal{M}_\varepsilon}} (\delta)$ is the probability that a random pattern drawn from $\mathcal{D^{\mathcal{M}_\varepsilon}}$ is in the cube of size $2\delta$ centering at the arithmetic mean of the patterns in $\mathcal{D^{\mathcal{M}_\varepsilon}}$.  (Proof in Appx.~\ref{appx:api_attn_advantage_ldp})
\end{theorem}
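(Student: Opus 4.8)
\textbf{Proof proposal for Theorem~\ref{theorem:api_ldp}.}

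\medskip

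The plan is to show that the attention-based adversary $\mathcal{A}^\mathcal{D}_{\mathsf{Attn}}$ succeeds with high probability by configuring a self-attention head that, via the Hopfield-layer correspondence of \cite{ramsauer2020hopfield}, nearly retrieves each stored pattern when the target $v$ is absent, but collapses toward the pattern mean when $v$ is present and filtered out. The adversary's advantage decomposes, as in \eqref{eq:advantage_def}, into the two conditional success probabilities $\Pr[b'=1\mid b=1]$ and $\Pr[b'=0\mid b=0]$. First I would analyze the $b=1$ case (target present): the filtered head excludes the pattern matching $v$, so its output on the perturbed input $x_i^\varepsilon$ is pulled toward the protected mean $\bar{X}^\varepsilon$ rather than reconstructing $x_i^\varepsilon$; this discrepancy produces a detectable non-zero gradient on the weights that compute the difference of the two heads' outputs. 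The key quantitative input here is the Hopfield retrieval-error bound, which under the $\Delta^\varepsilon$-separation hypothesis \eqref{eq:delta_conditon_main} guarantees that each non-filtered head output lies within $\bar{\Delta}^\varepsilon = 2M^\varepsilon(N_X-1)\exp(2/N_X - \beta\Delta^\varepsilon)$ of its stored pattern. This is where the precise form of the condition on $\beta$ and $\Delta^\varepsilon$ enters: it is exactly the threshold ensuring the softmax concentrates sharply enough for retrieval, and I would invoke Theorem/Lemma from the Hopfield analysis to bound the retrieval error by $\bar{\Delta}^\varepsilon$.

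\medskip

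Next I would handle the two failure modes that generate the three terms in the lower bound \eqref{eq:adv_ami_ldp_main}. The attack fails to detect a present target when the perturbed target pattern has collapsed so close to the mean that the gradient signal is swamped; this is controlled by the box probability $P_{\textup{box}}^{\mathcal{D}^{\mathcal{M}_\varepsilon}}$ evaluated at the radius $3\bar{\Delta}^\varepsilon + \beta(m_{\max}^\varepsilon)^2 R^\varepsilon$, where the $3\bar{\Delta}^\varepsilon$ accounts for accumulated retrieval error and the $\beta(m_{\max}^\varepsilon)^2 R^\varepsilon$ term bounds the additional drift induced by the LDP noise of norm at most $R^\varepsilon$. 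Subtracting this probability removes the mass of inputs that land in the ambiguous central cube. The false-positive side (target absent, $b=0$) requires that no genuine pattern be mistaken for $v$; this demands that all pairwise projected components stay below $\frac{1}{\beta N_X M^\varepsilon}$ across all $n N_X$ patterns, and since patterns are i.i.d., the simultaneous event factorizes into the $P_{\textup{proj}}^{\mathcal{D}^{\mathcal{M}_\varepsilon}}(\cdot)^{2nN_X}$ term. The first $P_{\textup{proj}}$ term (to the first power) governs correct identification of the single target on the present side.

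\medskip

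Assembling the pieces, I would write each conditional success probability as one minus the probability of its failure event, bound the present-target failure by the box term and the absent-target failure by the complement of the projection-concentration event, and then combine through \eqref{eq:advantage_def} to obtain $\Adv = \Pr[b'=1\mid b=1] + \Pr[b'=0\mid b=0] - 1$, which yields exactly the three-term expression. The main obstacle I anticipate is the tight tracking of how the independent LDP noise $r_i$ (bounded by $R^\varepsilon$) propagates through the softmax attention: the noise simultaneously inflates pattern norms to $M^\varepsilon = \sqrt{M^2 + {R^\varepsilon}^2}$, shifts the effective separation to $\Delta^\varepsilon$, and perturbs the center of mass, so one must carefully verify that the retrieval-error bound $\bar{\Delta}^\varepsilon$ and the drift term $\beta(m_{\max}^\varepsilon)^2 R^\varepsilon$ hold \emph{after} perturbation rather than before. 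Controlling this interaction—showing the softmax remains sharply concentrated on the protected patterns despite the added noise while the mean itself moves—is the delicate technical core, and it is precisely why the hypothesis is stated directly in terms of the protected quantities $\Delta^\varepsilon$ and $M^\varepsilon$ rather than the clean data.
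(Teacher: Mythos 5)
Your proposal follows essentially the same route as the paper's proof in Appendix~\ref{appx:api_attn_advantage_ldp}: the same advantage decomposition, the same use of the Hopfield retrieval-error bound (Lemma~\ref{lemma:retrieve_bound}) under condition~\eqref{eq:delta_conditon_main} for the target-absent case, the same mean-value-theorem/Jacobian argument (Lemma~\ref{lemma:jacobian}) yielding the $\beta(m_{\max}^\varepsilon)^2 R^\varepsilon$ drift and the $P_{\textup{box}}$ false-negative term, and the same i.i.d.\ factorization producing the $P_{\textup{proj}}^{2nN_X}$ term (where the exponent $2$ per pattern arises from requiring small projections onto both the filtered direction $v$ and the second head's omitted direction $u$). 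The plan is correct and matches the paper's argument.
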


The key step in proving Theorem~\ref{theorem:api_ldp} is to demonstrate that the configuration of the self-attention model, specified in Appendix~\ref{appx:api_self_adv}, behaves as outlined in Fig.~\ref{fig:attack_principle}. Given any input pattern \(x_i \in X\), the attention heads process the LDP-protected version of the pattern, \(x_i + r_i\). Let \(v\) denote the target pattern. If \(v\) is not present in the victim's training dataset, i.e., \(x_j \neq v\) for all \(1 \leq j \leq N_X\), then, using Lemma~\ref{lemma:retrieve_bound} (Appendix~\ref{appx:hopfield}), which builds on the \textit{Exponentially Small Retrieval Error} Theorem for the attention layer~\cite{ramsauer2020hopfield}, we show that \(z_1^h \approx x_i + r_i\) and \(z_2^h \approx x_i + r_i\), both with a probability lower-bounded by: 
\[
P_{\textup{proj}}^{\mathcal{D}^{\mathcal{M}_\varepsilon}}\left( \frac{1}{\beta N_X M^\varepsilon} \right).
\]
This probability governs the false-positive error of the attack. 

If there exists \(x_i = v\) (i.e., \(v\) is in the training dataset), the weights of attention head 1 filter \(v\) and output:
\[
z_1^h = X^\varepsilon \textup{softmax}\big(\beta {X^\varepsilon}^\top (r_i - \Bar{r}_i^v)\big),
\]
where \(\Bar{r}_i^v\) is the projection of \(r_i\) onto \(v\). If $R^\varepsilon$ is small enough, $z_1^h \approx \bar{X}^\varepsilon$. By computing the difference between the two heads, \(|z^1_i - z^2_i|\), the adversary can infer the presence of \(v\) in \(X\). For a hyperparameter \(\gamma\), if \(|z^1_i - z^2_i| > \gamma\), then \(v \in X\). Conversely, if \(|z^1_i - z^2_i| < \gamma\), then \(v \notin X\). We select \(\gamma = 2 \Bar{\Delta}^\varepsilon \), a choice justified in Appendix~\ref{appx:api_attn_advantage_ldp}.

False-negative errors occur when the embedding of protected data overlaps at the center of the embedding space, causing \(|z^1_i - z^2_i| < \gamma\) even when \(v \in X\). By using the mean value theorem and bounding the Jacobian of the attention’s forwarding function, this error is upper bounded by:
\[
P_{\textup{box}}^{\mathcal{D}^{\mathcal{M}_\varepsilon}}\big(3\Bar{\Delta^\varepsilon} + \beta ({m_{\text{max}}^\varepsilon})^2 R^\varepsilon\big).
\]

As noise increases, the cube of size \(6\Bar{\Delta^\varepsilon} + 2\beta ({m_{\text{max}}^\varepsilon})^2 R^\varepsilon\) covers more patterns and causes $P_{\textup{box}}^{\mathcal{D}^{\mathcal{M}_\varepsilon}}$ to increase. At high noise levels, the attention outputs for all patterns are more likely to cluster near the center of the embedding space, as illustrated in Fig.~\ref{fig:exp:pbox}. This leads to higher false-negative rates for the attack. However, models trained on such noisy data generally exhibit poor performance since patterns whose embeddings overlap in these central regions become indistinguishable. This results in $P_{\textup{box}}^{\mathcal{D}^{\mathcal{M}_\varepsilon}} \approx 1$ for large enough $R^\varepsilon$, causing the advantage to drop sharply regardless of dimensionality as the cube fully encloses the patterns. Our simulations of Eq.~(\ref{eq:adv_ami_ldp_main}) for one-hot and spherical data are shown in Fig.~\ref{fig:onehot} and Fig.~\ref{fig:spherical}, respectively. 

\begin{remark}\label{remark:delta}\textbf{$\Delta^\varepsilon$ vs. the advantage's lower bound (\ref{eq:adv_ami_ldp_main}).}
    A larger $\Delta^\varepsilon$ allows a smaller $\beta$ to satisfy (\ref{eq:delta_conditon_main}) and makes $P_{\textup{proj}}^{\mathcal{D}^{\mathcal{M}_\varepsilon}}\left( \frac{1}{\beta N_X M^\varepsilon}\right)$ larger and $P_{\textup{box}}^{\mathcal{D}^{\mathcal{M}_\varepsilon}}\big(3\Bar{\Delta^\varepsilon} + \beta ({m_{\text{max}}^\varepsilon})^2 R^\varepsilon\big)$ smaller. 
\end{remark}
\begin{remark}\label{remark:vunenable} \textbf{The most vulnerable embedding.}
    The lower bound in (\ref{eq:adv_ami_ldp_main}) would be optimized with one-hot data. Since it has no alignment among patterns, $\Delta^\varepsilon $ achieves its maximum which is the pattern's norm. Furthermore, $P_{\textup{proj}}^{\mathcal{D}^{\mathcal{M}_\varepsilon}}\left( \frac{1}{\beta N_X M^\varepsilon}\right)$  is 1 because all patterns are orthogonal to each other. Finally, since there is no pattern at the center of one-hot data, we can select a very large $\beta$ so that $P_{\textup{box}}^{\mathcal{D}^{\mathcal{M}_\varepsilon}}\big(3\Bar{\Delta^\varepsilon} + \beta ({m_{\text{max}}^\varepsilon})^2 R^\varepsilon\big) = 0$, given $R^\varepsilon$ is small enough.
\end{remark}
\begin{remark}
    \textbf{Asymptotic behavior of the advantage (\ref{eq:adv_ami_ldp_main}).}
    For high dimensional data, i.e., $d_X \rightarrow \infty$, two random points are surely almost orthogonal ($P_{\textup{proj}}^{\mathcal{D}^{\mathcal{M}_\varepsilon}}\left( \frac{1}{\beta N_X M^\varepsilon}\right) \rightarrow 1$), and a random point is almost always at the boundary ~\cite{blum2020foundations}. Therefore, $P_{\textup{box}}^{\mathcal{D}^{\mathcal{M}_\varepsilon}}\big(3\Bar{\Delta^\varepsilon} + \beta ({m_{\text{max}}^\varepsilon})^2 R^\varepsilon\big) \rightarrow 0$ for small enough $R^\varepsilon$ and $\beta$. This phenomenon can be seen in one-hot data (see Fig. \ref{fig:onehot}). For spherical data, the amount of noise needed for $P_{\textup{box}}^{\mathcal{D}^{\mathcal{M}_\varepsilon}}\big(3\Bar{\Delta^\varepsilon} + \beta ({m_{\text{max}}^\varepsilon})^2 R^\varepsilon\big) \rightarrow 1$ is smaller, and the cut-off noise’s norm gradually decreases as the input dimension increases. 
\end{remark}
\begin{remark}\label{remark:beta}
\textbf{Impact of $\beta$.} Increasing the hyper-parameter $\beta$ in $\mathcal{A}^\mathcal{D}_{\mathsf{Attn}}$ (Algo.~\ref{algo:api_attack}, Appx.~\ref{appx:api_self_adv}) would raise the memorization of the attention layer~\cite{ramsauer2020hopfield}. \cite{vu2024analysis} showed experimentally that increasing $\beta$ leads to better adversarial success rates against unprotected data. For LDP-protected data, this is not the case, as increasing $\beta$ also increases $P_{\textup{box}}^{\mathcal{D}^{\mathcal{M}_\varepsilon}}\big(3\Bar{\Delta^\varepsilon} + \beta ({m_{\text{max}}^\varepsilon})^2 R^\varepsilon\big)$, reducing the lower bound of $\Adv^{\textup{AMI}}_{\textsc{LDP}}(\mathcal{A}^\mathcal{D}_{\mathsf{Attn}})$. We provide experiments on the impact of $\beta$ on $\Adv^{\textup{AMI}}_{\textsc{LDP}}(\mathcal{A}^\mathcal{D}_{\mathsf{Attn}})$ in Sect. \ref{subsect:beta}.
\end{remark}
\section{Experiments} \label{sect:experiments}

This section demonstrates the practical risks of leaking private data in FL. In particular, we implement the FC-based $\mathcal{A}^\mathcal{D}_{\mathsf{FC}}$ and attention-based $\mathcal{A}^\mathcal{D}_{\mathsf{Attn}}$ adversaries, and evaluate their success rates in synthetic and real-world datasets. Implementation details are given in Appendix. \ref{appx:exp_setting}

\begin{figure*}[!ht]
\centering
\begin{subfigure}{.24\linewidth}
  \centering
  \includegraphics[width=1.0\linewidth]{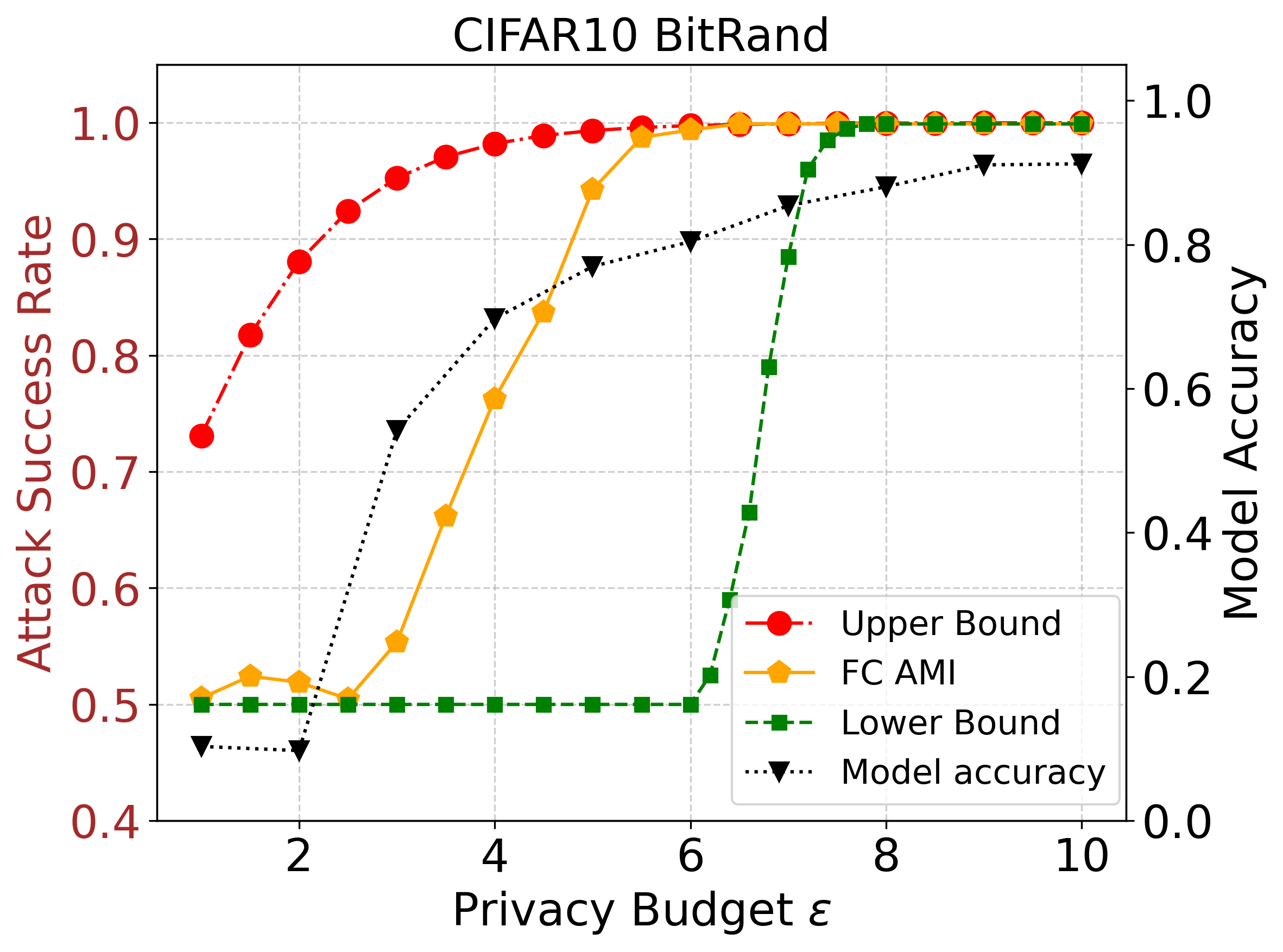}
  \caption{}
  \label{cifar10fcbitrand}
\end{subfigure}%
\begin{subfigure}{.24\linewidth}
  \centering
  \includegraphics[width=1.0\linewidth]{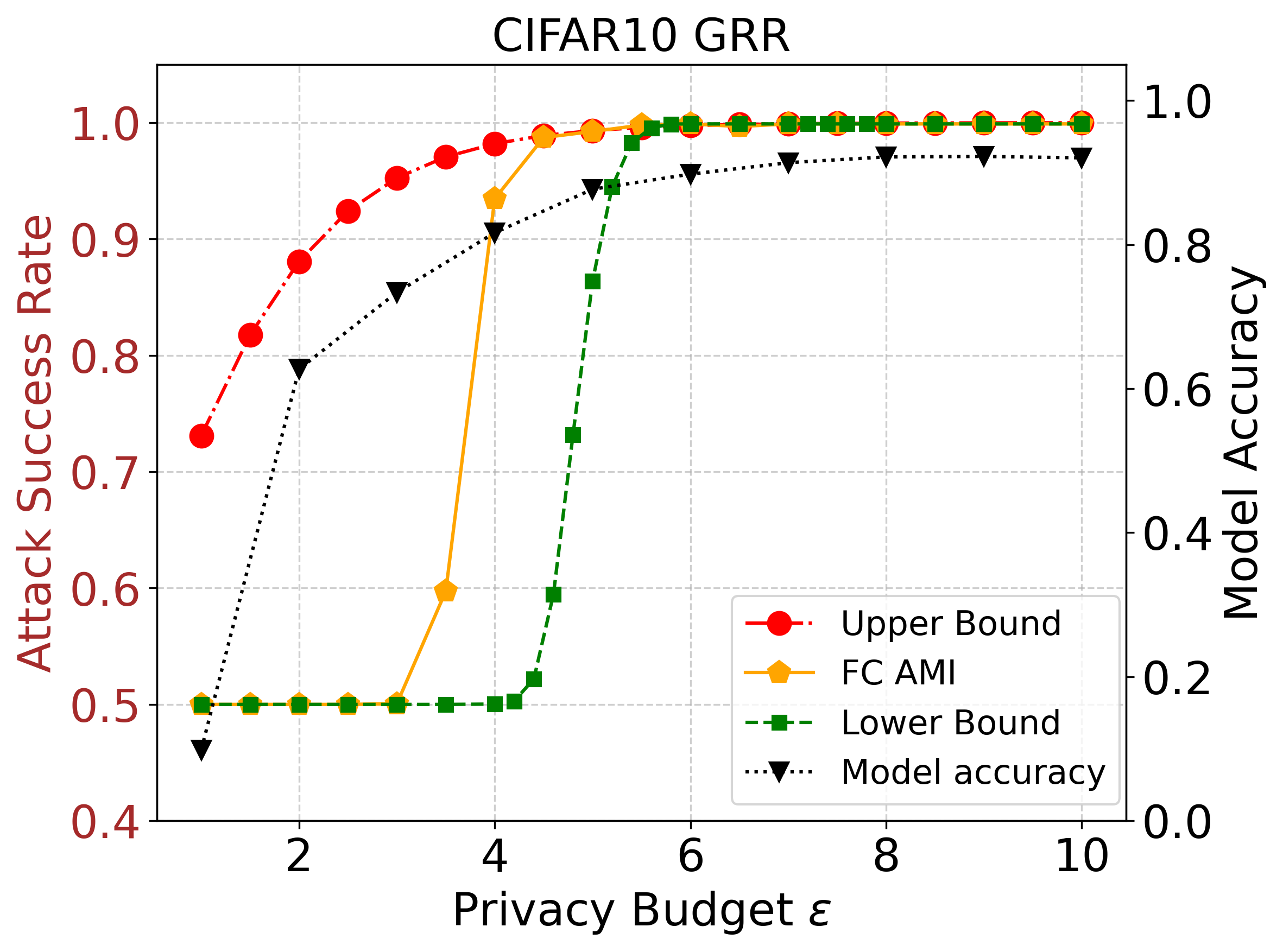}
  \caption{}
  \label{cifar10fcgrr}
\end{subfigure}%
\begin{subfigure}{.24\linewidth}
  \centering
  \includegraphics[width=1.0\linewidth]{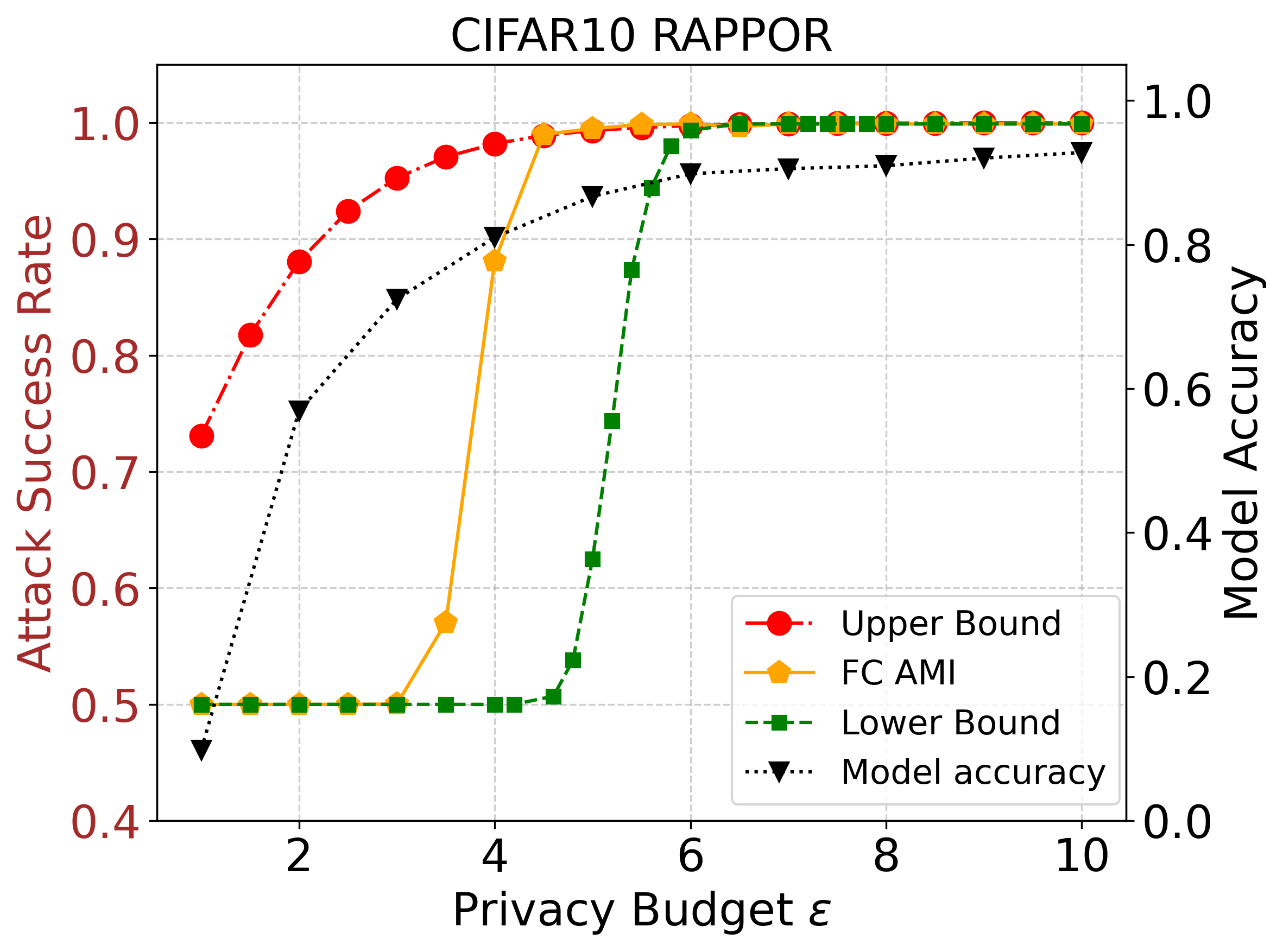}
  \caption{}
  \label{cifar10fcrappor}
\end{subfigure}%
\begin{subfigure}{.24\linewidth}
  \centering
  \includegraphics[width=1.0\linewidth]{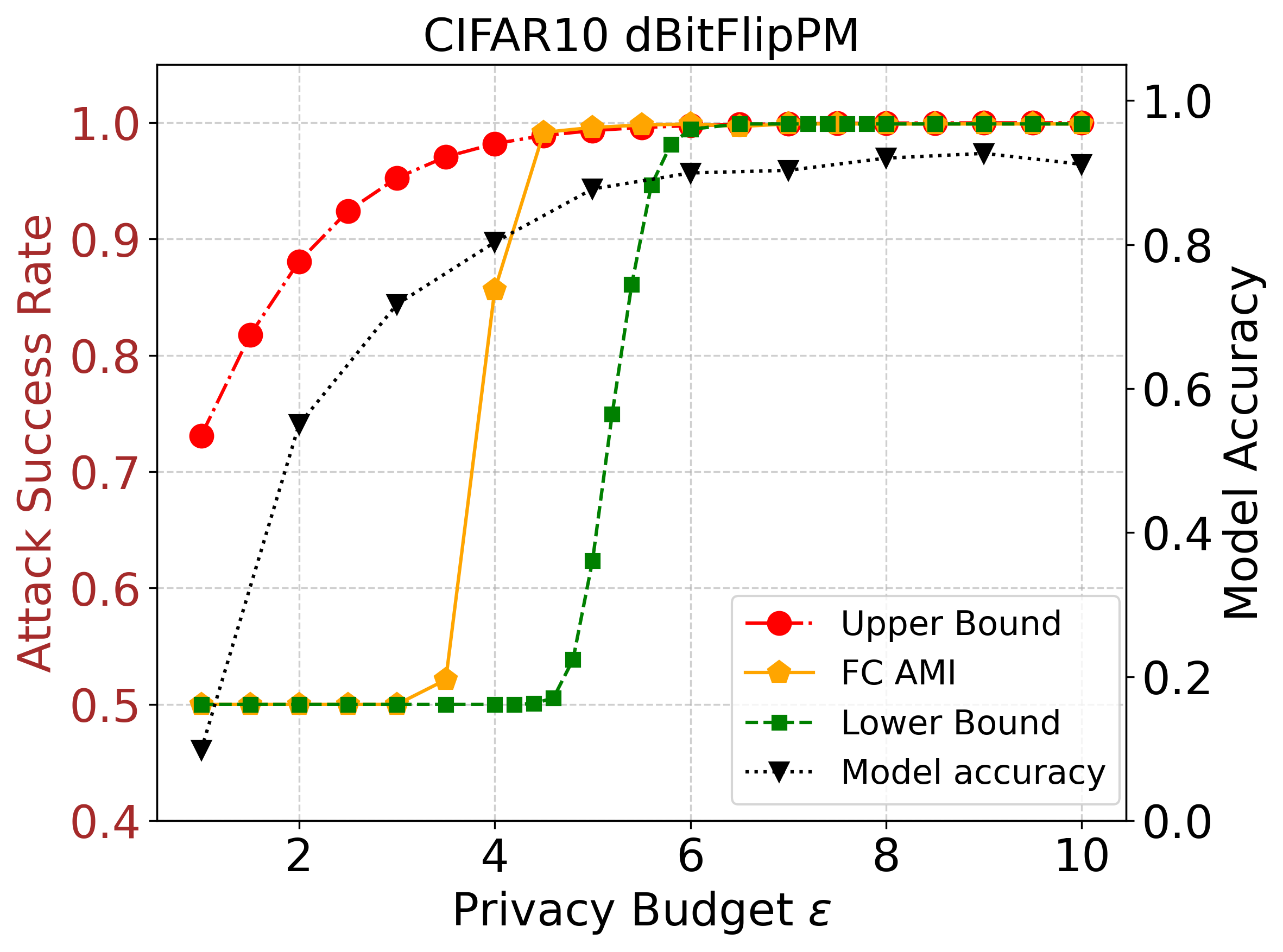}
  \caption{}
  \label{cifar10fcdbitflippm}
\end{subfigure}
\caption{Theoretical upper/lower bound and empirical results on the attack success rates of FC-based AMI adversaries against CIFAR10 dataset protected by BitRand (a), GRR (b), RAPPOR (c) and dBitFlipPM (d).}  
\label{fig:ldp_fc_cifar10}
\end{figure*}

\begin{figure*}[!ht]
\centering
\begin{subfigure}{.24\linewidth}
  \centering
  \includegraphics[width=1.0\linewidth]{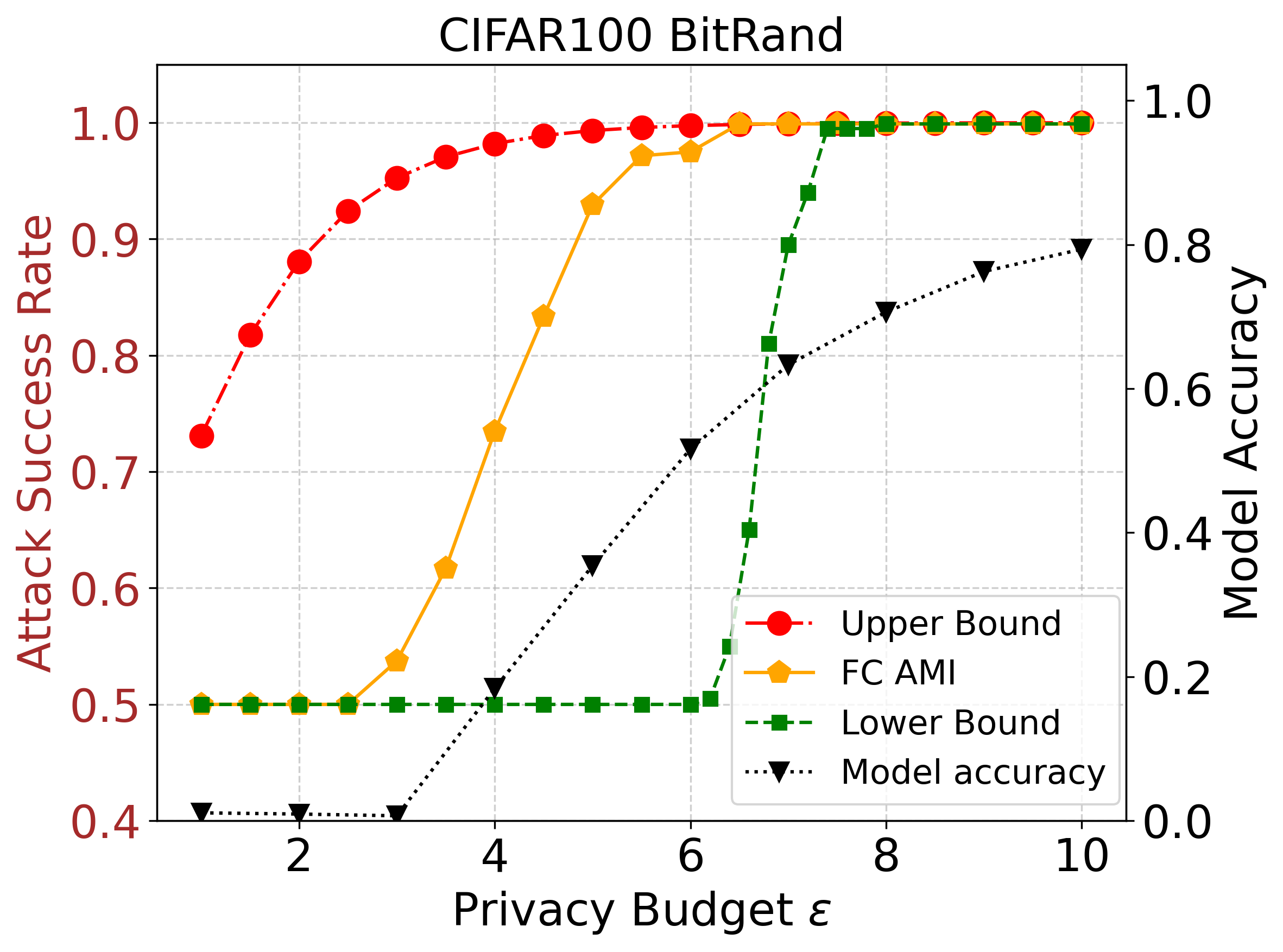}
  \caption{}
  \label{cifar100fcbitrand}
\end{subfigure}%
\begin{subfigure}{.24\linewidth}
  \centering
  \includegraphics[width=1.0\linewidth]{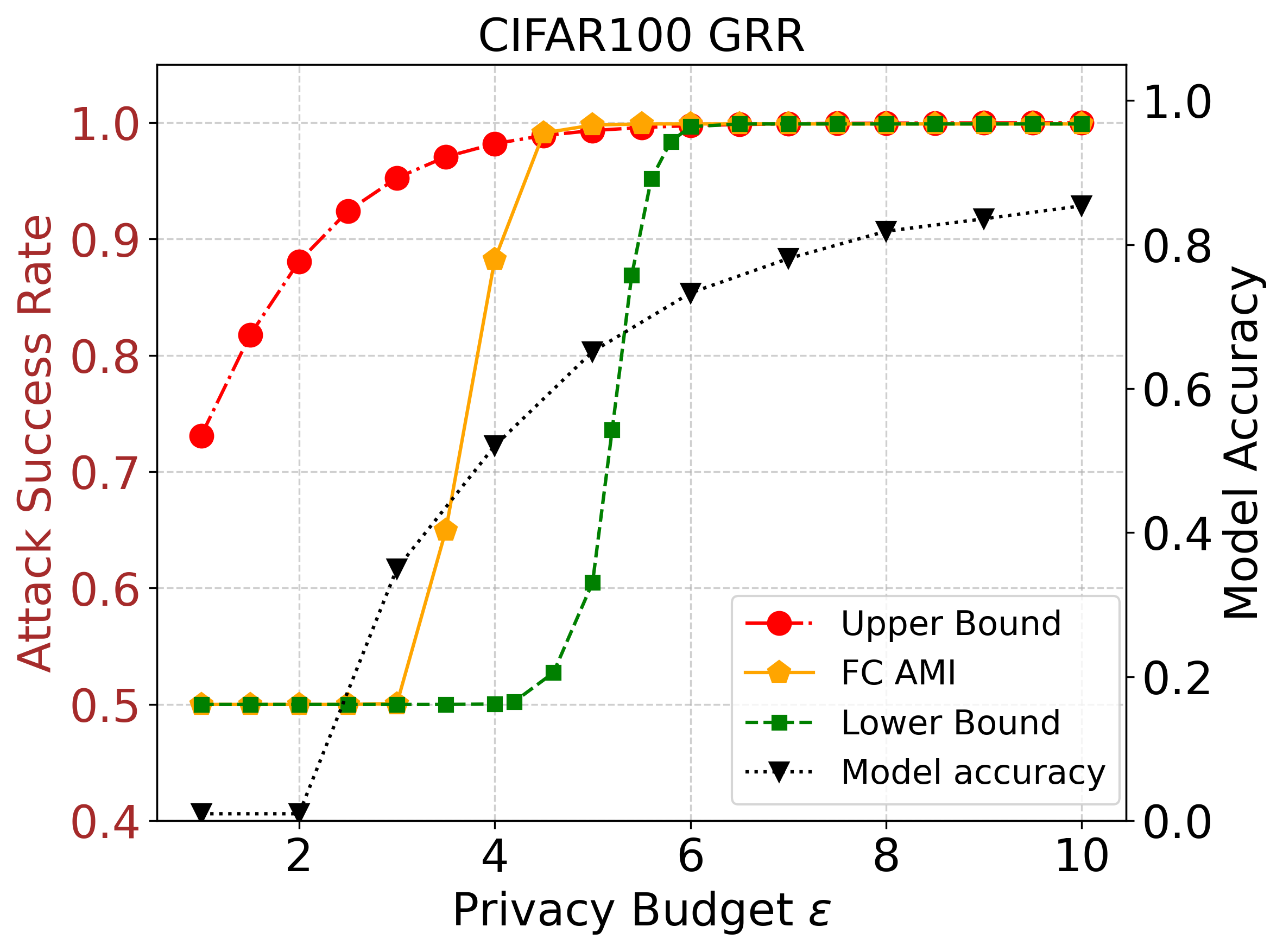}
  \caption{}
  \label{cifar100fcgrr}
\end{subfigure}%
\begin{subfigure}{.24\linewidth}
  \centering
  \includegraphics[width=1.0\linewidth]{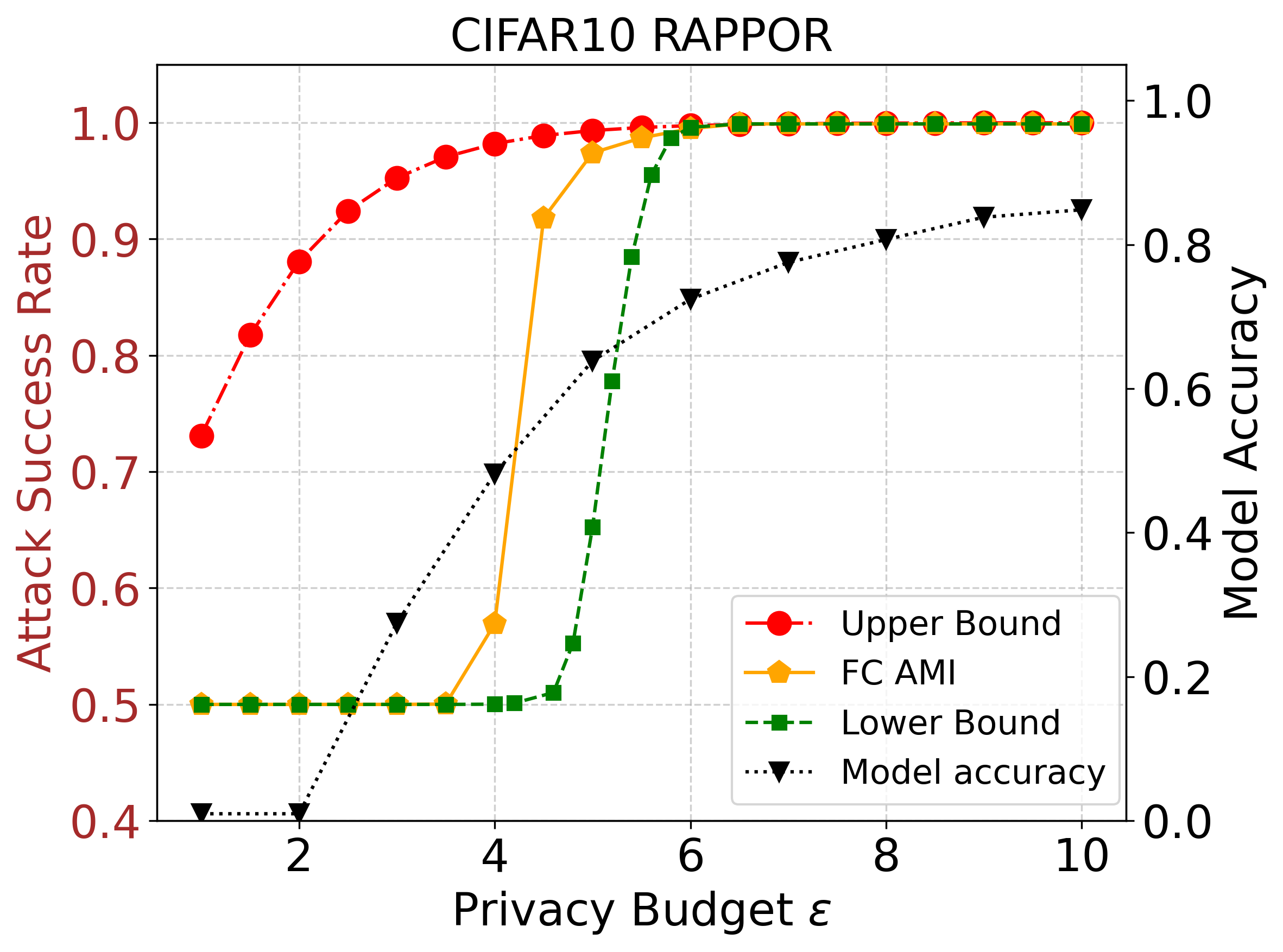}
  \caption{}
  \label{cifar100fcrappor}
\end{subfigure}%
\begin{subfigure}{.24\linewidth}
  \centering
  \includegraphics[width=1.0\linewidth]{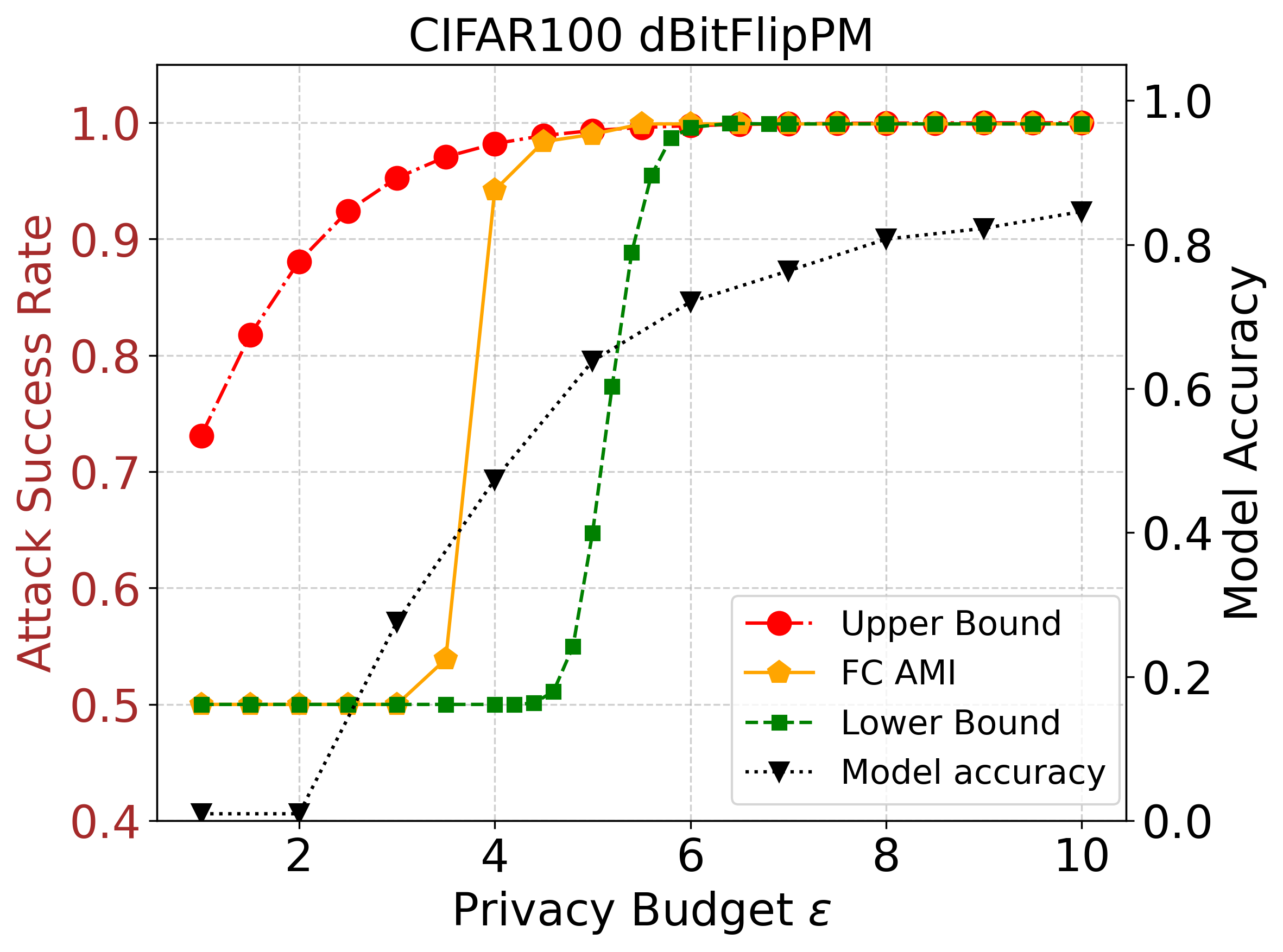}
  \caption{}
  \label{cifar100fcdbitflippm}
\end{subfigure}
\caption{Theoretical upper/lower bound and empirical results on the attack success rates of FC-based AMI adversaries against CIFAR100 dataset protected by BitRand (a), GRR (b), RAPPOR (c) and dBitFlipPM (d).}  
\label{fig:ldp_fc_cifar100}
\end{figure*}

\begin{figure*}[]
\centering
\begin{subfigure}{.32\linewidth}
  \centering
  \includegraphics[width=1.0\linewidth]{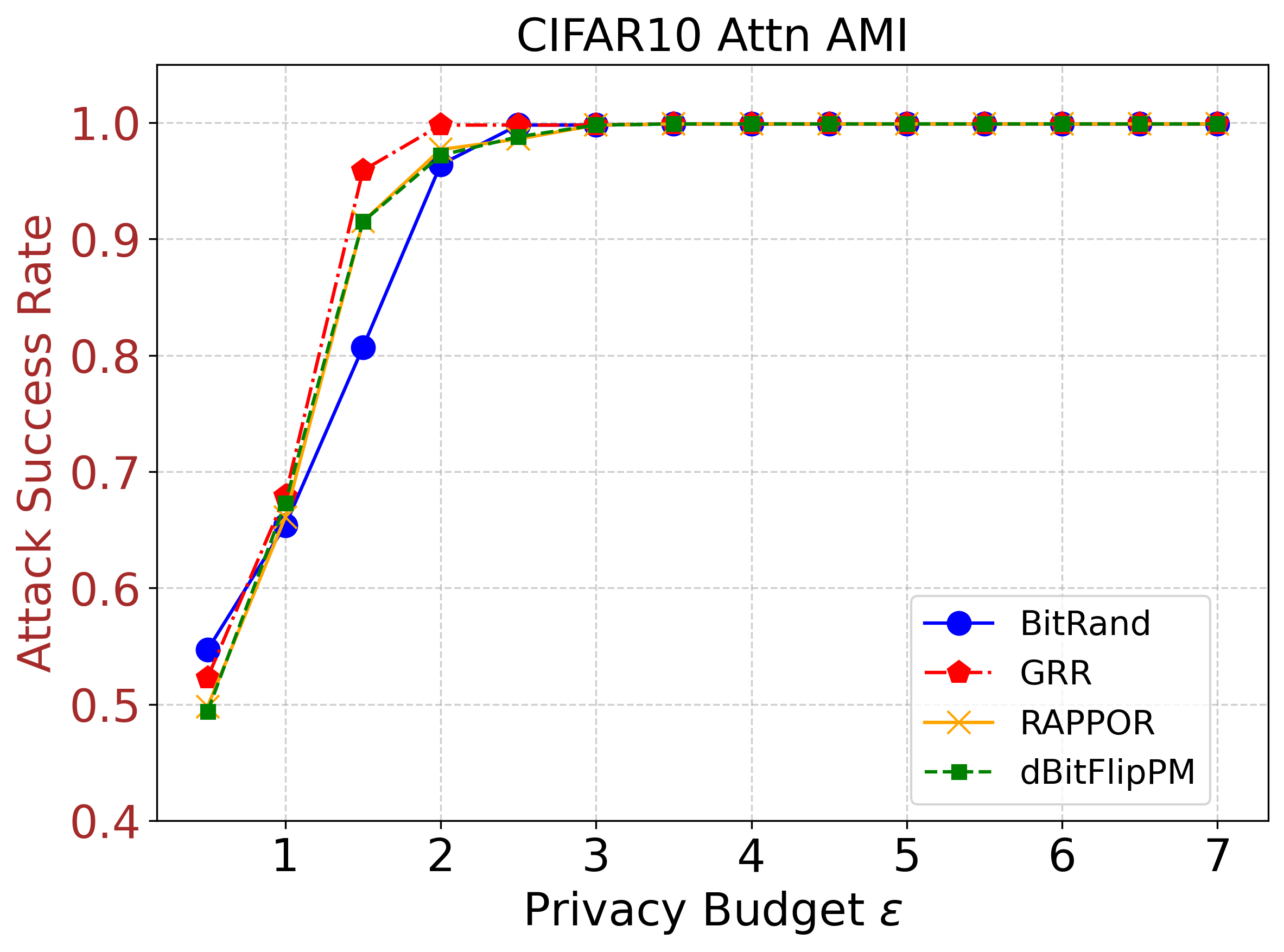}
  \caption{}
  \label{ldpcifar10attn}
\end{subfigure}%
\begin{subfigure}{.32\linewidth}
  \centering
  \includegraphics[width=1.0\linewidth]{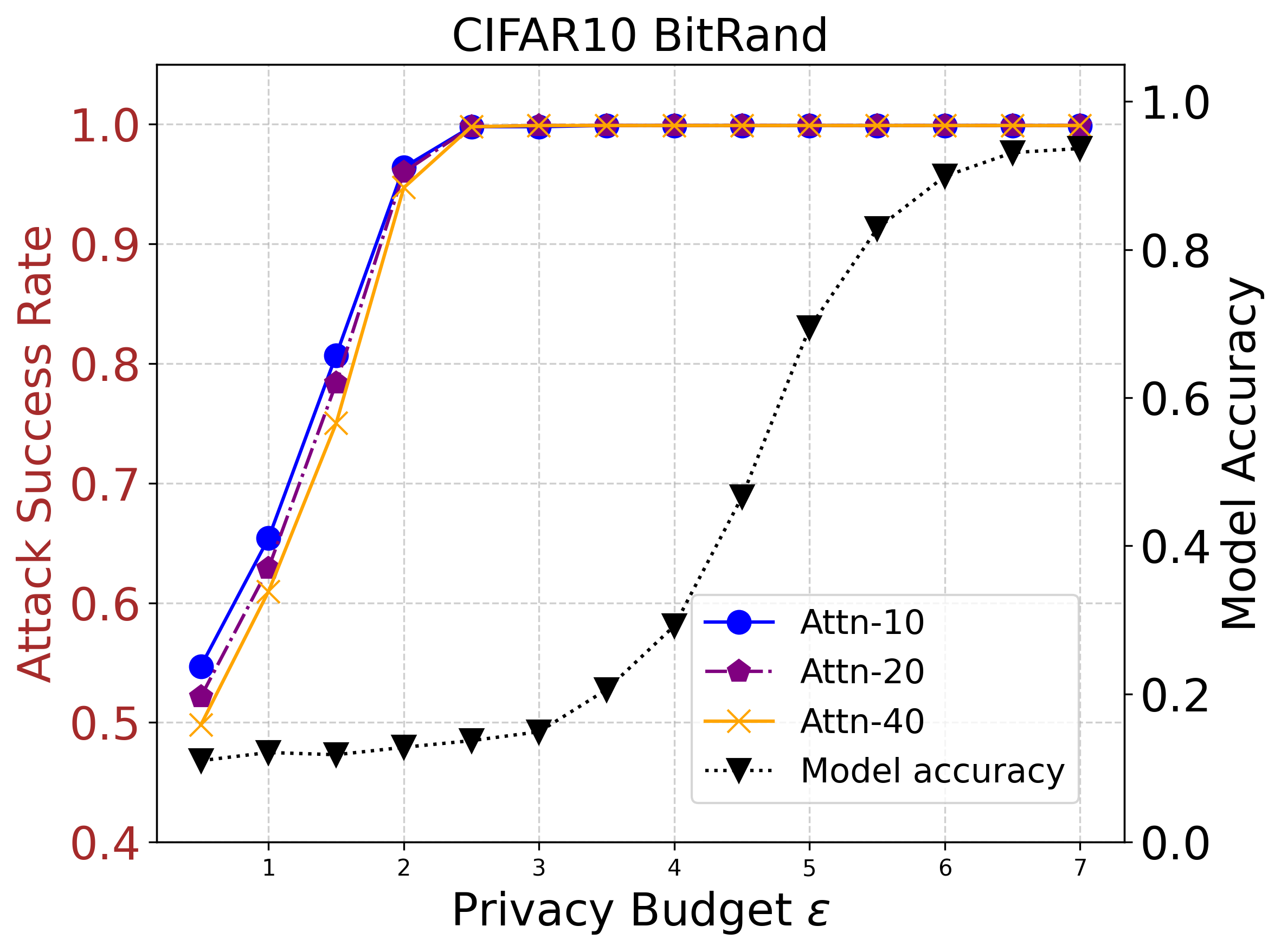}
  \caption{}
  \label{cifar10attn}
\end{subfigure}%
\begin{subfigure}{.32\linewidth}
  \centering
  \includegraphics[width=1.0\linewidth]{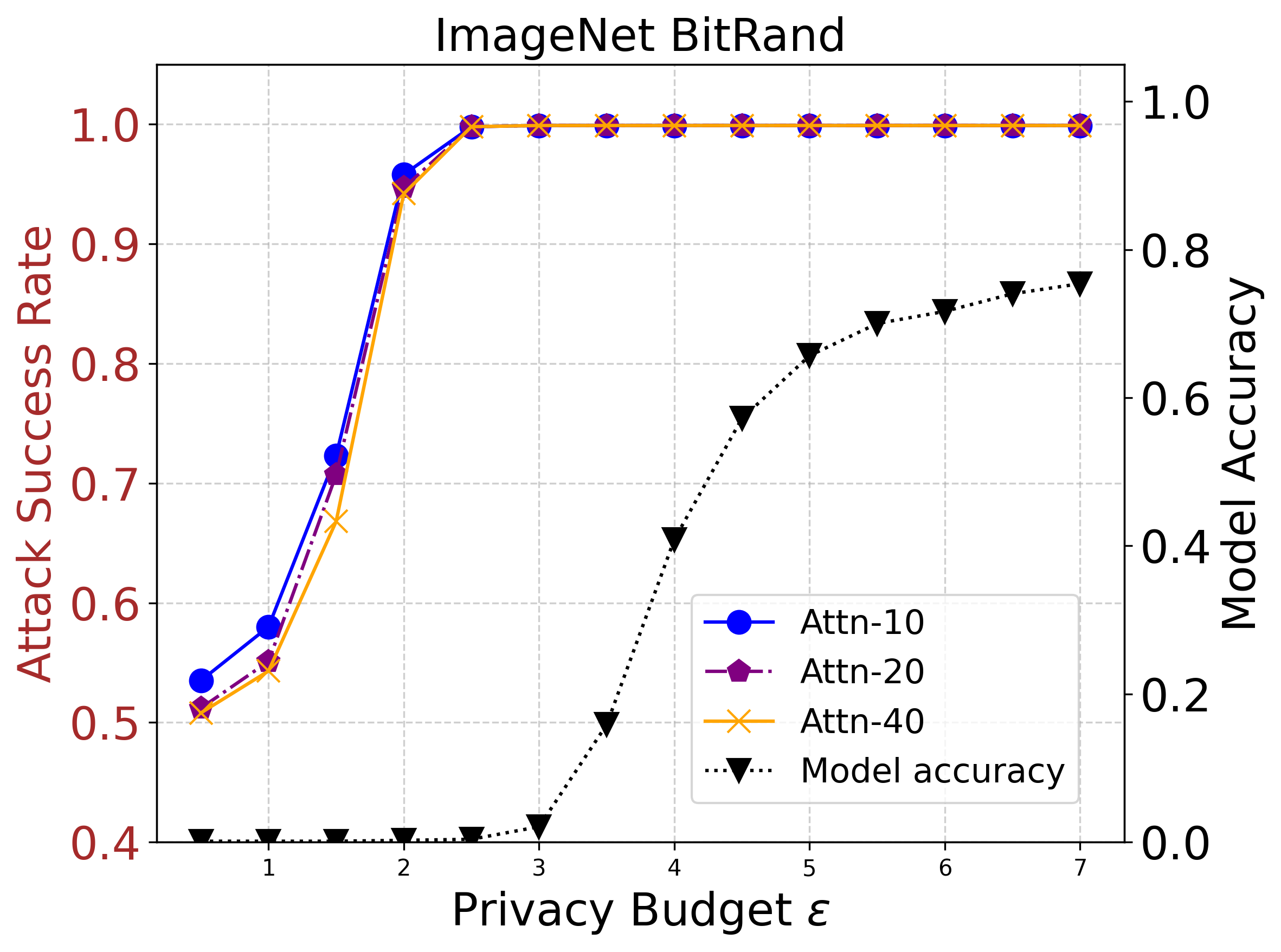}
  \caption{}
  \label{imagenetattn}
\end{subfigure}%
\caption{Comparison of success rates of Attention-based AMI adversaries against CIFAR10 protected by BitRand, GRR, RAPPOR, and dBitFlipPM (a) as well as the privacy-utility trade-off and the impact of batch size on attack success on BitRand-protected data (b,c). Here, Attn-10 means the attack is conducted with batch size 10, and so on. Batch size is 10 if not explicitly mentioned.}  
\label{fig:ldp_attn}
\end{figure*}


\textbf{Datasets and embedding.} Our experiments use two synthetic and three real-world datasets. The synthetic datasets include one-hot encoded data and spherical data (points on the unit sphere). The real-world datasets, including CIFAR10, CIFAR100 ~\cite{krizhevsky2009learning}, and ImageNet~\cite{krizhevsky2012imagenet}, are processed using pretrained embedding modules to obtain data $D$ for our threat models. We also use the ImageNet dataset, a large-scale benchmark consisting of labeled images across 1,000 categories \cite{deng2009imagenet}.  For ResNet, we extract feature embeddings with Img2Vec~\cite{img2vec}, while for ViTs, we use pretrained foundation models provided by the authors on HuggingFace~\cite{dosovitskiy2021an}. We refer readers to Appx. \ref{appx:more_dataset} for more details.


\textbf{LDP mechanisms.} We use BitRand \cite{jiang2022flsys}, GRR \cite{warner1965randomized}, RAPPOR \cite{erlingsson2014rappor}, dBitFlipPM \cite{ding2017collecting} as LDP mechanisms for real-world datasets. Details about these algorithms are in Appx. \ref{app:ldp}. We also provide some results on OME \cite{lyu2020towards} in Appx. \ref{appx:exp_ome}.

\textbf{Results on synthetic datasets.} Fig.~\ref{fig:onehot} and Fig.~\ref{fig:spherical} shows the impact of the L1 norm of $R^\varepsilon$ on the advantage of Attention-based AMI adversary on one-hot and spherical data, respectively. For one-hot data, as we increase $d_X$, random noise is almost always at the boundary and there is no pattern at the center of one-hot data. This reduces the likelihood that the protected data's embedding overlaps with the center of the embeddings, increasing the lower bound of the adversary's advantage in (\ref{eq:adv_ami_ldp_main}). On the other hand, for spherical data, the advantage drops sharply when $R^\varepsilon$ increases, regardless of the dimension of the data.

\textbf{Results of FC-based AMI adversary.} Figures \ref{fig:ldp_fc_cifar10} and \ref{fig:ldp_fc_cifar100} show the success rates of FC-based AMI attacks against 4 different LDP algorithms on CIFAR10 and CIFAR100. 
The theoretical lower and upper bound on the adversary's attack success rate can be derived directly from the theoretical lower and upper bound of $\Adv^{\mathsf{AMI}}_{\textup{LDP}}(\mathcal{A}^\mathcal{D}_{\mathsf{FC}})$ using Eq. \ref{eq:advantage_def}. Across these LDP mechanisms, the amount of LDP noise needed to protect against AMI attack significantly reduces the model's utility. For example, for BitRand-protected CIFAR10, to make the inference rate lower than 80\% (yellow line), the model has to suffer at least 20\% accuracy loss. The theoretical lower bound of the attack's success rate corroborates the empirical success rate of $\approx 100\%$ when $\epsilon = 8$. For both real-world datasets, $\mathcal{A}^\mathcal{D}_{\mathsf{FC}}$ achieves near 100\% success rate when $\epsilon$ approaches 6. Among the evaluated mechanisms, GRR, RAPPOR, and dBitFlipPM preserve higher model accuracy at lower $\varepsilon$ values but expose the model to greater privacy risks, as indicated by both the empirical and theoretical attack success rates approaching $100\%$ at 
$\varepsilon =5 $ and $6$, respectively.

\textbf{Results of Attention-based AMI adversary.}
Experiments on CIFAR10 and ImageNet (1000 classes) are conducted on ViT-B-32-224 and ViT-B-32-384, respectively (Fig. \ref{fig:ldp_attn}). For LDP-protected data, the inference success rate of $\mathcal{A}^\mathcal{D}_{\mathsf{Attn}}$ approaches 100\% when $\epsilon = 3$ or higher. At this privacy budget, model performance suffers significantly. Fig. \ref{fig:ldp_attn} also illustrates the impact of batch size on the attack success rates, showing that the proposed $\mathcal{A}^\mathcal{D}_{\mathsf{Attn}}$ performs consistently across different batch sizes.

\textbf{Impact of $\beta$ on $\Adv^{\textup{AMI}}_{\textsc{LDP}}(\mathcal{A}^\mathcal{D}_{\mathsf{Attn}})$}\label{subsect:beta}. As discussed in remark \ref{remark:beta}, increasing $\beta$ also increases $P_{\textup{box}}^{\mathcal{D}^{\mathcal{M}_\varepsilon}}\big(3\Bar{\Delta^\varepsilon} + \beta ({m_{\text{max}}^\varepsilon})^2 R^\varepsilon\big)$, and in turn makes the lower bound of Eq. (\ref{eq:adv_ami_ldp_main}) smaller. We illustrate this behavior in Fig.\ref{fig:beta}. The more we increase $\beta$, the less likely the adversary succeed. Note that we still need to choose a $\beta$ large enough for Eq. \ref{eq:delta_conditon_main} to hold. We further discuss the choice of hyperparameters in Appendix~\ref{appx:implement_adv}. 
\begin{figure}[ht!]
    \centering
        \includegraphics[width=0.8\linewidth]{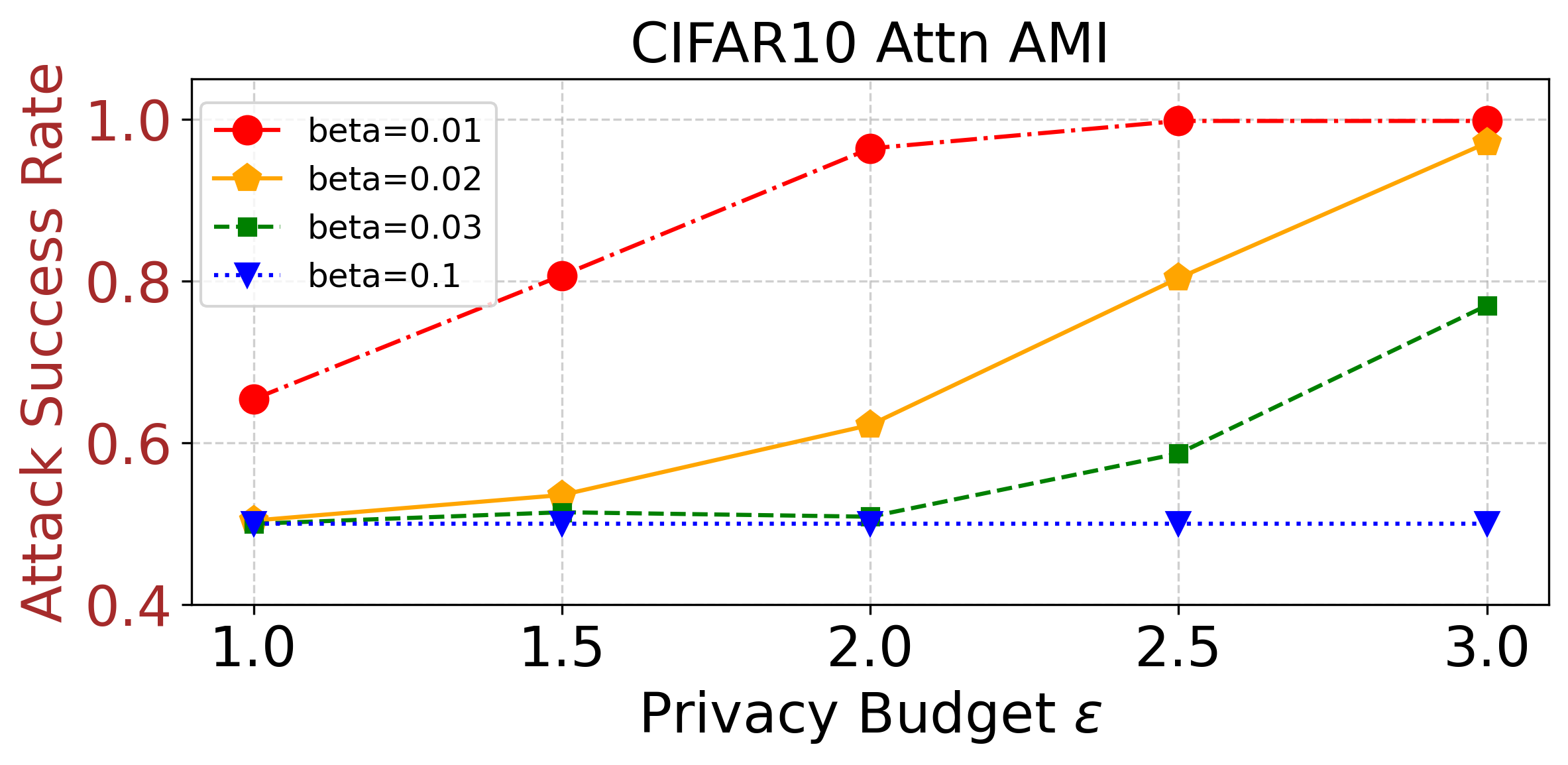}
    \caption{Impact of $\beta$ on $\Adv^{\textup{AMI}}_{\textsc{LDP}}(\mathcal{A}^\mathcal{D}_{\mathsf{Attn}})$}
    \label{fig:beta}
\end{figure}
Given the assumption that the server has knowledge of the client's data distribution (as outlined in the AMI threat models in Section 3.1), the server can simulate the client’s data to compute a minimally sufficient value for $\beta$. When doing experiments, we found that setting $\beta$ to a reasonably small value (e.g., $0.01$) yielded consistently good results across realistic $\varepsilon$ values and datasets/LDP mechanisms. As illustrated in Figure \ref{fig:beta}, for LDP-protected data,  $\beta=0.01$ generally achieves better success rates under small $\varepsilon$.

\textbf{Empirical results on NLP datasets.}
In section \ref{subsect:ami_attn_ldp}, the distortion imposed by LDP is modeled by a noise $r_i$ added to each pattern: $X^\varepsilon =\mathcal{M}^\varepsilon(X) =\{ x_i + r_i\}_{i=1}^{N_X} = \{ x_i^\varepsilon\}_{i=1}^{N_X}$. In our analysis, we assume $x_i$ and $r_i$ to be continuous, and the impact of LDP noise can be visualized in Fig. 4. For NLP data, both the data and the noise should be modeled as discrete, hence our theoretical analysis might not directly apply to the NLP scenario. The key challenge is that in NLP, tokens are typically represented as discrete embeddings, and adding continuous noise is not meaningful in this context. Therefore, a separate theoretical framework would be required to account for the discrete nature of NLP data.  

However, it is important to note that the attack still experimentally works against both vision and NLP data. To demonstrate this, we conducted comprehensive experiments across 4 NLP datasets (IMDB \cite{maas2011learning}, Yelp \cite{zhang2015character}, Twitter \cite{saravia2018carer}, Finance \cite{casanueva2020efficient}), 4 models (BERT \cite{devlin2019bert}, RoBERTa \cite{liu2019roberta}, GPT-1 \cite{radford2018improving}, DistilBERT \cite{sanh2019distilbert}), and 3 LDP algorithms (GRR, RAPPOR, dBitFlipPM). The results given in Appx. \ref{appx:llm} indicate that privacy risks persist even for LLMs, depending on the privacy budget. To explore more in depth the impact of different LDP mechanisms on the attack success rates, we also conduct an ROC analysis of the attack success rates (on IMDB dataset) in Appx. \ref{appx:roc}.

\section{Conclusion}\label{sect:conclude}

This work studies the formal threat models for AMI attacks with dishonest FL servers, effectively and rigorously providing the theoretical bound on the vulnerabilities of FL under LDP protection. We also provide experimental evidence for the high success rates of active inference attacks under certain LDP mechanisms. The results imply that LDP-protected data might be vulnerable to inference attacks in FL with dishonest servers and clients should carefully consider the tradeoff between privacy and utility. 

\section*{Acknowledgments}
This work was supported in part by the National Science Foundation under grants III-2416606 and SaCT-1935923.

This work was authored in part by the National Renewable Energy Laboratory (NREL) for the U.S. Department of Energy (DOE) under Contract No. DE-AC36-08GO28308. Funding provided by the Laboratory Directed Research and Development (LDRD) Program at NREL. The views expressed in the article do not necessarily represent the views of the DOE or the U.S. Government. The U.S. Government retains and the publisher, by accepting the article for publication, acknowledges that the U.S. Government retains a nonexclusive, paid-up, irrevocable, worldwide license to publish or reproduce the published form of this work, or allow others to do so, for U.S. Government purposes.

\section*{Impact Statement}
Our findings demonstrate that LDP-protected data can still be compromised by dishonest FL servers, particularly under practical scenarios. Additionally, the study highlights the significant trade-offs between privacy and model utility, showing that the noise levels necessary to mitigate these attacks often lead to substantial degradation in model performance.

Our work has immediate implications for the design and deployment of FL systems, emphasizing the need for stronger privacy safeguards and more robust defenses against AMI attacks. Researchers and practitioners are encouraged to explore enhanced privacy-preserving techniques that balance security with model effectiveness. Furthermore, policymakers can leverage these insights to establish clearer privacy guidelines for the implementation of FL in sensitive applications, such as healthcare and finance.
    
By providing a theoretical framework alongside empirical evidence, our paper serves as a critical resource for understanding and addressing privacy threats in decentralized machine learning environments. The results pave the way for further advancements in secure and privacy-preserving AI technologies.

\bibliography{main}
\bibliographystyle{icml2025}

\newpage
\appendix
\onecolumn
\section{Appendix}

This is the appendix of our paper \textit{Theoretically Unmasking Inference Attacks Against LDP-Protected Clients in Federated Vision Models}. Its main content and outline are as follows:
\begin{itemize}
    \item Appendix~\ref{appx:threat_models} provides the details of the security games examined in this work. 
    \item Appendix~\ref{appx:vul_ami} shows detailed description of FC-based and Attention-based AMI Attack in FL.
    \begin{itemize}
        \item Appendix~\ref{appx:ami_adv}: the description of FC-based adversary for AMI \cite{vu2024analysis}.
        \item Appendix~\ref{appx:api_self_adv}: the description of attention-based adversary for AMI \cite{vu2024analysis}.
        
    \end{itemize}
    \item Appendix~\ref{appx:vul_api} proves the advantage of FC-based and Attention-based AMI Attack in FL under LDP.
    \begin{itemize}
        \item Appendix~\ref{appx:ami_mlp_ldp}: the proof of Theorem~\ref{theorem:ami_ldp}.
        \item Appendix~\ref{appx:hopfield} discusses attention layer's memorization capabilities through \textit{ Exponentially Small Retrieval Error Theorem} \cite{ramsauer2020hopfield}.
    
        \item Appendix~\ref{appx:ldp_on_data}: the impact of LDP mechanisms on data separation.
        \item Appendix~\ref{appx:api_attn_advantage_ldp}: the proof of Theorem~\ref{theorem:api_ldp}.
    \end{itemize}
    \item Appendix~\ref{app:lower-bound-grr} proves the lower bound of the advantage of FC-based AMI attack under GRR mechanism. 
    \item Appendix~\ref{app:upper-bound-ami} proves the upper bound of the advantage of FC-based AMI attack under LDP. 
    \item Appendix~\ref{appx:exp_setting} provides the details of our experiments reported in the main manuscript.
    \begin{itemize}
        \item Appendix~\ref{appx:more_dataset}: details of the tested datasets.
        \item Appendix~\ref{app:vit}: implementation of Attention-based AMI Adversary against Vision Transformer
        \item Appendix~\ref{appx:implement_adv}: details choices of hyperparameters of the adversaries.
        \item Appendix~\ref{app:ldp}: descriptions of the tested LDP mechanisms.
    \end{itemize}
    \item Appendix~\ref{appx:extra_exp} provides additional experimental results.
        \begin{itemize}
        \item Appendix~\ref{appx:exp_ome}: provides additional experiments on data protected by OME mechanism.
        \item Appendix~\ref{appx:llm} provides experimental results on NLP datasets.
        \item Appendix~\ref{appx:roc} provides ROC analysis of the attack success rates on IMDB dataset.
    \end{itemize}
    \item Appendix~\ref{appx:alternative_priv} discusses alternative privacy-preserving techniques such as SMPC and Homomorphic Encryption.
\end{itemize}


\section{Active Inference Threat Models as Security Games}  \label{appx:threat_models}

This appendix provides the descriptions of the security games examined in our work. All games are conducted between a challenger/client, and an adversary/server in FL. The adversary is denoted by ${\mathcal{A}^\mathcal{D}}$, in which the superscript $\mathcal{D}$ indicates that the server knows the data distribution of the client's private data. At the beginning of the games, a random bit $b$ is generated and it is used to decide whether the challenger's private data has a specific sample. The goal of the AMI adversary is to guess the bit $b$, which is equivalent to inferring  information on the challenger's data. 

 As pointed out briefly in Sect.~\ref{subsect:ami_threat_ldp}, the adversarial server $\mathcal{A}^\mathcal{D}$ in all security games consists of three components $\mathcal{A}^\mathcal{D}_{\mathsf{INIT}}$, $\mathcal{A}^\mathcal{D}_{\mathsf{ATTACK}}$ and $\mathcal{A}^\mathcal{D}_{\mathsf{GUESS}}$. An illustration of their dynamics is provided in Fig.~\ref{fig:security_game}. To specify an adversary, for each security game, we need to describe how it determines the model $\Phi$ for FL in $\mathcal{A}^\mathcal{D}_{\mathsf{INIT}}$, how it crafts the model's parameters $\theta$ in $\mathcal{A}^\mathcal{D}_{\mathsf{ATTACK}}$ and  how it guesses the bit $b$ in $\mathcal{A}^\mathcal{D}_{\mathsf{GUESS}}$. The security games considered in this work are described below.

\begin{figure}[ht]
\begin{framed}
\begin{flushleft}
\experimentv{$\mathsf{Exp}^{\textup{AMI}}_{\textsc{None}}{(\mathcal{A}^\mathcal{D})}$:}\\
\cmt{Simulating the dataset $D$ of the client} \\
$D \leftarrow \emptyset$\\
\While{$|D| < n$} 
{$X \getsp \mathcal{X} $ \cmt{Sampling $X$ from the input distribution $\mathcal{D}$}\\
\If{$X \notin D$}{$D \leftarrow D \cup \{ X \}$ }}
\cmt{The random bit game}\\
$b \getsr \bits$ \\
$X \leftarrow \textsc{None}$ \\
\If{$b=1$} {
$T \getsr D$ \cmt{Uniformly sampling $T$ from $n$ samples in $D$}\\
}
\Else{
\While{$T == \textsc{None}$ { \textup{or} } $T \in D$}
{
$T \getsp \mathcal{X}$ \cmt{Sampling $T$ from the input distribution}
}
}
\cmt{The attack}\\
$\Phi \gets \mathcal{A}^\mathcal{D}_{\mathsf{INIT}}$ \cmt{The adversarial server decides a model $\Phi$}\\
$\theta \gets \mathcal{A}^\mathcal{D}_{\mathsf{ATTACK}}(T)$ \cmt{The server computes the parameters $\theta$ based on the target input $T$}\\
$ \dot{\theta} \gets \nabla_\theta \mathcal{L}_{\Phi}(D)$ \cmt{The client computes the gradients of the loss w.r.t the parameters based on its data $D$}\\
$b' \gets \mathcal{A}^\mathcal{D}_{\mathsf{GUESS}}(T, \dot{\theta})$ \cmt{The server receives $ \dot{\theta}$ and guesses a bit $b'$}\\
\textbf{Ret} $[b'=b]$ \cmt{The game returns 1 if $b'=b$ (the adversarial server wins), 0 otherwise}
\end{flushleft}
\end{framed}
\caption{The AMI Threat Model as a Security Game.}
\label{fig:AMI}
\end{figure}

\begin{figure}[!ht]
    \begin{framed}
    \begin{flushleft}
    \experimentv{$\mathsf{Exp}^{\textup{AMI}}_{\textsc{LDP}}{(\mathcal{A}^\mathcal{D}, \varepsilon)}$:}\\
    \cmt{Simulating the dataset $D$ of the client} \\
    As in the AMI threat model in Fig.~\ref{fig:AMI} \\
    \cmt{The random bit game}\\
    As in the AMI threat model in Fig.~\ref{fig:AMI} \\
    \cmt{The attack}\\
    $\Phi \gets \mathcal{A}^\mathcal{D}_{\mathsf{INIT}}$ \\
    $\theta \gets \mathcal{A}^\mathcal{D}_{\mathsf{ATTACK}}(T)$ \\
    $ D' \leftarrow  \mathcal{M}^{\varepsilon} (D)  = \{\mathcal{M}^{\varepsilon}(X)\}_{X \in D}$ \cmt{Applying LDP mechanism $\mathcal{M}$ with $\varepsilon$ parameter on the data $D$} \\
    $ \dot{\theta} \gets \nabla_\theta \mathcal{L}_{\Phi}(D')$ \cmt{Gradients are computed on the protected data}\\
    $b' \gets \mathcal{A}^\mathcal{D}_{\mathsf{GUESS}}(T, \dot{\theta})$ \\
    \textbf{Ret} $[b'=b]$ 
    \end{flushleft}
    \end{framed}
    \caption{The AMI threat model under LDP mechanism as a security game.}
    \label{fig:AMI_LDP}
    \end{figure}

\textbf{AMI on unprotected data} $\mathsf{Exp}^{\textup{AMI}}_{\textsc{None}}{(\mathcal{A}^\mathcal{D})}$: This security game is about the base AMI threat model, which is first formulated in~\cite{nguyen2023active} to study the threat of AMI in FL. While we do not directly study this security game, it serves as the foundation for $\mathsf{Exp}^{\textup{AMI}}_{\textsc{LDP}}{(\mathcal{A}^\mathcal{D})}$.The subscript ${\textsc{None}}$ indicates there is no defense mechanism applied on the data. The goal of the adversary is to decide if a target sample $T$ is included in the training data $D$. Fig.~\ref{fig:AMI} provides the pseudo-code of this security game.

\textbf{AMI on LDP-protected data} $\mathsf{Exp}^{\textup{AMI}}_{\textsc{LDP}}{(\mathcal{A}^\mathcal{D})}$: This security game describes the AMI threat model when the data is protected by LDP mechanisms. The work~\cite{nguyen2023active} extends $\mathsf{Exp}^{\textup{AMI}}_{\textsc{None}}{(\mathcal{A}^\mathcal{D})}$ to obtain the formulation of $\mathsf{Exp}^{\textup{AMI}}_{\textsc{LDP}}{(\mathcal{A}^\mathcal{D})}$. In this game,  the client independently perturbs its training data sample $D$ using an LDP-preserving mechanism $\mathcal{M}$ to obtain a randomized local training set $D' = \mathcal{M}^{\varepsilon}(D) = \{\mathcal{M}^{\varepsilon}(X)\}_{X \in D}$. This randomized data $D'$ is then used for the local training instead of $D$. As a result, the gradients that the FL server receives are computed on the protected data $D'$ instead of $D$. Fig.~\ref{fig:AMI_LDP} is the pseudo-code of this security game.

\section{FC-based and Attention-based AMI Attacks} \label{appx:vul_ami}
This appendix reports the details of FC-based and Attention-based AMI Attacks. In Appx.~\ref{appx:ami_adv}, we provide the descriptions of the FC-based AMI adversary proposed by \cite{vu2024analysis} used in our analysis. Appx.~\ref{appx:api_self_adv} presents the details of the Attention-based AMI adversary \cite{vu2024analysis}.

\subsection{FC-Based Adversary for AMI in FL} \label{appx:ami_adv}

We now describe the AMI FC-based adversary $ \mathcal{A}^\mathcal{D}_{\mathsf{FC}}$ proposed by \cite{vu2024analysis} and mentioned in Sect.~\ref{subsect:fc-ami}. The adversary consists of 3 components $\mathcal{A}^\mathcal{D}_{\mathsf{FC-INIT}}$, $\mathcal{A}^\mathcal{D}_{\mathsf{FC-ATTACK}}$ and $\mathcal{A}^\mathcal{D}_{\mathsf{FC-GUESS}}$.
 

\SetKwInput{KwInput}{Hyper-parameters} 

\begin{algorithm}[ht]
    \SetAlgoLined
    \KwInput{$ \tau^\mathcal{D} \in \mathbb{R}^+$} 
    
    \cmt{Configuring $W_1 \in \mathbb{R}^{2d_X \times d_X}$ and $b_1 \in \mathbb{R}^{2d_X }$ of the first FC}\\
    $W_1 \leftarrow \begin{bmatrix}
    I_{d_X} \\ 
    - I_{d_X}
    \end{bmatrix}, \quad b_1 \leftarrow \begin{bmatrix}
    - T \\ 
    T 
    \end{bmatrix} $

    \cmt{Configuring the first row of $W_2 \in \mathbb{R}^{d \times 2 d_X}$ and the first entry of $b_2 \in \mathbb{R}^{d }$ of the second FC}\\
        $W_2[1,:] \leftarrow -1_{2d_X}^\top, \quad b_2[1] \leftarrow \tau^\mathcal{D} $
        
    \textbf{Ret} all weights and biases

\caption{$\mathcal{A}^\mathcal{D}_{\mathsf{FC-ATTACK}}(T)$ exploiting fully-connected layer in AMI}
\label{algo:ami_attack_fc}
\end{algorithm}

\begin{algorithm}[ht]
    \SetAlgoLined
    
    \cmt{If the gradient of $b_2[1]$ is non-zero, returns $1$}\\
    \If{$ |\dot{\theta}(b_2[1]) | > 0$}{ \textbf{Ret} $1$}
    \textbf{Ret} $0$

\caption{$\mathcal{A}^\mathcal{D}_{\mathsf{FC-GUESS}}(T,  \dot{\theta})$ exploiting fully-connected layer in AMI}
\label{algo:ami_guess_fc}
\end{algorithm}

\textbf{AMI initialization} 
\(\mathcal{A}^\mathcal{D}_{\mathsf{FC-INIT}}\): The adversary’s model employs fully connected (FC) layers for its first two layers. Given an input \(X \in \mathbb{R}^{d_X}\), the attacker computes \(\text{ReLU}(W_l X + b_l) = \max(0, W_l X + b_l)\), where \(W_l\) and \(b_l\) denote the weights and biases of layer \(l\), respectively. The dimensions of \(W_1\) and \(b_1\) are set to \(2 d_X \times d_X\) and \(2 d_X\), respectively. For the second layer, the attack only analyzes a single output neuron, thus, requiring \(W_2\) to have only \(2 d_X\) columns. We denote the parameters associated with this neuron as \(W_2[1,:]\) and \(b_2[1]\). Models with additional parameters can still works, as surplus parameters can simply be disregarded.

\textbf{AMI attack $\mathcal{A}^\mathcal{D}_{\mathsf{FC-ATTACK}}$:} The weights and biases of the first two FC layers are set as:
 \begin{align}
     W_1 \leftarrow \begin{bmatrix}
    I_{d_X} \\ 
    - I_{d_X}
    \end{bmatrix}, \quad b_1 \leftarrow \begin{bmatrix}
    - T \\ 
    T 
    \end{bmatrix}, \quad 
    W_2[1,:] \leftarrow -1_{d_X}^\top, \quad b_2[1] \leftarrow \tau^\mathcal{D}
 \end{align}
 where \(I_{d_X}\) is the identity matrix and \(1_{d_X}\) is the all-ones vector of size \(d_X\). The hyperparameter \(\tau^\mathcal{D}\) controls the total allowable distance between an input \(X\) and the target \(T\), which can be determined from the distribution statistics. The pseudo-code of the attack is presented in Algo.~\ref{algo:ami_attack_fc}.

\textbf{AMI guess $\mathcal{A}^\mathcal{D}_{\mathsf{FC-GUESS}}$:} In the guessing phase, the AMI server returns \(1\) if the gradient of \(b_2[1]\) is non-zero, and returns \(0\) otherwise. The pseudo-code of this step is shown in Algo.~\ref{algo:ami_guess_fc}.

\subsection{Attention-Based Adversary for AMI in FL}\label{appx:api_self_adv}

We now describe the AMI attention-based adversary  $\mathcal{A}^\mathcal{D}_{\mathsf{Attn}}$ introduced in \cite{vu2024analysis} and discussed in Sect.~\ref{subsect:attn-ami}. The 3 components of it are $\mathcal{A}^\mathcal{D}_{\mathsf{Attn-INIT}}$, $\mathcal{A}^\mathcal{D}_{\mathsf{Attn-ATTACK}}$ and $\mathcal{A}^\mathcal{D}_{\mathsf{Attn-GUESS}}$, detailed as follows:

    \SetKwInput{KwInput}{Hyper-parameters} 
\begin{algorithm}[ht]
    \SetAlgoLined
    \KwInput{$\beta, \gamma \in \mathbb{R}^+$} 
    
     Randomly initialize $W_Q^h, W_K^h, W_V^h, W_O $ and $b_O$ for all heads $h$
    
    Randomly initialize a matrix $W \in \mathbb{R}^{d_X \times d_X}$ 
    \label{line:init}
    
    $W[:,1] \leftarrow v$ \cmt{Set the first column of $W$ to $v$} \label{line:setv}
    
    $Q, R \leftarrow \textup{QR}(W)$ \cmt{QR-factorization $W$}  \label{line:QR}
    
    $W_Q^1 \leftarrow Q[2:d_X]^\top$ \cmt{Set $W_Q^1$ to the last $d_X -1$ rows of $Q^\top$} \label{line:Q_assign}
    
    $W_K^{1} \leftarrow \beta {{W_Q^{1}}^{\dagger \top}} $ \cmt{Set head $1$ to memorization mode} \label{line:pinv1}
    
    $W_K^{2} \leftarrow \beta {{W_Q^{2}}^{\dagger \top}} $ \cmt{Set head $2$ to memorization mode}  \label{line:pinv2}
    
    $W_Q^{3} \leftarrow W_Q^{1}$,  $W_K^{3} \leftarrow W_K^{1}$  \cmt{Copy Head 1 to Head 3}
    
    $W_Q^{4} \leftarrow W_Q^{2}$,  $W_K^{4} \leftarrow W_K^{2}$  \cmt{Copy Head 2 to Head 4}
    
    $W_V^{1} \leftarrow I_{d_X}$,
    $W_V^{2} \leftarrow I_{d_X}$, $W_V^{3} \leftarrow I_{d_X}$, $W_V^{4} \leftarrow I_{d_X}$ \cmt{Setup for detection}
    
    $W_O \leftarrow \begin{bmatrix}
    I_{d_X} & -I_{d_X} & 0_{d_X} & 0_{d_X}\\ 
    0_{d_X} & 0_{d_X} & -I_{d_X} & I_{d_X}
    \end{bmatrix}$ \cmt{Setup for detection} \label{line:setidentity}
    
    ${b_O}_i = - \gamma, \forall i \in \{1,\cdots, d_Y\}$ \cmt{Setup biases}
    
    \textbf{Ret} all weights and biases

\caption{$\mathcal{A}^\mathcal{D}_{\mathsf{Attn-ATTACK}}(v)$ using self-attention in AMI}
\label{algo:api_attack}
\end{algorithm}

\begin{algorithm}[ht]
    \SetAlgoLined
    
    \cmt{If the any gradients of $W^O$ is non-zero, returns $1$}\\
    \If{$ \|\dot{\theta}_1(W^O) \|_{\infty} > 0$}{ \textbf{Ret} $1$}
    \textbf{Ret} $0$

\caption{$\mathcal{A}^\mathcal{D}_{\mathsf{Attn-GUESS}}(v,  \dot{\theta})$ using self-attention in AMI}
\label{algo:api_guess}
\end{algorithm}

  \textbf{AMI initialization $\mathcal{A}^\mathcal{D}_{\mathsf{Attn-INIT}}$:} The model initiated by the dishonest server has self-attention as its first layer. We set the number of attention heads $H$ to 4. The attention dimension is $d_{\textup{attn}} = d_X -1$, where $d_X$ is the one-hot encoding dimension. The hidden and output dimensions are $d_{\textup{hid}} = d_X$ and $d_Y = 2 d_X$, respectively. Any configurations with a higher number of parameters can adopt the proposed attack because the extra parameters can simply be ignored.  

    \textbf{AMI attack $\mathcal{A}^\mathcal{D}_{\mathsf{Attn-ATTACK}}$:} 
    The attack component  $\mathcal{A}^\mathcal{D}_{\mathsf{Attn-ATTACK}}$ determines the self-attention weights including $W_Q^h, W_K^h, W_V^h, W_O $ and $b_O$, where $h$ is the head's index. There are two hyper-parameters, $\beta$ and $\gamma \in \mathbb{R}^+$, in  $\mathcal{A}^\mathcal{D}_{\mathsf{Attn-ATTACK}}$. Intuitively, $\beta$ controls how much the attention heads memorize their input patterns $x_i^h$ and $\gamma$ adjusts a cut-off threshold  deciding between $v\in D$ and $v \notin D$ (depicted in Fig.~\ref{fig:attack_principle}).   Given a target pattern $v$ for detection, the weights of the first attention head is chosen such that:
    \begin{align}
        & {W_K^{1 \top}} W_Q^1 \approx \beta  I_{d_X}  \quad \textup{and} \quad W_Q^1  v \approx 0 \label{eq:api_conds}
    \end{align}
   To enforce condition \eqref{eq:api_conds}, $d_X-1$ vectors orthogonal to $v \in \mathbb{R}^{d_X}$ are assigned to $W_Q^1$ using QR-factorization. Subsequently, $W_K^1$ is defined as the transpose of $\beta {{W_Q^{1 \dagger}}}$, where $\dagger$ represents the pseudo-inverse. In contrast, the second head initializes $W_Q^2$ randomly and sets $W_K^2$ as its pseudo-inverse. As a result, the second condition of \eqref{eq:api_conds} does not hold for $W_Q^2$ and $W_K^2$. The remaining parameters of the two heads are configured such that the first $d_X$ rows of $Y$ compute $\max\{0, Z^1 - Z^2 - \gamma \mathbf{1}^\top\}$. The third and fourth heads are designed to produce the negation of the first and second heads, respectively, meaning the last $d_X$ rows of $Y$ compute $\max\{0, Z^2 - Z^1 - \gamma \mathbf{1}^\top\}$. For simplicity in analysis, $W_V^h$ and $W^O$ are constructed using identity and zero matrices.
   The pseudo-code of the attack is provided in Algo.~\ref{algo:api_attack}.

     \textbf{AMI guess $\mathcal{A}^\mathcal{D}_{\mathsf{Attn-GUESS}}$:} In the guessing phase, the AMI server checks if any of the weights in $W_O$ have non-zero gradients. Algo.~\ref{algo:api_guess} shows the pseudo-code of this step. 


     \textbf{Attack strategy.} 
     We analyze the attack strategy of $\mathcal{A}^\mathcal{D}_{\mathsf{Attn}}$ against unprotected data. The attention-based attacks exploit the memorization capability of the attention layer~\cite{ramsauer2020hopfield}: the attacks determine the layer's weights so that if a pattern $\xi$ similar to a stored pattern $x \in X$ feed to the layer, the returned signal will be similar to the stored pattern $x$. $X$ can be considered as the Key and the Value, while $\xi$ is the Query in the attention mechanism. The memorization is imposed by the condition $ {W_K^{1 \top}} W_Q^1  \approx \beta  I_{d_X} $ in (\ref{eq:api_conds}). For an input $X$ that does not contain the target pattern $v$, we have $\nicefrac{1}{\beta}X^\top {W_K^{1 \top}} W_Q^1 X \approx X^\top X $, which is the matrix of correlations of patterns in $X$. The output of the softmax, therefore, approximates $I_{N_X}$ since the diagonal entries of  $X^\top X $ are significantly larger than the non-diagonal entries. The head's output $Z^1\approx X$, i.e., $z^1_i \approx x_i$, as a result. Since the second head is the same as the first for $x \neq v$, we also have $Z^2\approx x_i$. When $X$ contains the target pattern $v$, i.e., there exists an $x_i$ such that $x_i = v$, due to the second condition of Eq.(\ref{eq:api_conds}), we have $x_i^\top {W_K^{1 \top}} W_Q^1 x_i \approx 0$.  This makes the softmax's output uniform, which can be interpreted as the attention being distributed equally among all patterns. Consequentially, the attention's output of the first head is the pattern's average $\Bar{X}$. Since the second head does not filter $v$, its output approximates $x_i$. By computing the difference between the two heads $|z^1_i - z^2_i |$ and offset it by $\gamma$, the adversary can infer the presence of $v$ in $X$. While the attack strategy of $\mathcal{A}^\mathcal{D}_{\mathsf{Attn}}$ against LDP-protected data follows the same principle, the main difference is the input of the attention heads is now the protected version $x_i^\varepsilon$ instead of $x_i$ like in the case of unprotected data. The introduced noise means that $(x_i^\varepsilon)^\top {W_K^{1 \top}} W_Q^1 x_i^\varepsilon$ might not be $\approx 0$, making the softmax’s output non-uniform, and the attention’s output is no longer the pattern’s average. We analyze in-depth the behavior of $\mathcal{A}^\mathcal{D}_{\mathsf{Attn}}$ under LDP in Appendix~\ref{appx:api_attn_advantage_ldp}.

\section{Vulnerability of FL under LDP to AMI} \label{appx:vul_api}
This appendix provides the details of our theoretical results on AMI in FL under LDP. Appendix~\ref{appx:ami_mlp_ldp} shows the proof of Theorem~\ref{theorem:ami_ldp}, which is about the vulnerability of LDP-protected data against FC-based AMI attack. Appendix~\ref{appx:hopfield} analyzes the memorization capabilities of attention layers. An example demonstrating the impact of LDP mechanisms on the data's separation is provided in Appendix~\ref{appx:ldp_on_data}. Finally, Appendix~\ref{appx:api_attn_advantage_ldp} shows the proof of Theorem~\ref{theorem:api_ldp}, which is about the vulnerability of LDP-protected data against attention-based AMI attack. 

\subsection{Proof of Theorem~\ref{theorem:ami_ldp} on the Vulnerability of LDP-Protected Data to AMI in FL} \label{appx:ami_mlp_ldp}

This appendix provides the proof of Theorem~\ref{theorem:ami_ldp}, which is restated below:
\begin{theorem*} 
Given the security game $\mathsf{Exp}^{\textup{AMI}}_{\textsc{LDP}}$, there exists an AMI adversary $\mathcal{A}^\mathcal{D}_{\mathsf{FC}}$ whose time complexity is $\mathcal{O} (d_X^2)$ such that $   \Adv^{\mathsf{AMI}}_{\textsc{LDP}}(\mathcal{A}^\mathcal{D}_{\mathsf{FC}})  \geq  1 - \frac{n + |\mathcal{X}|-1}{|\mathcal{X}|-1} P_{\mathcal{M}^\varepsilon}$, where $n$ is the size of the dataset $D$, $|\mathcal{X}|$ is the cardinality of the possible output values of the LDP-mechanism and $P_{\mathcal{M}^\varepsilon}$ is the probability that the LDP-mechanism makes the protected version of data point inside the neighborhood of another data point.
\end{theorem*}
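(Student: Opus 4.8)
The plan is to instantiate the FC-based adversary $\mathcal{A}^\mathcal{D}_{\mathsf{FC}}$ of Subsection~\ref{subsect:fc-ami} with threshold $\tau^\mathcal{D} = \Delta^\mathcal{X}$, so that on input $X$ the network computes $z_0 = \max\{\Delta^\mathcal{X} - \|\mathcal{M}^\varepsilon(X) - T\|_{L_1}, 0\}$ and $\mathcal{A}^\mathcal{D}_{\mathsf{FC-GUESS}}$ outputs $1$ exactly when the gradient of $b_2[1]$ is non-zero, i.e. when some protected sample lands in $B_1(T, \Delta^\mathcal{X})$. The first step I would establish is a discretization lemma: since $\Delta^\mathcal{X} = \tfrac12 \min_{X \ne Y} \|X-Y\|_{L_1}$, the balls $\{B_1(Y, \Delta^\mathcal{X})\}_{Y \in \mathcal{X}}$ are pairwise disjoint and each contains exactly one alphabet point, namely its own center. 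Because every LDP output lies in $\mathcal{X}$, the geometric condition collapses to an equality test: $\mathcal{M}^\varepsilon(X) \in B_1(T, \Delta^\mathcal{X})$ iff $\mathcal{M}^\varepsilon(X) = T$. Hence the adversary outputs $1$ precisely when $\mathcal{M}^\varepsilon(X) = T$ for at least one $X \in D$, and in particular $\Pr[\mathcal{M}^\varepsilon(X)\neq X] = P_{\mathcal{M}^\varepsilon}$.

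Next I would rewrite the advantage. From \eqref{eq:advantage_def}, $\Adv^{\mathsf{AMI}}_{\textsc{LDP}}(\mathcal{A}^\mathcal{D}_{\mathsf{FC}}) = 1 - \Pr[b'=0 \mid b=1] - \Pr[b'=1 \mid b=0]$, so it suffices to bound the false-negative and false-positive probabilities separately. For the false negative ($b=1$, hence $T \in D$), the target is itself a member of $D$, so the single event $\mathcal{M}^\varepsilon(T) = T$ already forces a non-zero gradient and a correct guess; therefore $\Pr[b'=0 \mid b=1] \le \Pr[\mathcal{M}^\varepsilon(T) \ne T] = P_{\mathcal{M}^\varepsilon}$. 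For the false positive ($b=0$, hence $T \notin D$ and every $X \in D$ satisfies $X \ne T$), a union bound over the $n$ samples gives $\Pr[b'=1 \mid b=0] \le \sum_{X \in D} \Pr[\mathcal{M}^\varepsilon(X) = T]$.

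The crux is bounding a single term $\Pr[\mathcal{M}^\varepsilon(X) = T]$ for $X \ne T$. The discretization lemma yields the budget identity $\sum_{Y \in \mathcal{X},\, Y \ne X} \Pr[\mathcal{M}^\varepsilon(X) = Y] = \Pr[\mathcal{M}^\varepsilon(X) \ne X] = P_{\mathcal{M}^\varepsilon}$, distributing exactly the mass $P_{\mathcal{M}^\varepsilon}$ over the $|\mathcal{X}|-1$ incorrect symbols. For the symmetric mechanisms considered (GRR, RAPPOR, dBitFlipPM), where every incorrect output is equiprobable, each term equals $P_{\mathcal{M}^\varepsilon}/(|\mathcal{X}|-1)$; more generally one obtains the same bound by an averaging argument over the target symbol, so that $\Pr[\mathcal{M}^\varepsilon(X) = T] \le P_{\mathcal{M}^\varepsilon}/(|\mathcal{X}|-1)$. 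Summing over the $n$ samples gives $\Pr[b'=1 \mid b=0] \le n P_{\mathcal{M}^\varepsilon}/(|\mathcal{X}|-1)$, and combining the two estimates yields $\Adv^{\mathsf{AMI}}_{\textsc{LDP}}(\mathcal{A}^\mathcal{D}_{\mathsf{FC}}) \ge 1 - P_{\mathcal{M}^\varepsilon} - n P_{\mathcal{M}^\varepsilon}/(|\mathcal{X}|-1) = 1 - \tfrac{n + |\mathcal{X}|-1}{|\mathcal{X}|-1} P_{\mathcal{M}^\varepsilon}$, as claimed. The $\mathcal{O}(d_X^2)$ time complexity follows from the two matrix-vector products defining $z_0$.

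I expect the main obstacle to be exactly this per-symbol false-positive bound: the budget identity only pins down the \emph{sum} of the $|\mathcal{X}|-1$ incorrect-output probabilities, so extracting the factor $1/(|\mathcal{X}|-1)$ for a \emph{fixed} target $T$ requires either the equiprobability (symmetry) of the mechanism's error mass or a careful averaging over the distribution of $T$. A genuinely adversarial, non-symmetric mechanism could concentrate its entire error mass on a single decoy symbol and violate the pointwise bound, so the statement should be read as holding under the structural symmetry enjoyed by the standard LDP mechanisms analyzed here.
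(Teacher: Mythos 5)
Your proposal is correct and follows essentially the same route as the paper's proof in Appendix~\ref{appx:ami_mlp_ldp}: the same instantiation with $\tau^\mathcal{D} = \Delta^\mathcal{X}$, the same false-negative bound of $P_{\mathcal{M}^\varepsilon}$ via $\mathcal{M}^\varepsilon(T)$ escaping its own ball, and the same union bound yielding $n P_{\mathcal{M}^\varepsilon}/(|\mathcal{X}|-1)$ for the false positive. The per-symbol step you flag as the crux is resolved in the paper by exactly the averaging argument you describe --- conditioning on $\mathcal{M}^\varepsilon(X) \notin B_1(X,\Delta^\mathcal{X})$ and invoking the randomness of $T$ to spread the escaped mass over the $|\mathcal{X}|-1$ remaining disjoint balls --- so your caveat about mechanisms that concentrate their error mass on a fixed decoy applies equally to the paper's own argument.
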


\begin{proof}
Given \( D = \{X_i\}_{i=1}^n \), where \( X_i \in \mathcal{X} \), and \( \mathcal{X} \subseteq \mathbb{R}^{d_X} \), the LDP-protected version of the input $X$ is $\mathcal{M}^\varepsilon(X)$. For the model specified by $\mathcal{A}^\mathcal{D}_{\mathsf{FC}}$ as discussed in Subsection~\ref{subsect:fc-ami} and described in Appendix~\ref{appx:ami_adv}, its first layer computes:
\begin{align}
    \textup{ReLU} \left( \begin{bmatrix}
    I_{d_X} \\ 
    - I_{d_X}
    \end{bmatrix} \mathcal{M}^\varepsilon(X) + \begin{bmatrix}
    - T \\ 
    T 
    \end{bmatrix}\right)
    =
    \textup{ReLU} \left( \begin{bmatrix}
    \mathcal{M}^\varepsilon(X) - T \\ 
    T - \mathcal{M}^\varepsilon(X) 
    \end{bmatrix}\right)
\end{align}
The first row of the second layer then computes:
\begin{align}
    z_0 := & \textup{ReLU} \left( - \sum_{i=1}^{d_X}\textup{ReLU} \left( \left( x_i^\varepsilon - t_i    \right) +  \textup{ReLU} \left( t_i - x_i^\varepsilon   \right) \right) + \tau^\mathcal{D} \right) 
    =  \max \left \{  \tau^\mathcal{D} - \| \mathcal{M}^\varepsilon(X) - T\|_{L_1}, 0 \right \} \label{eq:z0}
\end{align}
This implies the gradient of $b_2[1] = \tau^\mathcal{D}$ is non-zero if and only if $ \tau^\mathcal{D} > \| \mathcal{M}^\varepsilon(X) - T\|_{L_1} $.  Thus, for a small enough $\tau^\mathcal{D}$,  $T \in D$ is equivalent to a non-zero gradient. We set $\tau^\mathcal{D} = \Delta^\mathcal{X}$. Intuitively, the attack fails if either: (1) $T \notin D$, but $\exists X \in D \textup{ such that } \mathcal{M}(X) \in B_1(T, \Delta^\mathcal{X})$ or (2) $T \in D$, but $\mathcal{M}(X) \notin B_1(T, \Delta^\mathcal{X}) \ \textup{for all } X \in D$. This insight is illustrated in Fig. \ref{fig:balls}.

Given a data $D$ and a randomly sampled point $T \notin D$, the probability that $\mathcal{M^\varepsilon}(X)$ belongs to $B_{1}(T, \Delta^\mathcal{X})$ is upper bounded by:
    \begin{align}
     &\Pr\left[ \mathcal{M}^\varepsilon(X) \in  B_{1}(T, \Delta^\mathcal{X}) \right] = 
     \Pr\left[ \mathcal{M}^\varepsilon(X) \in  B_{1}(T, \Delta^\mathcal{X}) \ \textup{and} \ \mathcal{M}^\varepsilon(X) \notin  B_{1}(X, \Delta^\mathcal{X}) \right] \label{eq:joint_prob} \\
     &= \Pr\left[ \mathcal{M}^\varepsilon(X) \in  B_{1}(T, \Delta^\mathcal{X}) | \mathcal{M}^\varepsilon(X) \notin  B_{1}(X, \Delta^\mathcal{X}) \right] \Pr\left[ \mathcal{M}^\varepsilon(X) \notin  B_{1}(X, \Delta^\mathcal{X}) \right] \label{eq:bayes_rule} \\
     &\leq \frac{1}{|\mathcal{X}|-1} \Pr\left[ \mathcal{M}^\varepsilon(X) \notin  B_{1}(X, \Delta^\mathcal{X}) \right] = \frac{1}{|\mathcal{X}|-1} P_{\mathcal{M}^\varepsilon} \label{eq:uni_ball}
\end{align}
(\ref{eq:joint_prob}) is from the fact that $\mathcal{M}^\varepsilon(X) \in  B_{1}(T, \Delta^\mathcal{X})$ implies $\mathcal{M}^\varepsilon(X) \notin  B_{1}(X, \Delta^\mathcal{X})$ as the balls are disjoint. (\ref{eq:bayes_rule}) is from the conditional probability formula. For (\ref{eq:uni_ball}), given $\mathcal{M}^\varepsilon(X) \notin  B_{1}(X, \Delta^\mathcal{X})$, since all the balls $ B_1(X,\Delta^\mathcal{X})$ for $X \in D$ and $B_1(T,\Delta^\mathcal{X})$ are disjoint, $\mathcal{M}^\varepsilon(X)$ can either be in one of the other $|\mathcal{X}|-1$ balls of radius $\Delta^\mathcal{X}$ or be outside of all those balls. As $T$ are chosen randomly, the probability that $\mathcal{M}^\varepsilon(X)$ is in one of the $|\mathcal{X}|-1 $ balls is bounded by ${1}/{(|\mathcal{X}|-1)}$.

If $b = 0$, the probability that $z_0$ is activated is:
\begin{align}
    \Pr\left[ z_0 > 0 | b = 0\right] &= \Pr\left[ \exists X \in D \textup{ such that } \mathcal{M}(X) \in B_1(T, \Delta^\mathcal{X}) | b = 0\right] \label{eq:use_z0} \\
    & \leq \sum_{X \in D} \frac{1}{|\mathcal{X}|-1} P_{\mathcal{M}^\varepsilon} \leq \frac{n}{|\mathcal{X}|-1} P_{\mathcal{M}^\varepsilon} \label{eq:union_bound}
\end{align}
where (\ref{eq:use_z0}) is from (\ref{eq:z0}) and the inequalities in (\ref{eq:union_bound}) are from the union bound and (\ref{eq:uni_ball}).

On the other hand, if $b = 1$, the probability that $z_0$ is not activated is bounded by:
\begin{align}
    &\Pr\left[ z_0 = 0 | b = 1\right] = \Pr\left[ \mathcal{M}(X) \notin B_1(T, \Delta^\mathcal{X}) \ \textup{for all } X \in D | b = 1  \right]
    \\
    &\leq \Pr\left[ \mathcal{M}(T) \notin B_1(T, \Delta) | b = 1\right]  = \Pr\left[ \mathcal{M}(T) \notin B_1(T, \Delta) \right] = P_{\mathcal{M}^\varepsilon} \label{eq:T_in_D}
\end{align}
where inequality (\ref{eq:T_in_D}) uses the fact that, conditioned on $b=1$, $T\in D$.

Thus, we have the advantage of $\mathcal{A}^\mathcal{D}_{\mathsf{FC}}$:
\begin{align}
    \Adv^{\textup{AMI}}_{\textsc{LDP}}(\mathcal{A}^\mathcal{D}_{\mathsf{FC}}) &= 
     \Pr[b'=1|b=1] + \Pr[b'=0|b=0] - 1 \\
     &= (1 - \Pr[z_0 =0 |b=1]) + (1 - \Pr[z_0 > 0 |b=0])- 1  \\
     &\geq \left(1 - P_{\mathcal{M}^\varepsilon} \right) + \left(1 - \frac{n}{|\mathcal{X}|-1}  P_{\mathcal{M}^\varepsilon} \right) - 1 
      = 1 - \frac{n + |\mathcal{X}|-1}{|\mathcal{X}|-1} P_{\mathcal{M}^\varepsilon} \label{eq:adv_proof}
\end{align}
where (\ref{eq:adv_proof}) uses (\ref{eq:union_bound}) and (\ref{eq:T_in_D}). Since $\mathcal{A}^\mathcal{D}_{\mathsf{FC}}$ can be constructed in $\mathcal{O} (d_X^2)$, we have the Theorem.
\end{proof}

\subsection{Memorization capabilities of Attention layers.} \label{appx:hopfield}

We now state Lemma \ref{lemma:retrieve_bound} bounding the error of the self-attention layer in memorization mode \cite{vu2024analysis}. The Lemma can be considered as a specific case of Theorem 5 of \cite{ramsauer2020hopfield}. In the context of that work, they use the term \textit{for well-separated pattern} in their main manuscript to indicate the condition that the Theorem holds. In fact, the condition (\ref{eq:cond_lemma}) stated in Lemma \ref{lemma:retrieve_bound} is a sufficient condition for that Theorem of~\cite{ramsauer2020hopfield}. 

\begin{lemma}\label{lemma:retrieve_bound}
     Given a data $X$, a constant $\alpha > 0$ large enough such that, for an $x_i \in X$:
     \begin{align}
            \Delta_i \geq \frac{2}{\alpha N_X} + \frac{1}{\alpha} \log (2 (N_X -1) N_X \alpha M^2) \label{eq:cond_lemma}
    \end{align}
    then, for any $\xi$ such that $\| \xi - x_i\| \leq  \frac{1}{\alpha N_X M}$, we have
        \begin{align*}
         \left\| x_i - X \textup{softmax} \left( \alpha X^\top \xi\right) \right\| \leq 2  M (N_X -1) \exp \left( 2/N_X-\alpha  \Delta_i \right)
     \end{align*}
\end{lemma}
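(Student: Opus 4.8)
The plan is to bound the retrieval error directly by showing that the softmax weight vector concentrates almost all of its mass on the index $i$. First I would write $p \coloneqq \textup{softmax}(\alpha X^\top \xi) \in \mathbb{R}^{N_X}$, so that the retrieved pattern is the convex combination $X\,\textup{softmax}(\alpha X^\top \xi) = \sum_{j=1}^{N_X} p_j x_j$. Using $\sum_j p_j = 1$ I would express the error as $x_i - \sum_j p_j x_j = \sum_{j\neq i} p_j(x_i - x_j)$, and then apply the triangle inequality with the crude bound $\|x_i - x_j\| \leq \|x_i\| + \|x_j\| \leq 2M$ to reduce everything to controlling a single scalar:
\[
\bigl\| x_i - X\,\textup{softmax}(\alpha X^\top \xi) \bigr\| \leq 2M\,(1 - p_i).
\]

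The core of the argument is then an exponential bound on $1 - p_i$. I would lower-bound the softmax denominator by its $i$-th term to get $1 - p_i \leq \sum_{j\neq i} \exp\!\bigl(\alpha (x_j - x_i)^\top \xi\bigr)$, and the key estimate is on the exponent. Writing $\xi = x_i + (\xi - x_i)$ and using Cauchy-Schwarz, I would show
\[
(x_j - x_i)^\top \xi = -(x_i^\top x_i - x_i^\top x_j) + (x_j - x_i)^\top(\xi - x_i) \leq -\Delta_i + 2M\,\|\xi - x_i\|,
\]
invoking the definition $\Delta_i = \min_{j\neq i}(x_i^\top x_i - x_i^\top x_j)$. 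The hypothesis $\|\xi - x_i\| \leq \tfrac{1}{\alpha N_X M}$ collapses the last term to $\tfrac{2}{\alpha N_X}$, so each of the $N_X - 1$ summands is at most $\exp(2/N_X - \alpha\Delta_i)$; substituting back reproduces the claimed bound $2M(N_X-1)\exp(2/N_X - \alpha\Delta_i)$ exactly.

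The computation is short, so the conceptual work lies in correctly tracking the two competing scales rather than in any single hard step. The one point I would flag explicitly concerns the role of condition (\ref{eq:cond_lemma}): it is \emph{not} needed for the displayed inequality, which holds for every $\xi$ in the given ball by the concentration argument above. Rather, I would verify that (\ref{eq:cond_lemma}) is exactly the logarithmic rearrangement of $2M(N_X-1)\exp(2/N_X-\alpha\Delta_i)\leq \tfrac{1}{\alpha N_X M}$, so that it guarantees the retrieved pattern again lies in the ball $\{\xi : \|\xi - x_i\| \leq \tfrac{1}{\alpha N_X M}\}$. This self-map property is what links the lemma to the fixed-point statement of Theorem 5 in \cite{ramsauer2020hopfield}, and keeping the constants $2/N_X$, $2M$, and $\tfrac{1}{\alpha N_X M}$ consistent through that rearrangement is the main (mild) obstacle I anticipate.
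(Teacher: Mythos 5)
Your proposal is correct, and every step checks out: the decomposition $x_i - \sum_j p_j x_j = \sum_{j\neq i}p_j(x_i-x_j)$, the reduction to $2M(1-p_i)$, the lower bound of the softmax denominator by its $i$-th term, and the exponent estimate $(x_j-x_i)^\top\xi \leq -\Delta_i + \tfrac{2}{\alpha N_X}$ via the definition of $\Delta_i$ and Cauchy--Schwarz together reproduce the stated constant $2M(N_X-1)\exp(2/N_X-\alpha\Delta_i)$ exactly. The comparison to the paper is slightly unusual here because the paper gives no proof of its own: it simply defers to Lemma 3 of \cite{vu2024analysis} and Theorem 5 of \cite{ramsauer2020hopfield}. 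Your argument is essentially the standard softmax-concentration argument that underlies those cited results, but written out self-contained, which is a strict improvement in completeness over what appears in the paper. Your flagged observation about condition (\ref{eq:cond_lemma}) is also right and worth keeping: the displayed inequality holds for every $\xi$ in the ball without that hypothesis, and (\ref{eq:cond_lemma}) is precisely the rearrangement of $2M(N_X-1)\exp(2/N_X-\alpha\Delta_i)\leq \tfrac{1}{\alpha N_X M}$ that makes the retrieval map send the ball into itself; this is the ``well-separated pattern'' condition the paper alludes to when connecting the lemma to the fixed-point statement in \cite{ramsauer2020hopfield}, and the way the lemma is invoked later (in the proof of Theorem~\ref{theorem:api_ldp}) is consistent with your reading.
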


\begin{proof}
See Lemma 3 \cite{vu2024analysis} for proof or Theorem 5 \cite{ramsauer2020hopfield} for proof of general case.
\end{proof}

Intuitively, Lemma~\ref{lemma:retrieve_bound} claims that, if we have a pattern $\xi$ near $x_i$, $X \textup{softmax} \left( \alpha X^\top \xi\right)$ is exponentially near $\xi$ as a function of $\Delta_i$. Another key remark of the Lemma is that $ \left\| x_i - X \textup{softmax} \left( \alpha X^\top \xi\right) \right\|$ exponentially approaches $0$ as the input dimension increases~\cite{ramsauer2020hopfield}. 

We now consider the iteration $\xi^{\text{new}} = f(\xi) = Xp = X\operatorname{softmax}(\beta X^T \xi)$. We now state Lemma \ref{lemma:jacobian} (Lemma A3 \cite{ramsauer2020hopfield}) that provide the bound on the Jacobian of the fixed point iteration:
\begin{lemma}\label{lemma:jacobian}
The following bound on the norm \( \|J\|_2 \) of the Jacobian of the fixed point iteration \( f \) 
holds independent of \( p \) or the query \( \xi \):
\begin{equation}
\|J\|_2 \leq \beta{m_{\text{max}}^2},
\end{equation}
\end{lemma}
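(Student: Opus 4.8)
The plan is to differentiate $f(\xi) = X\operatorname{softmax}(\beta X^\top \xi)$ explicitly and then recognize the resulting matrix as a scalar multiple of a covariance matrix, whose spectral norm admits a clean bound that does not depend on $p$ or $\xi$. First I would set $p = \operatorname{softmax}(\beta X^\top \xi)$ and $s = \beta X^\top \xi$, so that $\partial s / \partial \xi = \beta X^\top$. Combining this with the standard softmax Jacobian $\partial p / \partial s = \operatorname{diag}(p) - p p^\top$ and the chain rule gives
\begin{equation}
J = \frac{\partial f}{\partial \xi} = \beta\, X\bigl(\operatorname{diag}(p) - p p^\top\bigr) X^\top .
\end{equation}

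The key step is to interpret the inner matrix as a weighted second moment. Writing $m_p := Xp = \sum_{i=1}^{N_X} p_i x_i$ for the softmax-weighted mean of the patterns, a direct expansion of $X\operatorname{diag}(p)X^\top - (Xp)(Xp)^\top$ yields
\begin{equation}
X\bigl(\operatorname{diag}(p) - p p^\top\bigr) X^\top = \sum_{i=1}^{N_X} p_i\,(x_i - m_p)(x_i - m_p)^\top ,
\end{equation}
which is exactly the covariance matrix of the categorical distribution placing mass $p_i$ on pattern $x_i$. Since this matrix is positive semidefinite, its spectral norm equals the largest projected variance: for any unit vector $u$, we have $u^\top\bigl(X(\operatorname{diag}(p)-pp^\top)X^\top\bigr)u = \sum_i p_i\,(u^\top(x_i - m_p))^2$.

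To finish I would bound this variance uniformly in $p$. By Cauchy--Schwarz and $\|u\| = 1$, the projected variance is at most $\sum_i p_i \|x_i - m_p\|^2$. Using that the weighted mean $m_p$ minimizes $c \mapsto \sum_i p_i \|x_i - c\|^2$, I can replace the $\xi$-dependent center $m_p$ by the fixed arithmetic mean $m = \tfrac{1}{N_X}\sum_i x_i$, obtaining $\sum_i p_i \|x_i - m_p\|^2 \le \sum_i p_i \|x_i - m\|^2 \le \max_i \|x_i - m\|^2 = m_{\text{max}}^2$, where the last inequality uses $\sum_i p_i = 1$. Hence $\|X(\operatorname{diag}(p)-pp^\top)X^\top\|_2 \le m_{\text{max}}^2$, and therefore $\|J\|_2 \le \beta\, m_{\text{max}}^2$ as claimed.

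The main obstacle is precisely the requirement that the bound hold \emph{independently} of $p$ and $\xi$: the covariance depends on $\xi$ through the softmax weights, so any bound that keeps the weighted center $m_p$ would inherit this dependence. The crux is the variance-minimization property of the mean, which lets me discard $m_p$ in favor of the fixed constant $m_{\text{max}}$. The only other ingredients are the elementary facts that $\operatorname{diag}(p) - p p^\top$ is positive semidefinite and that the entries of $p$ sum to one, which together convert the weighted sum of squared deviations into the uniform constant $m_{\text{max}}^2$.
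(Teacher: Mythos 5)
Your proof is correct. Note that the paper itself does not prove this lemma at all --- it simply defers to Lemma~A3 of \cite{ramsauer2020hopfield} --- so your argument is a welcome self-contained derivation of the stated bound. Each step checks out: the chain rule gives $J=\beta X(\operatorname{diag}(p)-pp^\top)X^\top$, the variance decomposition identifies the inner matrix as the covariance $\sum_i p_i (x_i-m_p)(x_i-m_p)^\top$, Cauchy--Schwarz bounds the Rayleigh quotient by $\sum_i p_i\|x_i-m_p\|^2$, and the variance-minimizing property of the weighted mean lets you replace the $\xi$-dependent center $m_p$ by the fixed arithmetic mean, yielding the uniform constant $m_{\text{max}}^2$. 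This last step is the one genuinely needed to make the bound independent of $p$ and $\xi$, and it matches the paper's definition of $m^{\varepsilon}_{\text{max}}$ as the maximal deviation of a pattern from the arithmetic mean; the route is essentially the standard one used in the cited reference.
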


\begin{proof}
See Lemma A3 \cite{ramsauer2020hopfield}.
\end{proof}
\subsection{The Separation of LDP-Protected Data} \label{appx:ldp_on_data}

To give an intuition that the LDP mechanism generally increases the separation of the data, we consider the LDP mechanism in which each $r_i$ is i.i.d sampled and independently added to each input pattern. We have the expected value of the separation between two patterns is:
\begin{align}
    &\mathbf{E} \left [ {x_i^\varepsilon}^\top x_i^\varepsilon - {x_i^\varepsilon}^\top x_j^\varepsilon\right ] =
     \mathbf{E} \left [ {(x_i + r_i)}^\top (x_i + r_i) - {(x_i + r_i)}^\top (x_j + r_j)
     \right]\\
     =& \mathbf{E} \left [x_i^\top x_i - x_i^\top x_j \right] + 2 \mathbf{E} \left [ x_i^\top r_i \right ] 
     - \mathbf{E} \left [ r_i^\top x_j\right]
     - \mathbf{E} \left [ r_j^\top x_i\right]
      + \mathbf{E} \left [ r_i^\top r_i\right ]
     - \mathbf{E} \left [ r_i^\top r_j\right ]\\
     =&  \mathbf{E} \left [x_i^\top x_i - x_i^\top x_j \right] + \textup{Var}(r_i)
\end{align}
As we can see, the last expression is the expectation of the separation of the data $D$ plus the variance of the noise. Thus, we can assume that $\Delta^\varepsilon$ resulting from the defense mechanism is to be at least similar to the separation $\Delta$ of the original data $D$.

\subsection{Proof of Theorem~\ref{theorem:api_ldp} on the Vulnerability of LDP-Protected Data to Attention-based AMI in FL} \label{appx:api_attn_advantage_ldp}

We are now ready to state the proof of Theorem~\ref{theorem:api_ldp}. We restate the Theorem below.

\begin{theorem*}  
Given a $\Delta^\varepsilon$-separated data $D^{\mathcal{M}_\varepsilon}$ (the LDP-protected version of the data $D$) with i.i.d patterns of the security game $\mathsf{Exp}^{\textup{AMI}}_{\textsc{LDP}}$, for any $\beta > 0$ large enough such that:
            \begin{align}
            \Delta^\varepsilon \geq \frac{2}{\beta N_X} + \frac{1}{\beta} \log (2 (N_X -1) N_X \beta {M^\varepsilon}^2) 
            \label{eq:delta_conditon}
            \end{align}
            
            then there exists an AMI adversary that exploits the self-attention layer $\mathcal{A}^\mathcal{D}_{\mathsf{Attn}}$ whose complexity is $\mathcal{O} (d_X^3)$ of the threat model $\mathsf{Exp}^{\textup{AMI}}_{\textsc{LDP}}$ such that $\Adv^{\textup{AMI}}_{\textsc{LDP}}(\mathcal{A}^\mathcal{D}_{\mathsf{Attn}})$ is lower bounded by:
            \begin{align} 
                \Adv^{\textup{AMI}}_{\textsc{LDP}}(\mathcal{A}^\mathcal{D}_{\mathsf{Attn}}) \geq P_{\textup{proj}}^{\mathcal{D}^{\mathcal{M}_\varepsilon}}\left( \frac{1}{\beta N_X M^\varepsilon}\right)  
                + P_{\textup{proj}}^{\mathcal{D}^{\mathcal{M}_\varepsilon}}\left( \frac{1}{\beta N_X M^\varepsilon}\right)^{2 n N_X} 
                - P_{\textup{box}}^{\mathcal{D}^{\mathcal{M}_\varepsilon}} \left(3\Bar{\Delta^\varepsilon} + \beta ({m_{max}^\varepsilon})^2 R^\varepsilon \right) - 1 
                \label{eq:adv_ami_ldp}
            \end{align}
where $\Bar{\Delta}^\varepsilon \coloneqq  2 M^\varepsilon (N_X -1) \exp \left( 2/N_X-\beta  \Delta^\varepsilon \right)$ and ${\mathcal{D}^{\mathcal{M}_\varepsilon}}$ is the distribution of the protected data ${D^{\mathcal{M}_\varepsilon}}$ induced by the original data distribution $\mathcal{D}$ and the LDP-mechanism ${\mathcal{M}_\varepsilon}$. $m^\varepsilon_x = \frac{1}{N_X} \sum_{i=1}^{N_X} x^\varepsilon_i$ is the arithmetic mean of all LDP-protected patterns and $m^{\varepsilon}_{\text{max}} = \max_{1 \leq i \leq N_X} \|x_i - m^{\varepsilon}_x\|$. Here, $P_{\textup{proj}}^\mathcal{D^{\mathcal{M}_\varepsilon}}(\delta)$  is the probability that the projected component between two independent patterns drawn from $\mathcal{D^{\mathcal{M}_\varepsilon}}$ is smaller than  $\delta$ and $P_{\textup{box}}^\mathcal{D^{\mathcal{M}_\varepsilon}} (\delta)$ is the probability that a random pattern drawn from $\mathcal{D^{\mathcal{M}_\varepsilon}}$ is in the cube of size $2\delta$ centering at the arithmetic mean of the patterns in $\mathcal{D^{\mathcal{M}_\varepsilon}}$.
\end{theorem*}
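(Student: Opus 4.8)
The plan is to instantiate the adversary $\mathcal{A}^\mathcal{D}_{\mathsf{Attn}}$ described in \appref{appx:api_self_adv} with cut-off threshold $\gamma = 2\Bar{\Delta}^\varepsilon$, and to decompose the advantage via \eqref{eq:advantage_def} as
\[
\Adv^{\textup{AMI}}_{\textsc{LDP}}(\mathcal{A}^\mathcal{D}_{\mathsf{Attn}}) = \Pr[b'=1\mid b=1] + \Pr[b'=0\mid b=0] - 1 = 1 - \Pr[\mathrm{FN}] - \Pr[\mathrm{FP}],
\]
so that it suffices to bound the false-negative probability $\Pr[\mathrm{FN}]$ (the attack reports $v\notin D$ although $v\in D$) and the false-positive probability $\Pr[\mathrm{FP}]$ separately. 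Every head feeds the protected pattern $x_i+r_i$ into a softmax retrieval map $X^\varepsilon\,\textup{softmax}(\beta {X^\varepsilon}^\top\,\cdot\,)$, so \lemref{lemma:retrieve_bound} (applied with $\alpha=\beta$, $M=M^\varepsilon$, $\Delta_i=\Delta^\varepsilon$, which is licensed by the separation hypothesis \eqref{eq:delta_conditon}) is the workhorse: whenever the effective query lies within $1/(\beta N_X M^\varepsilon)$ of a stored pattern, the head returns that pattern up to the error $\Bar{\Delta}^\varepsilon$.

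For the false-positive term I would analyze $b=0$, i.e.\ $v\notin D$. Here head $1$ (and its copy head $3$) strips the $v$-direction from each query $W_Q^1 x_i^\varepsilon$, so the head still retrieves $x_i^\varepsilon$ correctly precisely when the projection of each protected pattern onto $v$ stays below $1/(\beta N_X M^\varepsilon)$; head $2$ retrieves automatically since its query equals the stored pattern. Because $v$ is drawn independently of the data, each such projection is distributed as the projection between two independent draws from $\mathcal{D}^{\mathcal{M}_\varepsilon}$, and imposing the bound simultaneously over both head branches and all $nN_X$ patterns, then using the i.i.d.\ structure, yields the factor $P_{\textup{proj}}^{\mathcal{D}^{\mathcal{M}_\varepsilon}}(1/(\beta N_X M^\varepsilon))^{2nN_X}$. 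On this event $z^1_i$ and $z^2_i$ both lie within $\Bar{\Delta}^\varepsilon$ of $x_i^\varepsilon$, so $\|z^1_i-z^2_i\|_\infty \le 2\Bar{\Delta}^\varepsilon = \gamma$, every gradient of $W_O$ vanishes, and $b'=0$; hence $\Pr[\mathrm{FP}]\le 1 - P_{\textup{proj}}^{\mathcal{D}^{\mathcal{M}_\varepsilon}}(1/(\beta N_X M^\varepsilon))^{2nN_X}$.

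For the false-negative term I would analyze $b=1$ with some $x_k=v$. Filtering forces the head-$1$ query to $r_k - \Bar{r}_k^v$, which is small once its projection onto the patterns is small (this single event contributes the first-power factor $P_{\textup{proj}}^{\mathcal{D}^{\mathcal{M}_\varepsilon}}(1/(\beta N_X M^\varepsilon))$), making the softmax near-uniform so that $z^1_i\approx\Bar{X}^\varepsilon$, while head $2$ retrieves $z^2_i\approx v+r_k$. The attack then fails only when $v+r_k$ lands near the pattern mean $m^\varepsilon_x$. To quantify ``near'', I would add the retrieval error $\Bar{\Delta}^\varepsilon$ to a mean-value-theorem estimate of how far the injected noise displaces the head-$1$ output, bounding the Jacobian of the retrieval map by $\beta({m_{max}^\varepsilon})^2$ through \lemref{lemma:jacobian}; this places the failure region inside the cube of half-width $3\Bar{\Delta}^\varepsilon + \beta({m_{max}^\varepsilon})^2 R^\varepsilon$ about $m^\varepsilon_x$, giving $\Pr[\mathrm{FN}]\le (1-P_{\textup{proj}}^{\mathcal{D}^{\mathcal{M}_\varepsilon}}(1/(\beta N_X M^\varepsilon))) + P_{\textup{box}}^{\mathcal{D}^{\mathcal{M}_\varepsilon}}(3\Bar{\Delta}^\varepsilon + \beta({m_{max}^\varepsilon})^2 R^\varepsilon)$. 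Substituting both bounds into the decomposition reproduces \eqref{eq:adv_ami_ldp}, and the $\mathcal{O}(d_X^3)$ cost is dominated by the QR-factorization used to build $W_Q^1$.

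I expect the main obstacle to be the false-negative analysis. Unlike the noiseless case in \appref{appx:api_self_adv}, the filtered query is $r_k-\Bar{r}_k^v$ rather than exactly $0$, so head $1$ only approximately outputs the mean $\Bar{X}^\varepsilon$: one must simultaneously control the deviation of the softmax from uniform and the propagation of the noise $r_k$ through the nonlinear retrieval map. Converting this into the clean additive threshold $3\Bar{\Delta}^\varepsilon + \beta({m_{max}^\varepsilon})^2 R^\varepsilon$ is exactly where the Lipschitz bound of \lemref{lemma:jacobian} and the choice $\gamma=2\Bar{\Delta}^\varepsilon$ must be reconciled so that the $P_{\textup{box}}^{\mathcal{D}^{\mathcal{M}_\varepsilon}}$ event provably contains all remaining failures.
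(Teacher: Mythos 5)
Your proposal follows essentially the same route as the paper's proof in Appendix~\ref{appx:api_attn_advantage_ldp}: the same adversary with threshold $\gamma = 2\Bar{\Delta}^\varepsilon$, \lemref{lemma:retrieve_bound} for retrieval in the $x_i \neq v$ case, the mean-value-theorem/Jacobian bound of \lemref{lemma:jacobian} for the filtered head when $x_i = v$, and the same false-positive/false-negative decomposition yielding the identical final bound. The one slip is your claim that head~2 ``retrieves automatically'': in the paper head~2 also projects out an independent random direction $u$, and it is the event that each pattern's projection onto $u$ is small --- not the near-uniformity of head~1's softmax --- that supplies both the second factor per pattern in the exponent $2nN_X$ and the single-power $P_{\textup{proj}}^{\mathcal{D}^{\mathcal{M}_\varepsilon}}$ term in the false-negative bound (head~1's deviation from $\Bar{X}^\varepsilon$ is controlled deterministically via $\|r_i\|\leq R^\varepsilon$).
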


\begin{proof}

We represent the victim's dataset as \( D = \{X_i\}_{i=1}^n \), where \( X_i \in \mathcal{X} \), and \( \mathcal{X} \subseteq \mathbb{R}^{d_X \times N_X} \). For any 2-dimensional array \( X \), each column \( x_j \in \mathbb{R}^{d_X} \) is referred to as a \textit{pattern}. The distortion imposed by LDP is modeled by a noise $r_i$ added to each pattern: $X^\varepsilon =\mathcal{M}^\varepsilon(X) =\{ x_i + r_i\}_{i=1}^{N_X} = \{ x_i^\varepsilon\}_{i=1}^{N_X}$. For brevity, we first consider the following notation of the output of one attention head under LDP without the head indexing $h$:
\begin{align}
      X^\varepsilon\textup{softmax} \left( 1/\sqrt{\datt}  (X^\varepsilon)^\top {W_K}^\top W_Q X^\varepsilon \right) \label{eq:one_head_simple}
\end{align}
Notice that we omit $W_V$ because they are all set to identity ( line \ref{line:setidentity} Algo. \ref{algo:api_attack}). 

To show the Theorem, 
we consider the AMI adversary $\mathcal{A}^\mathcal{D}_{\mathsf{Attn}}$ specified in Subsection~\ref{subsect:ami_attn_ldp}. Since $W \in \mathbb{R}^{d_X \times d_X}$ (line \ref{line:init} in Algorithm~\ref{algo:api_attack}) is initialized randomly, it has a high probability of being non-singular, even after assigning $v$ to its first column (line \ref{line:setv} in Algorithm~\ref{algo:api_attack}). For simplicity of analysis, we assume that $W$ has full rank. If this assumption does not hold, we can re-run the two corresponding lines of the algorithm. Similarly, we also assume that all $W_Q^h$ and $W_K^h$ have rank $\datt = d_X - 1$.

For all heads, we set $W_K = \beta(W_Q^\top)^\dagger$ (lines \ref{line:pinv1} and \ref{line:pinv2} in Algorithm~\ref{algo:api_attack}). Consequently, $\frac{1}{\beta} W_K^\top W_Q = W_Q^\dagger W_Q$ is the projection matrix onto the column space of $W_Q^\top$. By defining $[\xi_1, \cdots, \xi_{N_x}] = \Xi = \frac{1}{\beta} {W_K}^\top W_Q X^\varepsilon$, it follows that $\xi_j$ is the projection of the pattern $x_j^\varepsilon$ onto this space.

For head 1 and head 3, as a result of line \ref{line:setv} in Algorithm~\ref{algo:api_attack}, we can express $W = [v, w_2, \cdots, w_{d_X}]$. Based on the QR factorization (line \ref{line:QR}), we have:
\begin{align}
    &QR = [v,w_2,\cdots w_{d_X}] 
    \longrightarrow R = Q^\top [v,w_2,\cdots w_{d_X}]
\end{align}
Since $R$ is an upper triangular matrix, $v$ is orthogonal to all rows $Q_i$ for $i \in \{2, \cdots, d_X\}$ of $Q^\top$. Additionally, due to the assignment at line \ref{line:Q_assign} in Algorithm~\ref{algo:api_attack}, the column space of $W_Q^\top$ is the linear span of $\{Q_i\}_{i=2}^{d_X}$, all of which are orthogonal to $v$. As a result, the difference between $X^\varepsilon$ and $\Xi$ corresponds to the component of $X^\varepsilon$ in the direction of $v$:
\begin{align}
     X^\varepsilon - \Xi^h = [ x_1^\varepsilon - \xi^h_1, \cdots,x_{N_X}^\varepsilon - \xi^h_{N_X} ] = [\textup{Proj}_v(x_1^\varepsilon), \cdots , \textup{Proj}_v(x_{N_X}^\varepsilon)], \quad h \in \{1,3\} 
     \label{eq:proj_v_ldp} 
\end{align}
where $\textup{Proj}_v(x_j^\varepsilon)$ is the component of pattern $x_j^\varepsilon \in \mathbb{R}^{d_X}$ along $v$. 

For head 2 and head 4, although QR-factorization is not performed, $\frac{1}{\beta} W_K^\top W_Q$ for these heads also acts as projection matrices, but onto different column spaces. These spaces are likewise of rank $d_X - 1$, and each omits one direction. Denoting this direction as $u$, we can express the difference between $X^\varepsilon$ and $\Xi$ for these heads as:
\begin{align}
      X^\varepsilon - \Xi^h = [ x_1^\varepsilon - \xi^h_1, \cdots,x_{N_X}^\varepsilon - \xi^h_{N_X} ] = [\textup{Proj}_u(x_1^\varepsilon), \cdots , \textup{Proj}_u(x_{N_X}^\varepsilon)], \quad h \in \{2,4\} 
     \label{eq:proj_u_ldp} 
\end{align}

We now denote $f_{\alpha}: \mathbb{R}^{d_X \times N_x} \rightarrow \mathbb{R}^{d_X \times N_x}$ as:
\begin{align}
    \Xi' = f_{\alpha}(\Xi) =  X^\varepsilon \textup{softmax} \left(\alpha  (X^\varepsilon)^\top \Xi  \right)
\end{align}
For brevity, we also abuse the notation and write $ \xi' = f_{\alpha}(\xi) =  X^\varepsilon \textup{softmax} \left(\alpha  (X^\varepsilon)^\top \xi  \right)$ for $\xi$ and $\xi' \in \mathbb{R}^{d_X}$.

With this, the output of the layer before the ReLU activation can be expressed as:
\begin{align}
Z =& \begin{bmatrix}
  f_{\beta}(\Xi^1) - f_{\beta}(\Xi^2) - \gamma 1^\top \\ 
  f_{\beta}(\Xi^4) - f_{\beta}(\Xi^3) - \gamma 1^\top
\end{bmatrix} =
\begin{bmatrix}
  f_{\beta}(\Xi^1) - f_{\beta}(\Xi^2) - \gamma 1^\top \\ 
  f_{\beta}(\Xi^2) - f_{\beta}(\Xi^1) - \gamma 1^\top
\end{bmatrix}
\\
=&
\begin{bmatrix}
  f_{\beta}(\xi_1^1) - f_{\beta}(\xi_1^2) - \gamma, & \cdots ,&
  f_{\beta}(\xi_{N_X}^1) - f_{\beta}(\xi_{N_X}^2) - \gamma\\ 
  f_{\beta}(\xi_1^2) - f_{\beta}(\xi_1^1) - \gamma, & \cdots ,&
  f_{\beta}(\xi_{N_X}^2) - f_{\beta}(\xi_{N_X}^1) - \gamma
\end{bmatrix}
\end{align}
where $\beta = \beta / \sqrt{\datt}$ and $\gamma$ is defined as in Algorithm~\ref{algo:api_attack}. From the above expressions, it follows that $Z$ has non-zero entries if and only if:
\begin{align}
    \exists i \textup{ such that }  \| f_{\beta}(\xi^1_i) - f_{\beta}(\xi^2_i) \|_\infty > \gamma \label{eq:gamma}
\end{align}
Note that heads 3 and 4 are used to handle cases where the entries of $f_{\beta}(\xi^1_i)$ are smaller than those in $f_{\beta}(\xi^2_i)$. The condition \eqref{eq:gamma} can also be rewritten as:
\begin{align}
     \| f_{\beta}(\Xi^1) - f_{\beta}(\Xi^2) \|_\infty > \gamma
\end{align}

For a given pattern $x_i$, we now examine two cases: $x_i \neq v$ and $x_i = v$.

\textbf{Case 1.}  For a pattern $x_i \in X$ such that $x_i \neq v$, from Lemma~\ref{lemma:retrieve_bound}, we have
\begin{align}
    \left\| x_i^\varepsilon - f_{\beta}(\xi^1_i) \right \|  &\leq  2  M^\varepsilon (N_X -1) \exp \left( 2/N_X-\beta  \Delta_i^\varepsilon \right) \\
    \left\| x_i^\varepsilon - f_{\beta}(\xi^2_i) \right \|  &\leq  2  M^\varepsilon (N_X -1) \exp \left( 2/N_X-\beta  \Delta_i^\varepsilon \right)
\end{align}
when $\|x_i^\varepsilon - \xi^1_i \| = \|\textup{Proj}_v(x_i^\varepsilon)\| \leq 1/(\beta N_X M^\varepsilon)$ and $\|\textup{Proj}_u(x_i^\varepsilon)\| \leq 1/(\beta N_X M^\varepsilon) $, respectively. Note that it is necessary to select a sufficiently large $\beta$ to ensure that Lemma \ref{lemma:retrieve_bound} holds. We denote these events by $A_i^1$ and $A_i^2$.

Using triangle inequality, we have:
\begin{align}
    \| f_{\beta}(\xi^1_i) - f_{\beta}(\xi^2_i) \| 
    \leq & \left\|  f_{\beta}(\xi^1_i) - x_i^\varepsilon \right \| + \left\| x_i^\varepsilon - f_{\beta}(\xi^2_i) \right \| 
    \leq 4  M^\varepsilon (N_X -1) \exp \left( 2/N_X-\beta  \Delta_i^\varepsilon \right)  \coloneqq  2\Bar{\Delta}_i^\varepsilon
\end{align}
with a probability of $\Prob{A_i^1 \cap A_i^2}$. Here, we define $\Bar{\Delta}_i^\varepsilon \coloneqq 2 M^\varepsilon (N_X - 1) \exp \left( 2/N_X - \beta \Delta_i^\varepsilon \right)$. We further relax the inequality using the infinity-norm, which bounds the maximum absolute difference in the pattern's feature:
\begin{align}
     \| f_{\beta}(\xi^1_i) - f_{\beta}(\xi^2_i) \|_\infty \leq 2 \Bar{\Delta}_i^\varepsilon
\end{align}
Since the data point $X^\varepsilon$ is $\Delta^\varepsilon$-separated, i.e., $\Delta^\varepsilon \leq \Delta_i^\varepsilon$, we further obtain:
\begin{align}
     \| f_{\beta}(\xi^1_i) - f_{\beta}(\xi^2_i) \|_\infty \leq 2 \Bar{\Delta^\varepsilon}
\end{align}
where $\Bar{\Delta^\varepsilon} \coloneqq  2 M^\varepsilon (N_X -1) \exp \left( 2/N_X-\beta  \Delta^\varepsilon \right)$.

We now analyze the event $A^1_i$. Essentially, this event occurs when the component of $x_i^\varepsilon$ along $v$ is smaller than a constant determined by the data distribution $\mathcal{D}$. Moreover, since both $v$ and $x_i$ are independently drawn from the distribution (as specified in the experiment $\mathsf{Exp}^{\textup{AMI}}_{\textsc{LDP}}$ (Fig. \ref{fig:AMI_LDP})), and $r_i$ is the random noise introduced by the LDP mechanism, $v$ and $x_i^\varepsilon$ can be treated as two random patterns sampled from the input distribution. Consequently, the probability of $A^i_1$ corresponds to the probability that the projected component between two random patterns is less than or equal to $\frac{1}{\beta N_X M^\varepsilon}$. Formally, for an input distribution ${\mathcal{D}^{\mathcal{M}_\varepsilon}}$, we denote $P_{\textup{proj}}^{\mathcal{D}^{\mathcal{M}_\varepsilon}}(\delta)$ as the probability that the projected component between any independent patterns drawn from $\mathcal{D}^{\mathcal{M}_\varepsilon}$ is at most $\delta$. We then have:
\begin{align}
    \Prob{A_i^1 } = \Prob{\|\textup{Proj}_v(x_i^\varepsilon)\| \leq 1/(\beta N_X M^\varepsilon)}=P_{\textup{proj}}^\mathcal{D^{\mathcal{M}_\varepsilon}}\left( \frac{1}{\beta N_X M^\varepsilon}\right)
\end{align}
by definition. Similarly, for $A_i^2$, we have:
\begin{align}
    \Prob{A_i^2 } = \Prob{\|\textup{Proj}_u(x_i^\varepsilon)\| \leq 1/(\beta N_X M^\varepsilon)}=P_{\textup{proj}}^\mathcal{D^{\mathcal{M}_\varepsilon}}\left( \frac{1}{\beta N_X M^\varepsilon}\right)
\end{align}
Since $v$ and $u$ are independent, we obtain:
\begin{align}
     \Prob{A_i^1 \cap A_i^2 } \geq P_{\textup{proj}}^{\mathcal{D}^{\mathcal{M}_\varepsilon}}\left( \frac{1}{\beta N_X M^\varepsilon}\right)^2
\end{align}

\textbf{Case 2.} On the other hand, when $x_i = v$, we have the output of head 1 is:
\begin{align}
    & X^\varepsilon \textup{softmax} \left( \beta {X^\varepsilon}^\top \xi_i^1 \right) \\
    = & X^\varepsilon \textup{softmax} \left(\beta {X^\varepsilon}^\top \left(  x_i^\varepsilon - \textup{Proj}_v( x_i^\varepsilon) \right) \right) \\
    = & X^\varepsilon \textup{softmax} \left( \beta  {X^\varepsilon}^\top \left(  v + r_i - \textup{Proj}_v( v + r_i) \right) \right)  \\
    = & X^\varepsilon \textup{softmax} \left( \beta  {X^\varepsilon}^\top \left(  r_i -\Bar{r}^v_i \right) \right)
\end{align}

Thus, the difference between the output of head 1 with $\bar{X}^\varepsilon \coloneqq \frac{1}{N_x} \sum_{j=1}^{N_X} x_j^\varepsilon$ can be bounded by:
\begin{align}
    =& \left \| X^\varepsilon \textup{softmax} \left( \beta {X^\varepsilon}^\top \xi_i^1 \right)  - \bar{X}^\varepsilon \right \| \\
    =& \left \| X^\varepsilon \textup{softmax} \left( \beta {X^\varepsilon}^\top \xi_i^1 \right)  - X^\varepsilon \textup{softmax} \left( \beta {X^\varepsilon}^\top 0 \right) \right \| \\
    = & \left \| X^\varepsilon \textup{softmax} \left( \beta {X^\varepsilon}^\top \left(  r_i -\Bar{r}^v_i \right) \right) - X^\varepsilon \textup{softmax} \left( \beta {X^\varepsilon}^\top 0 \right) \right \| \\
    \leq & \left \|  \max_\xi \frac{\partial f_{\beta}(\xi)}{\partial \xi} \right \| \| r_i -\Bar{r}^v_i \|  \leq \left \|  \max_\xi \frac{\partial f_{\beta}(\xi)}{\partial \xi} \right \| \| r_i\| 
    \leq \beta ({m_{max}^\varepsilon})^2 R^\varepsilon 
\end{align}
where the inequalities are due to the mean value theorem (Lemma A32~\cite{ramsauer2020hopfield}) and from the fact that the Jacobian of $f_{\beta}$, i.e., $ \frac{\partial f_{\beta}(\xi)}{\partial \xi}$, is bounded by $\beta ({m_{max}^\varepsilon})^2$ (Lemma \ref{lemma:jacobian}). Thus, from triangle inequality, we have
\begin{align}
    &\left \| f_{\beta}(\xi^1_i) - f_{\beta}(\xi^2_i) \right \|_{\infty}  = 
    \left \| f_{\beta}(\xi^1_i) - \bar{X}^\varepsilon + \bar{X}^\varepsilon - v - r_i + v + r_i -
    f_{\beta}(\xi^2_i) \right \|_{\infty}
    \\
    \geq& 
      \left \| \Bar{X}^\varepsilon - v - r_i \right  \|_{\infty} - 
       \left \|  f_{\beta}(\xi^1_i) - \bar{X}^\varepsilon  \right  \|_{\infty} -
     \left \| v + r_i - f_{\beta}(\xi^2_i)  \right \|_{\infty} \\
     \geq&\left \| \Bar{X}^\varepsilon - v - r_i \right  \|_{\infty} - \beta ({m_{max}^\varepsilon})^2 R^\varepsilon  - \Bar{\Delta}_i^\varepsilon
     \label{eq:box_bound_ldp}
\end{align}
when $A^2_i$ happens (so that $\Bar{\Delta}_i^\varepsilon \geq \left \| v + r_i - f_{\beta}(\xi^2_i)  \right \|_{\infty} $), whose probability is $P_{\textup{proj}}^{\mathcal{D}^{\mathcal{M}_\varepsilon}}\left( \frac{1}{\beta N_X M^\varepsilon}\right)$.

We have the probability that $\left\| f_{\beta}(\xi^1_i) - f_{ \beta}(\xi^2_i) \right\|_{\infty} > 2\Bar{\Delta}^\varepsilon$ is bounded by:
\begin{align}
    &\Prob{ \left\| f_{\beta}(\xi^1_i) - f_{\beta}(\xi^2_i) \right\|_{\infty} > 2\Bar{\Delta}^\varepsilon } \\
    =& 
    1 - \Prob{ \left\| f_{\beta}(\xi^1_i) - f_{\beta}(\xi^2_i) \right\|_{\infty} \leq 2\Bar{\Delta}^\varepsilon } \\
    =& 
    1 - \Prob{ \left\| f_{\beta}(\xi^1_i) - f_{\beta}(\xi^2_i) \right\|_{\infty} \leq 2\Bar{\Delta}^\varepsilon | A_i^2} \Prob{  A_i^2 } \nonumber \\ 
    -&  \Prob{ \left\| f_{\beta}(\xi^1_i) - f_{\beta}(\xi^2_i) \right\|_{\infty} \leq 2\Bar{\Delta}^\varepsilon | \neg A_i^2 } \Prob{ \neg A_i^2 } \\
    \geq&  1 - \Prob{ \left\| f_{\beta}(\xi^1_i) - f_{\beta}(\xi^2_i) \right\|_{\infty} \leq 2\Bar{\Delta}^\varepsilon | A_i^2}  - \Prob{ \neg A_i^2 }
    \\
    \geq&    1 - \Prob{  \left \|\Bar{X}^\varepsilon - v - r_i \right  \|_{\infty} - \beta ({m_{max}^\varepsilon})^2 R^\varepsilon  - \Bar{\Delta}^\varepsilon \leq 2\Bar{\Delta}^\varepsilon | A_i^2 } - \Prob{ \neg A_i^2 } 
    \label{eq:use_box_bound_ldp}\\
   =&    1 -  \Prob{ \left\|  \Bar{X}^\varepsilon - v - r_i   \right\|_{\infty} \leq \beta ({m_{max}^\varepsilon})^2 R^\varepsilon  + 3\Bar{\Delta}^\varepsilon } - \Prob{ \neg A_i^2 } \label{eq:drop_cond_ldp}\\
    =& P_{\textup{proj}}^{\mathcal{D}^{\mathcal{M}_\varepsilon}}\left( \frac{1}{\beta N_X M^\varepsilon}\right) - \Prob{v + r_i \in \textup{Box}\left(\Bar{X}^\varepsilon, \beta ({m_{max}^\varepsilon})^2 R^\varepsilon  + 3\Bar{\Delta}^\varepsilon \right) } \label{eq:box_ldp}
\end{align}
where $\textup{Box}(x,\delta )$ is the cube of size $2\delta$ centering at $x$. The inequality (\ref{eq:use_box_bound_ldp}) is due to (\ref{eq:box_bound_ldp}) and (\ref{eq:drop_cond_ldp}) is from the fact that $u$ is independent from $X$ and $v$. 

We now consider $\Prob{v + r_i \in \textup{Box}\left(\Bar{X}^\varepsilon, \beta ({m_{max}^\varepsilon})^2 R^\varepsilon  + 3\Bar{\Delta}^\varepsilon \right)}$, which is the probability that the pattern $v+r_i$ belongs to the cube of size $2\beta ({m_{max}^\varepsilon})^2 R^\varepsilon  + 6\Bar{\Delta}^\varepsilon$ around the sampled mean of the LDP protected patterns in $X^\varepsilon$. We denote $P_{\textup{box}}^\mathcal{D^{\mathcal{M}_\varepsilon}} (\delta)$ is the probability that a random pattern drawn from $\mathcal{D^{\mathcal{M}_\varepsilon}}$ is in the cube of size $2\delta$ centering at the arithmetic mean of the patterns in $\mathcal{D^{\mathcal{M}_\varepsilon}}$. If the length $N_X$ of $X$ is large enough, we have the sampled mean $\Bar{X}^\varepsilon$ is near the arithmetic mean of the patterns and obtain $\Prob{v + r_i \in \textup{Box}\left(\Bar{X}^\varepsilon, \beta ({m_{max}^\varepsilon})^2 R^\varepsilon  + 3\Bar{\Delta}^\varepsilon \right)} \approx P_{\textup{box}}^\mathcal{D^{\mathcal{M}_\varepsilon}}(\beta ({m_{max}^\varepsilon})^2 R^\varepsilon  + 3\Bar{\Delta}^\varepsilon) $.

\textbf{Back to main analysis.} From the analysis of the two cases, if $v \notin X$, we have:
\begin{align*}
\Prob{ \| f_{\beta}(\xi^1_i) - f_{\beta}(\xi^2_i) \|_\infty \leq 2 \Bar{\Delta}} \geq P_{\textup{proj}}^\mathcal{D^{\mathcal{M}_\varepsilon}}\left( \frac{1}{\beta N_X M^\varepsilon}\right)^2, \quad \forall i \in \{1, \cdots, N_X \}
\end{align*}
Based on the analysis of the two cases, when \( v \) is not an element of \( X \), it follows that:
\begin{align*}
    \Prob{ \| f_{\beta}(\Xi^1) - f_{\beta}(\Xi^2) \|_\infty \leq 2 \Bar{\Delta}} 
    =
    \prod_{i=1}^{N_X}\Prob{ \| f_{\beta}(\xi^1_i) - f_{\beta}(\xi^2_i) \|_\infty \leq 2 \Bar{\Delta}} \geq P_{\textup{proj}}^\mathcal{D^{\mathcal{M}_\varepsilon}}\left( \frac{1}{\beta N_X M^\varepsilon}\right)^{2 N_X}
\end{align*}

Since the data points in \( D \) are sampled independently, if \( v \) does not appear in \( D \), we obtain:
\begin{align*}
     \Prob{ \| f_{\beta}(\Xi^1) - f_{\beta}(\Xi^2) \|_\infty \leq 2 \Bar{\Delta} \ \textup{ for all } X \in D} \geq  P_{\textup{proj}}^\mathcal{D^{\mathcal{M}_\varepsilon}}\left( \frac{1}{\beta N_X M^\varepsilon}\right)^{2 n N_X}
\end{align*}

On the other hand, if $v \in X$, we have:
\begin{align*}
    &\exists i \in \{1, \cdots, N_X \} \textup{ such that } \Prob{ \| f_{\beta}(\xi^1_i) - f_{\beta}(\xi^2_i) \|_\infty > 2 \Bar{\Delta}} \geq   1- P_{\textup{box}}^\mathcal{D^{\mathcal{M}_\varepsilon}}(\beta ({m_{max}^\varepsilon})^2 R^\varepsilon  + 3\Bar{\Delta}^\varepsilon) \\
    \Rightarrow & \Prob{ \| f_{\beta}(\Xi^1) - f_{\beta}(\Xi^2) \|_\infty > 2 \Bar{\Delta}} 
    \geq   \Prob{ \| f_{\beta}(\Xi^1) - f_{\beta}(\Xi^2) \|_\infty \leq 2 \Bar{\Delta} \ \textup{ for all } X \in D} 
    \\
    &\geq  P_{\textup{proj}}^\mathcal{D^{\mathcal{M}_\varepsilon}}\left( \frac{1}{\beta N_X M^\varepsilon}\right)  - P_{\textup{box}}^\mathcal{D^{\mathcal{M}_\varepsilon}}(\beta ({m_{max}^\varepsilon})^2 R^\varepsilon  + 3\Bar{\Delta}^\varepsilon) 
\end{align*}

Thus, if pattern $v$ appears in $D$, we have:
\begin{align*}
    \Prob{ \exists X \in D \textup{ such that } \| f_{\beta}(\Xi^1) - f_{\beta}(\Xi^2) \|_\infty > 2 \Bar{\Delta}} 
    \geq   P_{\textup{proj}}^\mathcal{D^{\mathcal{M}_\varepsilon}}\left( \frac{1}{\beta N_X M^\varepsilon}\right)  - P_{\textup{box}}^\mathcal{D^{\mathcal{M}_\varepsilon}}(\beta ({m_{max}^\varepsilon})^2 R^\varepsilon  + 3\Bar{\Delta}^\varepsilon)
\end{align*}

By choosing $\gamma = 2 \Bar{\Delta}^\varepsilon$, we have the probability that the adversary wins is:
\begin{align}
    P_{W} =& \Prob{v \in D} \Prob{\|\dot{\theta}_1(W^O) \|_{\infty} > 0 | v \in  D}
    +
    \Prob{v \notin D} \Prob{\|\dot{\theta}_1(W^O) \|_{\infty} = 0 | v \notin D} \\
    =& \frac{1}{2} \Prob{ \exists X \in D \textup{ such that } \| f_{\beta}(\Xi^1) - f_{\beta}(\Xi^2) \|_\infty > 2 \Bar{\Delta}| v \in  D} \nonumber \\
    + & \frac{1}{2} \Prob{ \| f_{\beta}(\Xi^1) - f_{\beta}(\Xi^2) \|_\infty \leq 2 \Bar{\Delta} \ \textup{ for all } X \in D | v \notin  D} \\
    \geq & \frac{1}{2} \left( P_{\textup{proj}}^\mathcal{D^{\mathcal{M}_\varepsilon}}\left( \frac{1}{\beta N_X M^\varepsilon}\right)  - P_{\textup{box}}^\mathcal{D^{\mathcal{M}_\varepsilon}}(\beta ({m_{max}^\varepsilon})^2 R^\varepsilon  + 3\Bar{\Delta}^\varepsilon) \right ) +
    \frac{1}{2} P_{\textup{proj}}^\mathcal{D^{\mathcal{M}_\varepsilon}}\left( \frac{1}{\beta N_X M^\varepsilon}\right)^{2 n N_X}
\end{align}

Thus, the advantage of the adversary $\mathcal{A}^\mathcal{D}_{\mathsf{Attn}}  $  in Algo.~\ref{algo:api_attack} can be lower bounded by:
\begin{align}
    \Adv^{\textup{AMI}}_{\textsc{LDP}}(\mathcal{A}^\mathcal{D}_{\mathsf{Attn}}) = 2 P_W  - 1 \geq P_{\textup{proj}}^\mathcal{D^{\mathcal{M}_\varepsilon}}\left( \frac{1}{\beta N_X M^\varepsilon}\right) + P_{\textup{proj}}^\mathcal{D^{\mathcal{M}_\varepsilon}}\left( \frac{1}{\beta N_X M^\varepsilon}\right)^{2 n N_X} - P_{\textup{box}}^\mathcal{D^{\mathcal{M}_\varepsilon}}(\beta ({m_{max}^\varepsilon})^2 R^\varepsilon  + 3\Bar{\Delta}^\varepsilon) - 1
    \label{eq:proj_minus_box}
\end{align}

Since the complexity of the adversary $\mathcal{A}^\mathcal{D}_{\mathsf{Attn}}$ is $\mathcal{O}(d_X^3)$ (determined by lines~\ref{line:pinv1} and~\ref{line:pinv2}, Algo.~\ref{algo:api_attack}), we have the Theorem \ref{theorem:api_ldp}.
\end{proof}
\section{Proof of lower bound of $   \Adv^{\textup{AMI}}_{\textsc{GRR-LDP}}(\mathcal{A}^\mathcal{D}_{\mathsf{FC}})$} \label{app:lower-bound-grr}
\begin{theorem} \label{theorem:ami_grr}
There exists an AMI adversary $\mathcal{A}^\mathcal{D}_{\mathsf{FC}}$ against data protected by Generalized Randomized Response (GRR) LDP algorithm  whose time complexity is $\mathcal{O} (d_X^2)$ of the threat model $\mathsf{Exp}^{\textup{AMI}}_{\textsc{GRR-LDP}}$ such that $   \Adv^{\textup{AMI}}_{\textsc{GRR-LDP}}(\mathcal{A}^\mathcal{D}_{\mathsf{FC}}) \geq \frac{e^\varepsilon  -n  }{e^\varepsilon + |\mathcal{X}| - 1 }$.\\
\end{theorem}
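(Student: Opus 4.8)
The plan is to obtain the GRR-specific bound as a direct instantiation of \thref{theorem:ami_ldp}, so the only real work is to evaluate $P_{\mathcal{M}^\varepsilon}$ for the Generalized Randomized Response mechanism and substitute it into the general lower bound $\Adv^{\mathsf{AMI}}_{\textsc{LDP}}(\mathcal{A}^\mathcal{D}_{\mathsf{FC}}) \geq 1 - \frac{n+|\mathcal{X}|-1}{|\mathcal{X}|-1}\,P_{\mathcal{M}^\varepsilon}$. First I would recall that GRR on an alphabet $\mathcal{X}$ with $|\mathcal{X}| = k$ reports the true symbol with probability $p = \frac{e^\varepsilon}{e^\varepsilon + k - 1}$ and each of the other $k-1$ symbols with probability $q = \frac{1}{e^\varepsilon + k - 1}$; these are precisely the probabilities for which $\varepsilon$-LDP holds with equality on the extreme pair of outputs, so the mechanism is a legitimate $\varepsilon$-LDP map.

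The key step is a small geometric observation. Since $\Delta^\mathcal{X}$ is defined as half the minimum pairwise $L_1$-distance over $\mathcal{X}$, every symbol $Y \neq X$ satisfies $\|X - Y\|_{L_1} \geq 2\Delta^\mathcal{X} > \Delta^\mathcal{X}$, so the ball $B_{1}(X,\Delta^\mathcal{X})$ contains no alphabet point other than $X$ itself. Consequently the event $\mathcal{M}^\varepsilon(X) \notin B_{1}(X,\Delta^\mathcal{X})$ coincides exactly with the event $\mathcal{M}^\varepsilon(X) \neq X$, and therefore
\[
P_{\mathcal{M}^\varepsilon} = 1 - p = \frac{|\mathcal{X}|-1}{e^\varepsilon + |\mathcal{X}|-1}.
\]

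Substituting this value into the general bound gives
\[
\Adv^{\textup{AMI}}_{\textsc{GRR-LDP}}(\mathcal{A}^\mathcal{D}_{\mathsf{FC}}) \geq 1 - \frac{n+|\mathcal{X}|-1}{|\mathcal{X}|-1}\cdot\frac{|\mathcal{X}|-1}{e^\varepsilon+|\mathcal{X}|-1} = 1 - \frac{n+|\mathcal{X}|-1}{e^\varepsilon+|\mathcal{X}|-1},
\]
and collecting terms over the common denominator yields $\frac{e^\varepsilon - n}{e^\varepsilon + |\mathcal{X}| - 1}$, as claimed. The $\mathcal{O}(d_X^2)$ time complexity is inherited verbatim from the construction of $\mathcal{A}^\mathcal{D}_{\mathsf{FC}}$ already used in \thref{theorem:ami_ldp}, since we reuse the same adversary with $\tau^\mathcal{D} = \Delta^\mathcal{X}$.

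The only genuine subtlety—and hence the step I would verify most carefully—is the geometric claim that $B_{1}(X,\Delta^\mathcal{X})$ isolates $X$ among the alphabet points. This is where the factor of $1/2$ in the definition of $\Delta^\mathcal{X}$ is essential: with a larger radius the ball could capture a neighbouring symbol, the equivalence $\{\mathcal{M}^\varepsilon(X)\notin B_{1}(X,\Delta^\mathcal{X})\} = \{\mathcal{M}^\varepsilon(X)\neq X\}$ would fail, and $P_{\mathcal{M}^\varepsilon}$ would cease to equal $1-p$. I would also note that when $n \geq e^\varepsilon$ the stated bound is non-positive and hence vacuous, which is consistent with the general theorem: once the dataset is large relative to the privacy level, the union-bound over false-collision events (the term scaled by $n/(|\mathcal{X}|-1)$) dominates and the guarantee degrades accordingly.
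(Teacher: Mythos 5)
Your proposal is correct and follows essentially the same route as the paper: instantiate the general lower bound of \thref{theorem:ami_ldp} with $P_{\mathcal{M}^\varepsilon}$ computed for GRR, namely $P_{\mathcal{M}^\varepsilon} = \frac{|\mathcal{X}|-1}{e^\varepsilon+|\mathcal{X}|-1}$, and simplify. The only difference is that you explicitly justify why $P_{\mathcal{M}^\varepsilon} = 1-p$ via the observation that $B_1(X,\Delta^{\mathcal{X}})$ contains no alphabet point other than $X$, a step the paper asserts without comment; this is a welcome clarification rather than a deviation.
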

\begin{figure}[h]
    \centering
    \includegraphics[width=0.45\textwidth]{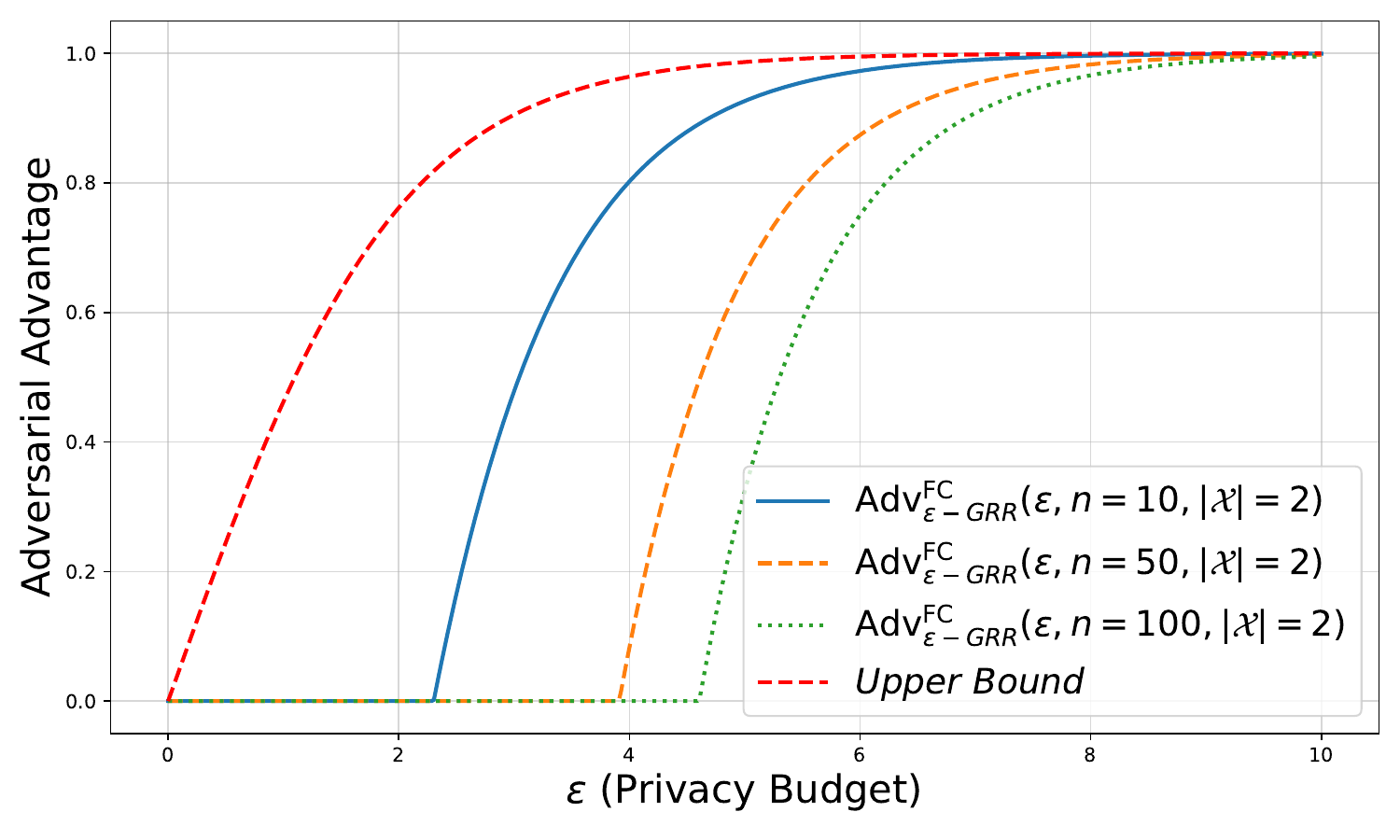} %
    \caption{Visualization of the theoretical upperbound (Theorem. \ref{theorem:ami_upperbound}) and lower bound of $\Adv^{\textup{AMI}}_{\textsc{GRR-LDP}}(\mathcal{A}^\mathcal{D}_{\mathsf{FC}})$. (Theorem. \ref{theorem:ami_grr})} 
    \label{fig:grrldp}
\end{figure}
\textbf{Generalized Randomized Response (GRR).} Given an user with a value 
$v\in \mathcal{X}$. A random
variable, denoted by $\hat{X}$, represents the response of the user on a value $x$ also in $ \mathcal{X}$. The generalized randomized
response works as follows:
\begin{align}
    \textup{Pr}\left[ \hat{X} = v\right] = \left\{\begin{matrix}
\frac{e^\varepsilon}{e^\varepsilon+d-1}, \ \textup{if } x = v
\\ 
\frac{1}{e^\varepsilon+d-1}, \ \textup{if } x \neq v
\end{matrix}\right.
\end{align}
where $d := | \mathcal{X}|$. Thus, we have $P_{\mathcal{M}_{GRR}^\varepsilon} = \frac{d-1}{e^\varepsilon+d-1}$.

From (\ref{eq:adv_proof}), we have
\begin{align}
    \Adv^{\textup{AMI}}_{\textsc{GRR-LDP}}(\mathcal{A}^\mathcal{D}) &\geq 1 - \frac{n + d -1}{d-1} P_{\mathcal{M}_{GRR}^\varepsilon} = 1 - \frac{n +d-1}{d-1} \frac{d-1}{e^\varepsilon+d-1} \\
    &= \frac{e^\varepsilon + d -1 -n - d + 1}{e^\varepsilon + d - 1 } = \frac{e^\varepsilon  -n  }{e^\varepsilon + d - 1 }
\end{align}
It is clear that this advantage is smaller than the upper bound (\ref{eq:upperDP}):
\begin{align}
     \frac{e^\varepsilon  -n  }{e^\varepsilon + d - 1 } \leq  \frac{e^\varepsilon  - 1  }{e^\varepsilon + d - 1 } \leq  \frac{e^\varepsilon  -1  }{e^\varepsilon + 1 }
\end{align}
\section{Proof of upper bound of $\Adv^{\textup{AMI}}_{\textup{LDP}}(\mathcal{A}^\mathcal{D}_{\mathsf{FC}})$} \label{app:upper-bound-ami}
We now restate Theorem \ref{theorem:ami_upperbound} on the theoretical upper bound of $\Adv^{\textup{AMI}}_{\textup{LDP}}(\mathcal{A}^\mathcal{D}_{\mathsf{FC}})$.
\begin{theorem*}
For all AMI adversary $\mathcal{A}$ of the game $\mathsf{Exp}^{\mathsf{FC}}_{\textup{LDP}}$, we have
\begin{align*}
   \Adv^{\mathsf{FC}}_{\textup{LDP}}(\mathcal{A}^\mathcal{D}_{\mathsf{FC}})  \leq \frac{e^\epsilon - 1}{e^\epsilon + 1} 
\end{align*}
\end{theorem*}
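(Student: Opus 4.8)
The plan is to recognize the left-hand side as a binary hypothesis-testing advantage and to bound it by the total-variation distance between the two ``worlds'' $b=1$ and $b=0$, which the $\varepsilon$-LDP guarantee controls. First I would recall from Eq.~(\ref{eq:advantage_def}) that for \emph{any} adversary $\mathcal{A}$ the advantage equals $\Prob{b'=1\mid b=1}+\Prob{b'=0\mid b=0}-1$. Let $\mu_1$ and $\mu_0$ denote the distributions of the adversary's complete view $\omega=(T,\dot\theta)$ conditioned on $b=1$ and $b=0$, respectively. Since the final guess $b'=\mathcal{A}^\mathcal{D}_{\mathsf{GUESS}}(T,\dot\theta)$ is a (possibly randomized) decision rule applied to $\omega$, the standard Neyman--Pearson identity gives that the best attainable value of $\Prob{b'=1\mid b=1}+\Prob{b'=0\mid b=0}-1$ over all guessing rules is exactly $\mathrm{TV}(\mu_1,\mu_0)$. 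Hence it suffices to show $\mathrm{TV}(\mu_1,\mu_0)\le \frac{e^{\varepsilon}-1}{e^{\varepsilon}+1}$, and this bounds the advantage of every $\mathcal{A}$ simultaneously.

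The crux is to argue that $\mu_1$ and $\mu_0$ have pointwise likelihood ratio confined to $[e^{-\varepsilon},e^{\varepsilon}]$. I would do this with a coupling that isolates the single record whose value is toggled by $b$. Condition on all the shared randomness: the target $T$, the $n-1$ ``background'' records of $D$, the adversary's internal coins, the model $\Phi=\mathcal{A}^\mathcal{D}_{\mathsf{INIT}}$, and the crafted parameters $\theta=\mathcal{A}^\mathcal{D}_{\mathsf{ATTACK}}(T)$ — note that $\Phi$ and $\theta$ depend only on $T$, never on $b$. Under this conditioning the only way $b$ enters the generation of $\dot\theta=\nabla_\theta\mathcal{L}_\Phi(\mathcal{M}^\varepsilon(D))$ is through one input to the LDP channel: in the $b=1$ world that input equals $T$, whereas in the $b=0$ world it equals some value $X\neq T$. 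Everything downstream — the gradient computation and the guessing map — is a fixed randomized post-processing of $\mathcal{M}^\varepsilon$ applied to that single record together with the shared randomness. By \defref{Local Differential Privacy} and the post-processing invariance of differential privacy, for every view-set $S$ we get $\Prob{\omega\in S\mid b=1}\le e^{\varepsilon}\,\Prob{\omega\in S\mid b=0}$ and symmetrically; averaging over the shared randomness preserves this, so $e^{-\varepsilon}\le d\mu_1/d\mu_0\le e^{\varepsilon}$.

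Finally I would convert the likelihood-ratio constraint into the claimed TV bound. Let $A=\{\omega:\mu_1(\omega)>\mu_0(\omega)\}$ and write $p=\mu_1(A)$, $q=\mu_0(A)$, so $\mathrm{TV}(\mu_1,\mu_0)=p-q$. The ratio bounds on $A$ and on $A^c$ force $p\le e^{\varepsilon}q$ and $1-q\le e^{\varepsilon}(1-p)$; maximizing $p-q$ subject to these two linear constraints yields the optimum at $p=\tfrac{e^{\varepsilon}}{e^{\varepsilon}+1}$, $q=\tfrac{1}{e^{\varepsilon}+1}$, whence
\[
\mathrm{TV}(\mu_1,\mu_0)\;=\;p-q\;\le\;\frac{e^{\varepsilon}-1}{e^{\varepsilon}+1}.
\]
Combining with the first paragraph gives $\Adv^{\mathsf{AMI}}_{\textup{LDP}}(\mathcal{A}^\mathcal{D}_{\mathsf{FC}})\le\frac{e^{\varepsilon}-1}{e^{\varepsilon}+1}$ for all adversaries.

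The step I expect to be the main obstacle is the coupling in the second paragraph. In the game the dataset genuinely differs between the two worlds (a uniformly random \emph{member} $T$ versus a fresh \emph{non-member} $T$), and $\theta$ is a function of $T$; one must verify that conditioning on $T$ and on the $n-1$ background records aligns the two worlds so that they differ in exactly one $\mathcal{M}^\varepsilon$-protected input, making the post-processing argument applicable. Once this reduction to a single LDP-protected record is established, the likelihood-ratio bound and the elementary total-variation computation are routine.
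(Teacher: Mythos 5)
Your proposal is correct and follows essentially the same route as the paper: the paper likewise reduces the game to distinguishing two datasets $D_1$ and $D_2$ differing in a single LDP-protected record, invokes post-processing of the $\varepsilon$-LDP guarantee (its Lemma~\ref{Lemma:ldp_to_dp}) to get $P_W \le e^{\varepsilon} P_L$, and then derives $P_W \le e^{\varepsilon}/(1+e^{\varepsilon})$, which is exactly your likelihood-ratio/total-variation computation in different notation. The coupling step you flag as the main obstacle is handled in the paper with the same level of informality (it simply declares the adversary ``needs to differentiate $D_1$ from $D_2$''), so your argument is no less rigorous than the published one.
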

\begin{proof}
For clarity, we denote $t_1$ the instance of $D$ that is sampled in case $b = 1$ and $t_0$ the random sample of the input distribution $\mathcal{D}$ in case $b = 0$. Let $D_1 = D$ and $D_2 = D \setminus \{t_1\} \cup \{t_0\}$. With this notations, we can think the adversary needs to differentiate $D_1$ to $D_2$ instead of $t$ or $t'$.

Using the notations as denoted in Lemma~\ref{Lemma:ldp_to_dp}, the probability the adversary wins is:
\begin{align}
    P_{W} =& \Prob{S = D_1} \Prob{\mathcal{M}_\epsilon(S) \in T^{\mathcal{M}_\epsilon}_{1} | S = D_1}
    +
    \Prob{S = D_2} \Prob{\mathcal{M}_\epsilon(S) \in T^{\mathcal{M}_\epsilon}_{0}  | S = D_2} \\
    =& 
    \frac{1}{2} \left( \Prob{\mathcal{M}_\epsilon(D_1) \in T^{\mathcal{M}_\epsilon}_{1}}
    +
     \Prob{\mathcal{M}_\epsilon(D_2) \in T^{\mathcal{M}_\epsilon}_{0}}\right) \label{eq:ptrue}
\end{align}
Here, $S$ is a dummy variable representing which datasets, i.e., $D_1$ or $D_2$ is chosen by the experiment. Similarly, we have the probability the adversary loses is:
\begin{align}
    P_{L} =
    \frac{1}{2} \left( \Prob{\mathcal{M}_\epsilon(D_1) \in T^{\mathcal{M}_\epsilon}_{0}}
    +
     \Prob{\mathcal{M}_\epsilon(D_2) \in T^{\mathcal{M}_\epsilon}_{1}}\right) \label{eq:pfalse}
\end{align}

From Lemma~\ref{Lemma:ldp_to_dp}, we have:
\begin{align}
    & \Prob{\mathcal{M}_\epsilon(D_1) \in T^{\mathcal{M}_\epsilon}_{1}} \leq 
    e^\epsilon
    \Prob{\mathcal{M}_\epsilon(D_2) \in T^{\mathcal{M}_\epsilon}_{1}} \label{eq:lemma_use_1} \\
    & \Prob{\mathcal{M}_\epsilon(D_2) \in T^{\mathcal{M}_\epsilon}_{0}} \leq 
    e^\epsilon
    \Prob{\mathcal{M}_\epsilon(D_1) \in T^{\mathcal{M}_\epsilon}_{0}} \label{eq:lemma_use_2}
\end{align}
Combining (\ref{eq:ptrue}), (\ref{eq:pfalse}), (\ref{eq:lemma_use_1}) and (\ref{eq:lemma_use_2}), we have
\begin{align}
    P_W =& \frac{1}{2} \left( \Prob{\mathcal{M}_\epsilon(D_1) \in T^{\mathcal{M}_\epsilon}_{1}}
    +
     \Prob{\mathcal{M}_\epsilon(D_2) \in T^{\mathcal{M}_\epsilon}_{0}}\right) \\
     \leq & e^\epsilon
     \frac{1}{2} \left( \Prob{\mathcal{M}_\epsilon(D_1) \in T^{\mathcal{M}_\epsilon}_{0}}
    +
     \Prob{\mathcal{M}_\epsilon(D_2) \in T^{\mathcal{M}_\epsilon}_{1}}\right) = e^\epsilon P_L
\end{align}
We then have:
\begin{align}
    P_W (1 + e^\epsilon) \leq e^\epsilon(P_W+P_L) \Rightarrow
    P_W \leq e^\epsilon/ (1 + e^\epsilon)
\end{align}
Thus, the advantage of the adversary can be bounded by:
\begin{align}
    \Adv^{\textup{AMI}}_{\epsilon-\textup{DP}}(\mathcal{A}) = 2 P_W  - 1 \leq \frac{e^\epsilon - 1}{e^\epsilon + 1}
\end{align}
We then have the Theorem.
\end{proof}

\begin{lemma} \label{Lemma:ldp_to_dp}
Denote $\mathcal{M}_\epsilon: \mathcal{X} \rightarrow \mathcal{X}$ a randomized function satisfied $\epsilon$-LDP. For a database  $D \in \mathcal{X}^n$, we denote $ \mathcal{M}_{\epsilon}(D) := \{\mathcal{M}_{\epsilon}(x): x \in D \} $. Then, for all database $D_1$ and  $D_2$ different by one entry, we have:
\begin{align*}
      \Prob{\mathcal{M}_\epsilon(D_1) \in T^{\mathcal{M}_\epsilon}_{b'}} \leq e^\epsilon \Prob{\mathcal{M}_\epsilon(D_2) \in  T^{\mathcal{M}_\epsilon}_{b'}},
\end{align*}
where $ T^{\mathcal{M}_\epsilon}_{b'}$ is the set of all database $D$  such that the adversarial return $b'$ on that realization of $\mathcal{M}_\epsilon$.
\end{lemma}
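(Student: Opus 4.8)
The plan is to reduce this group-level inequality to the pointwise $\epsilon$-LDP guarantee, exploiting that $\mathcal{M}_\epsilon$ is applied independently and identically to each record. First I would fix an ordering and write $D_1 = (x_1,\dots,x_n)$ and $D_2 = (x_1',\dots,x_n')$ so that the two databases agree on every coordinate except one, say coordinate $k$, where $x_k \neq x_k'$ while $x_i = x_i'$ for all $i \neq k$. Because each record is perturbed independently, for any realized output tuple $\mathbf{O} = (\mathcal{O}_1,\dots,\mathcal{O}_n)$ the joint output law factorizes as $\Prob{\mathcal{M}_\epsilon(D_j) = \mathbf{O}} = \prod_{i=1}^n \Prob{\mathcal{M}_\epsilon(x_i^{(j)}) = \mathcal{O}_i}$ for $j \in \{1,2\}$.

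Next I would take the ratio of these two product measures. Since $x_i = x_i'$ for every $i \neq k$, all factors outside coordinate $k$ cancel, leaving
\[
\frac{\Prob{\mathcal{M}_\epsilon(D_1) = \mathbf{O}}}{\Prob{\mathcal{M}_\epsilon(D_2) = \mathbf{O}}} = \frac{\Prob{\mathcal{M}_\epsilon(x_k) = \mathcal{O}_k}}{\Prob{\mathcal{M}_\epsilon(x_k') = \mathcal{O}_k}} \leq e^\epsilon,
\]
where the bound is exactly the defining inequality of $\epsilon$-LDP applied to the single differing record, with inputs $x_k,x_k'$ and output $\mathcal{O}_k$. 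Thus every atom of the output distribution under $D_1$ is at most $e^\epsilon$ times heavier than under $D_2$.

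Finally, I would sum this pointwise domination over all output databases in the decision set $T^{\mathcal{M}_\epsilon}_{b'}$; recall that $b'$ is a function of the gradients received by the server, which are in turn a deterministic function of the perturbed database, so $T^{\mathcal{M}_\epsilon}_{b'}$ is a well-defined subset of output databases. This yields
\[
\Prob{\mathcal{M}_\epsilon(D_1) \in T^{\mathcal{M}_\epsilon}_{b'}} = \sum_{\mathbf{O} \in T^{\mathcal{M}_\epsilon}_{b'}} \Prob{\mathcal{M}_\epsilon(D_1) = \mathbf{O}} \leq e^\epsilon \sum_{\mathbf{O} \in T^{\mathcal{M}_\epsilon}_{b'}} \Prob{\mathcal{M}_\epsilon(D_2) = \mathbf{O}} = e^\epsilon \Prob{\mathcal{M}_\epsilon(D_2) \in T^{\mathcal{M}_\epsilon}_{b'}},
\]
which is the claimed inequality.

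At its core the argument is the observation that altering a single record changes only one of the $n$ independent LDP channels, so the $n$-fold product inherits the same $e^\epsilon$ multiplicative gap (a parallel-composition / single-neighbor group-privacy fact); I do not expect genuine mathematical difficulty here. The delicate points will be bookkeeping: (i) treating the database as an \emph{ordered} tuple so that the single differing entry is unambiguous and the factors pair up correctly under the cancellation; and (ii) accommodating a randomized adversary, for which I would condition on the adversary's internal coins, apply the displayed inequality for each fixed realization (where $T^{\mathcal{M}_\epsilon}_{b'}$ is then a fixed set), and average, since the $e^\epsilon$ factor is uniform across coins. For a continuous output alphabet the sums would become integrals against a dominating measure, but the discreteness of $\mathcal{X}$ assumed throughout the paper makes the summation form exact.
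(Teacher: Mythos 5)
Your proof is correct, but it takes a genuinely different route from the paper's. You argue directly: factor the joint output law of $\mathcal{M}_\epsilon(D_j)$ as a product over the $n$ independent per-record channels, cancel the $n-1$ factors corresponding to the shared records, bound the one remaining ratio by $e^\epsilon$ via the defining LDP inequality, and then sum the resulting pointwise domination over the decision set $T^{\mathcal{M}_\epsilon}_{b'}$. The paper instead argues by contradiction through the post-processing property of LDP: it defines $g(\mathcal{M}_\epsilon(x)) = \mathcal{A}\left(\mathcal{M}_\epsilon(S) \cup \{\mathcal{M}_\epsilon(x)\}\right)$, absorbing the perturbation of the $n-1$ common records and the adversary's decision rule into a single (randomized) post-processing map applied to the one differing record, and observes that a failure of the group-level inequality would yield a post-processed violation of the single-record inequality. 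The two arguments carry the same mathematical content; yours makes the independence across records explicit and actually proves the stronger pointwise bound $\Prob{\mathcal{M}_\epsilon(D_1)=\mathbf{O}} \leq e^\epsilon \Prob{\mathcal{M}_\epsilon(D_2)=\mathbf{O}}$ for every output atom, while the paper's avoids writing out the product measure at the cost of implicitly relying on post-processing being closed under randomized maps (since $g$ depends on the randomness of $\mathcal{M}_\epsilon(S)$ and the adversary's coins). Two minor points on your write-up: (i) the ratio form is undefined when the denominator vanishes, so it is cleaner to state the atom-level bound multiplicatively, where it follows from the per-record LDP inequality without division and covers that degenerate case; (ii) your handling of the unordered-set versus ordered-tuple bookkeeping and of randomized adversaries by conditioning on coins is appropriate and is glossed over in the paper as well.
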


\begin{proof}
Let $S$ be an arbitrary subset of $\mathcal{X}$. For a pair $t,t' \in \mathcal{X} \setminus S$, consider the sets $D_1 = S \cup \{t\}$ and $D_2 = S \cup \{t'\} $. Since $\mathcal{M}_\epsilon$ satisfies $\epsilon$-LDP, we have:
\begin{align}
    \Prob{\mathcal{M}_\epsilon(t) = \mathcal{O}} \leq e^\epsilon \Prob{\mathcal{M}_\epsilon(t') = \mathcal{O}}, \quad \forall \mathcal{O} \in \mathcal{X}
    \label{eq:post_ldp}
\end{align}
From the post-processing property of $\epsilon$-LDP, we have:
\begin{align}
    \Prob{g(\mathcal{M}_\epsilon(t)) = \mathcal{O}'} \leq e^\epsilon \Prob{g(\mathcal{M}_\epsilon(t')) = \mathcal{O}'}, \quad \forall \mathcal{O}' \in Range(g),
\end{align}
for all function $g:\mathcal{X} \rightarrow Range(g)$.

In the AMI experiment, the guessing of the adversarial $\mathcal{A}$, i.e., $  \mathcal{A}_{\mathsf{GUESS}}^\mathcal{D}(t, \dot{\theta})$, can be considered as a function on  $D_{\epsilon-\textup{DLP}} = \mathcal{M}_\epsilon(D)$ as $\dot{\theta}$ is the result of some computations of $\mathcal{M}_\epsilon(D)$. We describe this as $\mathcal{A} (\mathcal{M}_\epsilon(D)) = b'$.

We now show that
\begin{align}
      \Prob{\mathcal{A} (\mathcal{M}_\epsilon(D_1)) = b'} \leq e^\epsilon \Prob{\mathcal{A} (\mathcal{M}_\epsilon(D_2)) = b'} \label{eq:ldp_to_dp_cond}
\end{align}

We show (\ref{eq:ldp_to_dp_cond}) by contradiction. Suppose there exists $D_1$ and $D_2$ such that the condition does not hold. We can construct a function $g: \mathcal{X} \rightarrow \{0,1\}$ as follow:
\begin{align}
    g(\mathcal{M}_\epsilon(x)) =  \mathcal{A} \left(\mathcal{M}_\epsilon(S) \cup \{ \mathcal{M}_\epsilon(x)\}  \right)
\end{align}
With this, we have
\begin{align}
    &\Prob{g(\mathcal{M}_\epsilon(t)) = b'} = \Prob{\mathcal{A} \left(\mathcal{M}_\epsilon(S) \cup \{ \mathcal{M}_\epsilon(t)\}\right) = b' } = 
    \Prob{\mathcal{A} \left(\mathcal{M}_\epsilon(D_1)\right) = b' } \\ > &
    e^\epsilon \Prob{\mathcal{A} \left(\mathcal{M}_\epsilon(D_2)\right) = b' }
    =
    e^\epsilon \Prob{\mathcal{A} \left(\mathcal{M}_\epsilon(S) \cup \{ \mathcal{M}_\epsilon(t')\}\right) = b' } 
    =
    e^\epsilon \Prob{g(\mathcal{M}_\epsilon(t')) = b'} 
\end{align}
which contradicts (\ref{eq:post_ldp}).

Since condition (\ref{eq:ldp_to_dp_cond}) is equivalent to the condition stated in the Lemma, we then have the Lemma.
\end{proof}
\section{Experimental Settings} \label{appx:exp_setting}
This appendix outlines the experimental setup and implementation details of our work. Our experiments are implemented using Python 3.8 and executed on a single GPU-enabled compute node running a Linux 64-bit operating system. The node is allocated 36 CPU cores with 2 threads per core and 384GB of RAM. Additionally, the node is equipped with 2 RTX A6000 GPUs, each with 48GB of memory.

\begin{table}[ht]
\caption{General information of our reported experiments in the main manuscript.}\label{table:exp:general}
\vspace{2mm}
\begin{adjustbox}{width=1\textwidth}
\begin{tabular}{@{}ccccccc@{}}
\toprule
\textbf{Experiments} & \textbf{No. runs}           & \textbf{Adversary} & \multicolumn{1}{l}{\textbf{Hyper-parameters}} & \textbf{Dataset} & \textbf{Embedding} & \textbf{LDP Mechanism} \\ \midrule                   
Fig.~\ref{fig:onehot},~\ref{fig:spherical}                 & 1000 & $\mathcal{A}^\mathcal{D}_{\mathsf{Attn}}$                 &  $\beta, \gamma$       & One-hot / Spherical             & No                  & -         \\
Fig.~\ref{fig:ldp_fc_cifar10},\ref{fig:ldp_fc_cifar100}                 & $20 \times 200$ &   $\mathcal{A}^\mathcal{D}_{\mathsf{FC}}$          & $\tau^\mathcal{D}, \epsilon$     & CIFAR10 / CIFAR100     & ResNet       & BitRand/ GRR/ RAPPOR/ dBitFlipPM          \\
Fig.~\ref{fig:ldp_attn} a,b                  &  $20 \times 200$ &  $\mathcal{A}^\mathcal{D}_{\mathsf{Attn}}$                  & $\beta, \gamma, \epsilon$                                             & CIFAR10              & ViT-Base            & BitRand/ GRR/ RAPPOR/ dBitFlipPM  \\
Fig.~\ref{fig:ldp_attn} c                  &  $40 \times 100$ &  $\mathcal{A}^\mathcal{D}_{\mathsf{Attn}}$                  & $\beta, \gamma, \epsilon$                                             & ImageNet             & ViT-Base            & BitRand/ GRR/ RAPPOR/ dBitFlipPM 

\end{tabular}
\end{adjustbox}
\end{table}

Table~\ref{table:exp:general} shows the general information of our experiments reported in the main manuscripts. The hyper-parameters column refers to those of the adversaries, and $\epsilon$ refers to the privacy budget. The embedding specifies how the dataset is transformed to obtain the data $D$ for our testing of inference attacks. The No. runs indicates the total number of simulated security games for each point plotted in our figures. For instance, the number $20 \times 200$ means we conduct 20 trials of the experiment. Each trial consists of 200 games. The attack success rate is measured as $\frac{1}{2}(\Pr[b'=1|b=1] + \Pr[b'=0|b=0])$. For each trial, everything is reset. In each trial, only the LDP mechanism is re-run. Using equation \ref{eq:advantage_def} as well as lower (Theorem \ref{theorem:ami_ldp}) and upper (Theorem \ref{theorem:ami_upperbound}) bound of $\Adv^{\mathsf{AMI}}_{\textup{LDP}}(\mathcal{A}^\mathcal{D}_{\mathsf{FC}})$, we can derive the theoretical lower and upper bound of the attack success rate.

The reported model accuracies (black lines) in all of our figures are obtained by evaluating the model on classification tasks. Particularly, for $\mathcal{A}^\mathcal{D}_{\mathsf{FC}}$, we use ResNet's embedding combined with a Multilayer Perceptron to generate classifications. For $\mathcal{A}^\mathcal{D}_{\mathsf{FC}}$, we use the native classification tasks and the original models published along with the datasets. To implement LDP, we add noise directly to ResNet's embedding. For ViTs, we add noise to the patch embeddings of the image. Since no pre-train model for ViT-B-32-224 on CIFAR10/CIFAR100 are available, we fine-tune the model that was pre-trained on ImageNet-21k on CIFAR10. The labels for classification in CIFAR10 are from the original dataset. 


In the following appendices, we discuss more on the dataset and their embedding, the implementation of the adversaries, and the implementation of the LDP mechanisms. Those contents are in Appx.~\ref{appx:more_dataset}, \ref{app:vit}, \ref{appx:implement_adv} and \ref{app:ldp}, respectively.

\subsection{Dataset and embedding} \label{appx:more_dataset}

In total, our experiments are conducted on 2 synthetic datasets, 3 real-world datasets and 4 LDP mechanisms. The synthetic datasets are one-hot encoded data and Spherical data (data on the boundary of a unit ball). The real-world datasets are CIFAR10~\cite{krizhevsky2009learning} and ImageNet~\cite{krizhevsky2012imagenet}. The real-world datasets are pre-processed with practical pre-trained embedding modules to obtain the data $D$ in the threat models. For ResNet, we use Img2Vec~\cite{img2vec} to extract the feature embeddings of images and for ViTs, we use pretrained foundation models published on HuggingFace by the authors. The parameters of $D$ in each experiment are given in Table~\ref{table:exp:data}. 

For the synthetic datasets, we use a batch size of $1$ since it does not affect the results of one-hot encoding. Furthermore, the setting also provides better intuition on the asymptotic behaviors of other datasets in Fig.~\ref{fig:onehot} and Fig.~\ref{fig:spherical}. For ViTs, the number of patterns $N_X$ is equal to the number of image patches. Details on how $\mathcal{A}^\mathcal{D}_{\mathsf{Attn}}$ works on ViTs are given in \ref{app:vit}. The embedding dimensions $d_X$ are determined by the choice of the embedding modules.

\begin{table}[ht]
\caption{Information of the data $D$ in each testing dataset.}\label{table:exp:data}
\vspace{2mm}
\centering
\begin{adjustbox}{width=0.72\textwidth}
\begin{tabular}{ccc}
\hline
\textbf{Dataset}    & \textbf{Embedding} & \textbf{Tested batch dimension $n \times d_X \times N_X$}  \\ \hline
One-hot & N/A                & $1 \times [10, \cdots , 1000] \times [5, 10, 15]$  \\
Spherical            & N/A                & $1 \times  [10000,  \cdots , 35000] \times [5, 10, 15]$ \\
CIFAR10             & ResNet-18          & $64 \times 512 \times 1$                           \\
CIFAR100            & ResNet-18          & $64 \times 512 \times 1$                               \\
CIFAR10                & ViT-B-32-224              & $[10,20,40]\times 768 \times 49$               \\
ImageNet             & ViT-B-32-384            & $[10,20,40]\times768 \times 144$                       \\ \hline
\end{tabular}
\end{adjustbox}
\end{table}

\subsection{Implementation of $\mathcal{A}^\mathcal{D}_{\mathsf{Attn}}$ on Vision Transformer.}\label{app:vit}
First we describe the architecture of ViT, which was first proposed in \cite{dosovitskiy2021an}. First, the image is divided into $L$ fixed-size patches (e.g., $16 \times 16$ pixels or $32 \times 32$ pixels), which are flattened into vectors. Each patch is projected into a lower-dimensional embedding using a linear layer. Position embeddings are added to retain spatial information, and a learnable \texttt{[class]} embedding is included for global context. The sequence of embeddings is processed by $L$ Transformer encoder layers. The output corresponding to the \texttt{[class]} embedding is passed through an MLP head to predict the image class. In summary. ViT treats image patches like words in a sentence, using the Transformer architecture to model relationships between patches and perform tasks like image classification. For naming scheme, ViT-B-32-224 means the model is ViT-Base, the image size is 224 and the patch size is 32. Figure \ref{fig:vit_model} describes ViT's architecture. 
\begin{figure}[h!]
    \centering
    \includegraphics[width=0.9\textwidth]{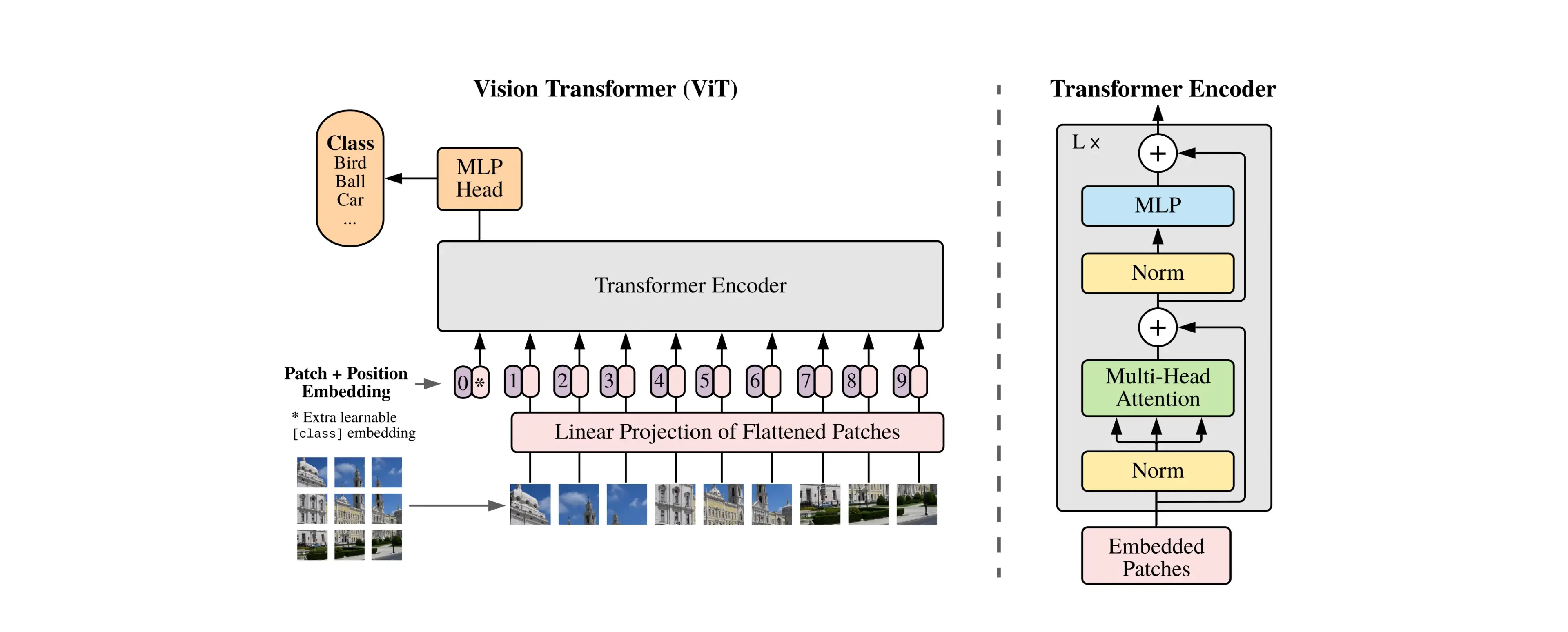} 
    \caption{Architecture of Vision Transformer model. Image adapted from \cite{dosovitskiy2021an}.}
    \label{fig:vit_model}
\end{figure}
\\
Recall we represent the victim's dataset as \( D = \{X_i\}_{i=1}^n \), where \( X_i \in \mathcal{X} \), and \( \mathcal{X} \subseteq \mathbb{R}^{d_X \times N_X} \). For any 2-dimensional array \( X \), each column \( x_j \in \mathbb{R}^{d_X} \) is referred to as a \textit{pattern}. In the context of $\mathcal{A}^\mathcal{D}_{\mathsf{Attn}}$ on ViT, since $\mathcal{A}^\mathcal{D}_{\mathsf{Attn}}$ operates on the pattern level, it operates directly on the Patch+Position Embedding vectors. Given the image is divided into $L$ patches, $N_X = L$ in the attack, and $d_x$ the dimension of each embedding vector. When we add LDP noise to the data, we add it directly to these vectors as well. For example, in the case of ViT-B-32-384, there are in total $\frac{384}{32} \times \frac{384}{32} = 12 \times 12 = 144$ image patches, corresponding to $N_X = 144$. 
\subsection{Implementation of the Adversaries} \label{appx:implement_adv}

In our theoretical analysis of $\mathcal{A}^\mathcal{D}_{\mathsf{FC}}$ and $\mathcal{A}^\mathcal{D}_{\mathsf{Attn}}$, we have specified how their hyper-parameters should be chosen so that theoretical guarantees can be achieved. For convenience reference, we restate those setting here:
\begin{itemize}
     \item For Theorem~\ref{theorem:ami_ldp},  $\tau^\mathcal{D} $ is set to $\Delta^\mathcal{X}$. The argument is made at Subsect.~\ref{subsect:ami_mlp_ldp}.
     \item For Theorem~\ref{theorem:api_ldp}, $\beta$ is chosen such that condition of the Theorem holds and  $\gamma$ is set to $2 \Bar{\Delta}^\varepsilon $. The argument is made at Appx.~\ref{appx:api_attn_advantage_ldp}.
\end{itemize}

\begin{table}[ht]
\caption{Note on the values of $\beta$ in experiments. }\label{table:exp:beta}
\vspace{1mm}
\centering
\begin{adjustbox}{width=0.65\textwidth}
\begin{tabular}{@{}cccc@{}}
\toprule
\textbf{Dataset}    & \textbf{Note on $\beta$}            & \textbf{Min $\beta$} & \textbf{Max $\beta$} \\ \midrule
One-hot / Spherical & $\beta$ is is set to a constant     & 10                   & 10                   \\
CIFAR10/100                & The more noise, the smaller $\beta$ & 0.01                 & 0.07                    \\
ImageNet             & The more noise, the smaller $\beta$ & 0.01                 & 0.07                    \\ \bottomrule
\end{tabular}
\end{adjustbox}
\end{table}

However, as the adversaries generally know the data distribution and the LDP mechanism in practice, they can simulate the data as well as its protected version. We integrate these simulations into our implementations of $\mathcal{A}^\mathcal{D}_{\mathsf{FC}}$ and $\mathcal{A}^\mathcal{D}_{\mathsf{Attn}}$ to tune $\tau^\mathcal{D}$ and $\gamma$ before the security games. In fact, for a given LDP mechanism and an $\varepsilon$ privacy budget, the server uses a dataset from the data distribution $\mathcal{D}$ and collects the layers' outputs before the ReLU activation. Then, a linear regression model is fitted on those outputs to estimate the biases $\tau^\mathcal{D}$ and $\gamma$ that will be used in the security games. Regarding the hyper-parameter $\beta$, while it cannot be tuned with linear regression, for each privacy budget of a security game, we try several values of $\beta$ based on the statistic of $\mathcal{D}$ before the game and settle with a look-up table for it. An example of possible values of $\beta$ are given in Table~\ref{table:exp:beta}. However, we found that simply setting $\beta = 0.01$ usually give satisfactory results. We use $\beta = 0.01$ for all NLP experiments.


\subsection{Details of LDP mechanisms}\label{app:ldp}

\textbf{Generalized Randomized Response (GRR).} Given an user with a value 
$v\in \mathcal{X}$. A random
variable, denoted by $\hat{X}$, represents the response of the user on a value $x$ also in $ \mathcal{X}$. The generalized randomized
response works as follows:
\begin{align}
    \textup{Pr}\left[ \hat{X} = v\right] = \left\{\begin{matrix}
\frac{e^\varepsilon}{e^\varepsilon+d-1}, \ \textup{if } x = v
\\ 
\frac{1}{e^\varepsilon+d-1}, \ \textup{if } x \neq v
\end{matrix}\right.
\end{align}
where $d := | \mathcal{X}|$.

\textbf{RAPPOR (Randomized Aggregatable Privacy-Preserving Ordinal Response).} 
Each user has a value $v$ encoded as a Bloom filter vector $B \in \{0,1\}^k$ using $h$ hash functions. 
A permanent randomized response $B'$ is generated as:
\[
B'_i = 
\begin{cases}
1 & \text{with prob. } \frac{1}{2}f \quad \text{(flip to 1)} \\
0 & \text{with prob. } \frac{1}{2}f \quad \text{(flip to 0)} \\
B_i & \text{with prob. } 1 - f \quad \text{(keep original bit)}
\end{cases}
\]
Then, the instantaneous randomized response $S \in \{0,1\}^k$ is sampled from $B'$ as:
\[
\text{Pr}[S_i = 1] = 
\begin{cases}
q & \text{if } B'_i = 1 \\
p & \text{if } B'_i = 0
\end{cases}
\]

\textbf{dBitFlipPM.}
Given a user with a value $v \in [k]$, the mechanism proceeds as follows:
\begin{itemize}
    \item The user selects $d$ random buckets $\{j_1, \dots, j_d\} \subset [k]$ without replacement.
    \item For each selected $j_p$, the user responds with a bit $b_{j_p} \in \{0,1\}$ such that:
\end{itemize}
\[
\text{Pr}[b_{j_p} = 1] = 
\begin{cases}
\frac{e^{\varepsilon/2}}{e^{\varepsilon/2} + 1} & \text{if } v = j_p \\
\frac{1}{e^{\varepsilon/2} + 1} & \text{if } v \neq j_p
\end{cases}
\]
The data collector reconstructs the histogram using:
\[
\hat{h}_t(v) = \frac{k}{nd} \sum_{i=1}^n b_{i,v}(t) \cdot \frac{e^{\varepsilon/2} + 1}{e^{\varepsilon/2} - 1} - \frac{1}{e^{\varepsilon/2} - 1}
\]
This mechanism guarantees $\varepsilon$-LDP with reduced communication cost and supports memoization for repeated collection.

In bit-flipping mechanisms like \textbf{OME} and \textbf{BitRand}, the original data or embedding features are first converted into binary vectors. The LDP mechanisms are then applied on top of those binary representations of the signal. After that, the protected binary signals are converted back to the original domain of the data before the training of the machine learning models.

In OME, each bit $i$ of the binary representation is randomized based on the following probabilities:  
\begin{align}
   \forall i \in [0, rl-1]:   P(v'_x(i)=1) := \left \{
  \begin{aligned}
    &p_{1X} = \frac{\alpha}{ 1+\alpha}, \text{ if } i \in 2j, v_x(i) = 1 \\ 
    &p_{2X} = \frac{1}{1+\alpha^3},   \text{ if } i \in 2j+1, v_x(i) = 1  \\
    &q_X = \frac{1}{1+ \alpha \exp (\frac{\varepsilon}{rl})},  \text{ if }\ v_x(i) = 0 
  \end{aligned} \right.
   \label{ome_prob}
\end{align}
where $v_x(i) \in \{0, 1\}$ is the value of the bit $i$ in the binary representation, $v'_x$ is the perturbed binary vector, $\varepsilon$ is the privacy budget, and $\alpha$ is a parameter of the algorithm. 


On the other hand, BitRand introduces the bit-aware term $\frac{i \% l}{l}$ to control the randomization probabilities. That bit-aware term helps the mechanism take the location of the bit into consideration for randomization. Intuitively, BitRand aims to apply less noise to bits that have more impact on the model utility. Particularly, the probabilities of perturbation are defined as: 
\begin{align}
  \forall i \in [0, rl-1]:  P(v'_x(i)=1) = \left \{
  \begin{aligned}
    &p_X = \frac{1}{1+ \alpha \exp (\frac{i \% l}{l}\varepsilon)}, \text{ if}\ v_x(i) = 1 \\
    &q_X = \frac{\alpha \exp (\frac{i \% l}{l}\varepsilon)}{1+ \alpha \exp (\frac{i \% l}{l}\varepsilon)}, \text{ if}\ v_x(i) = 0 
  \end{aligned} \right.
  \label{f-RRRP} 
\end{align}

\section{Additional Experiments} \label{appx:extra_exp}
\subsection{Experiments using OME mechanism} \label{appx:exp_ome}
We observe that models trained on OME-protected data have almost constant performance across the tested range of the privacy budget $
\varepsilon$, as illustrated in Fig. \ref{fig:ldp_ome}. The same phenomenon of OME is observed and reported in multiple
previous works \cite{arachchige2019local,lyu2020towards,nguyen2023active}. The reason lies in the large sensitivity of the encoded binary representation in OME weakens the effect of $\epsilon$ on the randomization probabilities \cite{lyu2020towards}. With the exception of ImageNet, even with high loss in model utility, AMI adversaries still achieve a high successful inference rate.
\begin{figure*}[h!]
\centering
\begin{subfigure}{.24\linewidth}
  \centering
  \includegraphics[width=1.0\linewidth]{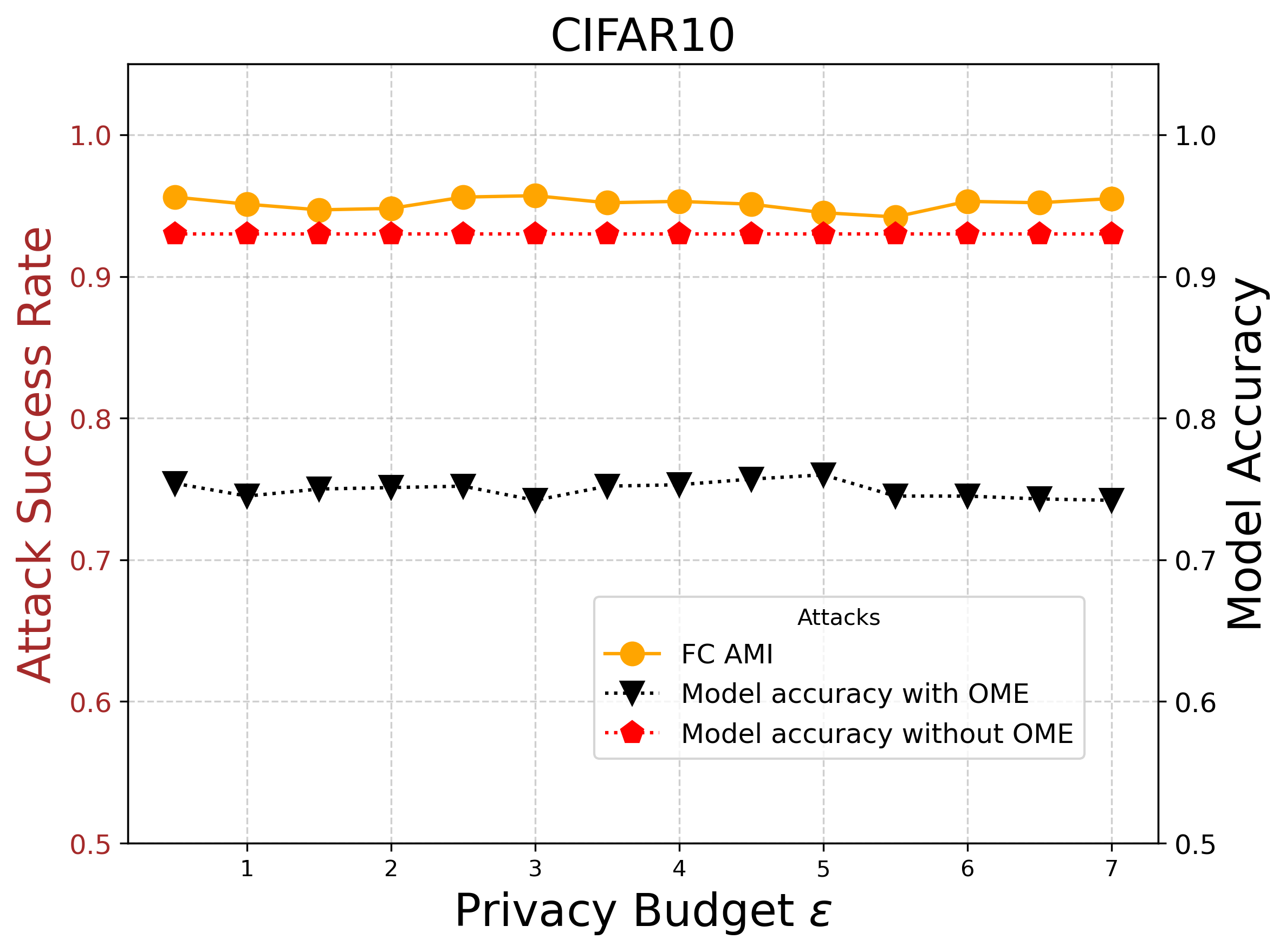}
  \caption{}
  \label{cifar10fcome}
\end{subfigure}%
\begin{subfigure}{.24\linewidth}
  \centering
  \includegraphics[width=1.0\linewidth]{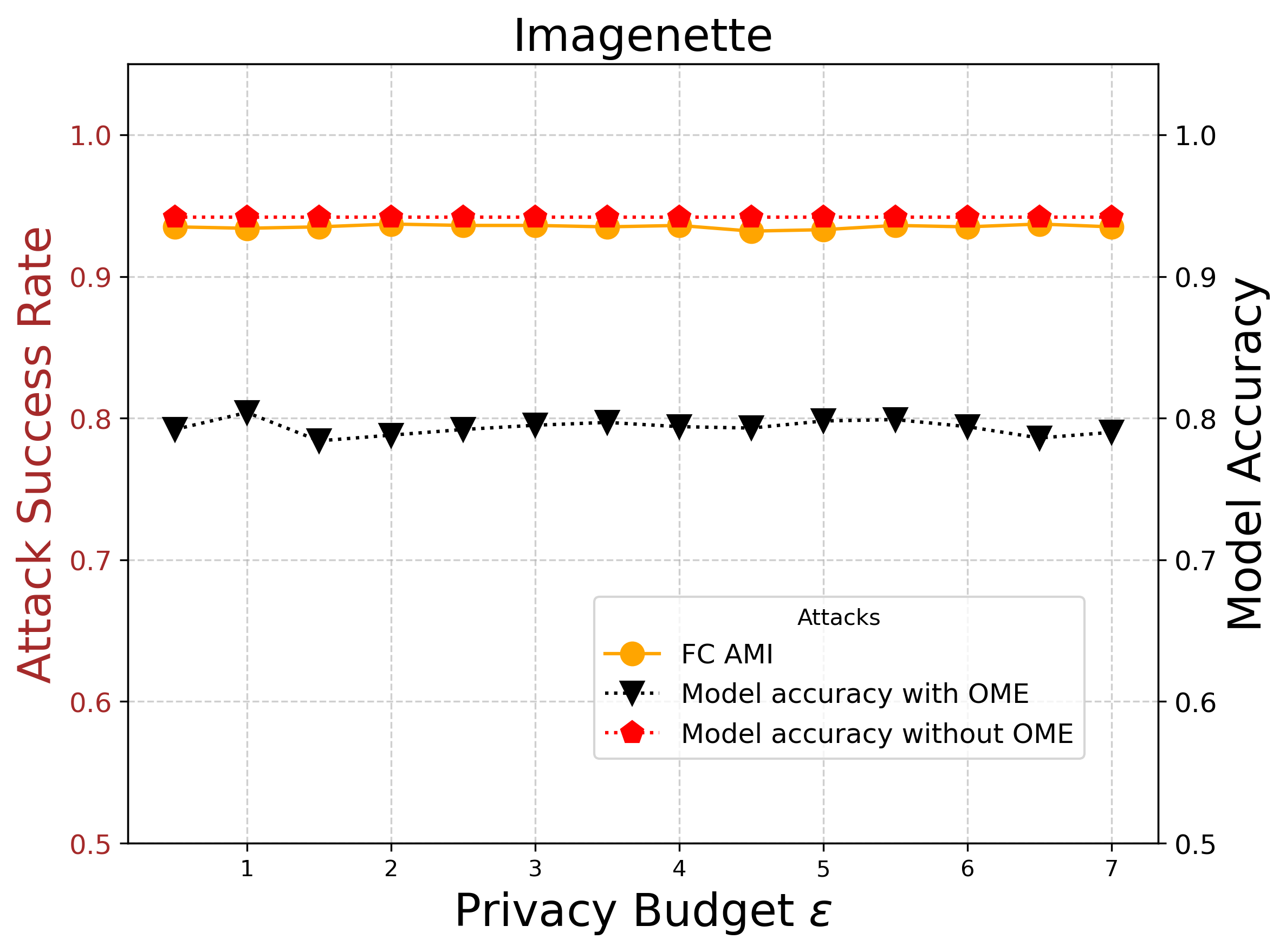}
  \caption{}
  \label{imagenettfcome}
\end{subfigure}
\begin{subfigure}{.24\linewidth}
  \centering
  \includegraphics[width=1.0\linewidth]{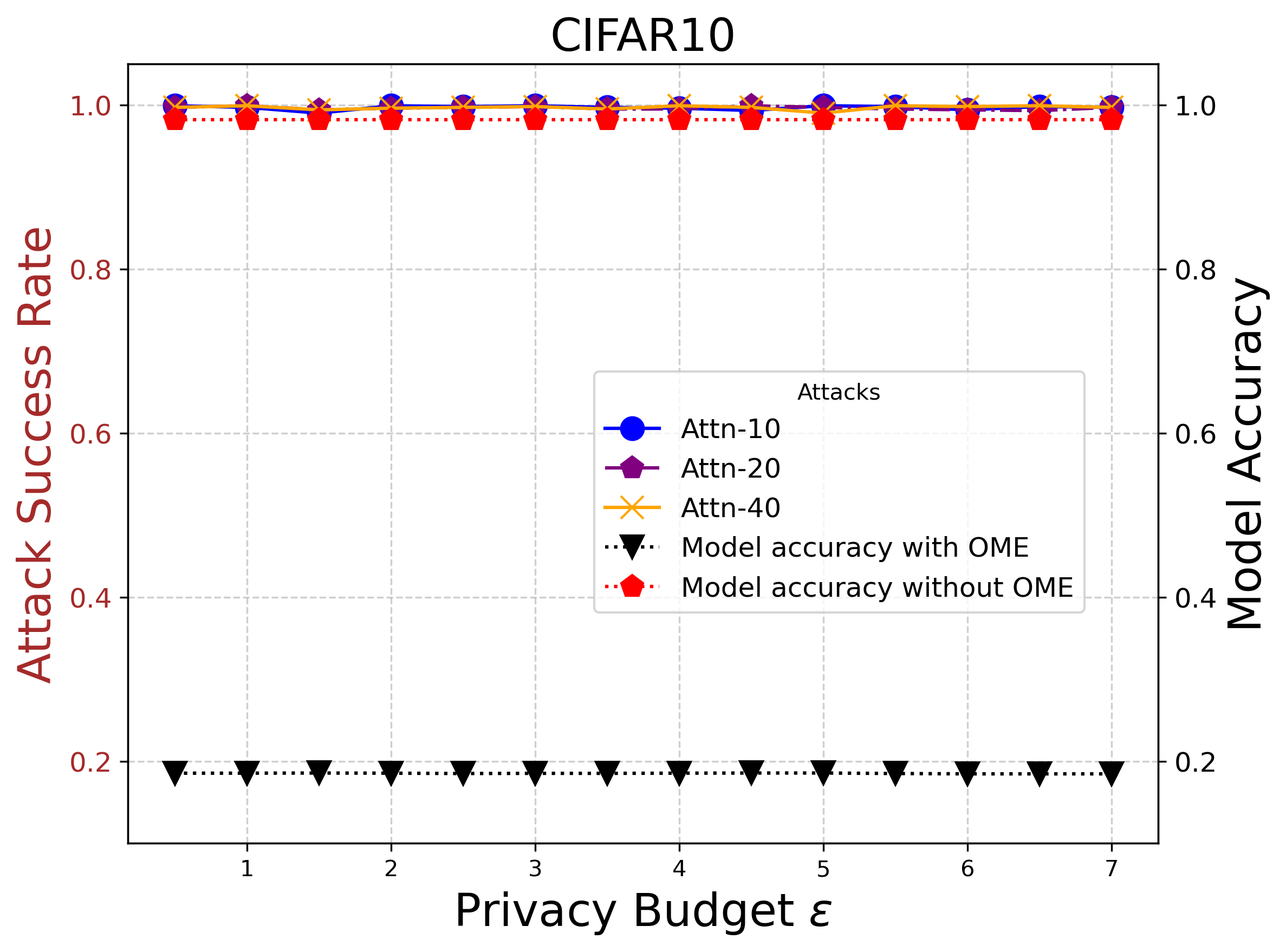}
  \caption{}
  \label{cifar10ome}
\end{subfigure}%
\begin{subfigure}{.24\linewidth}
  \centering
  \includegraphics[width=1.0\linewidth]{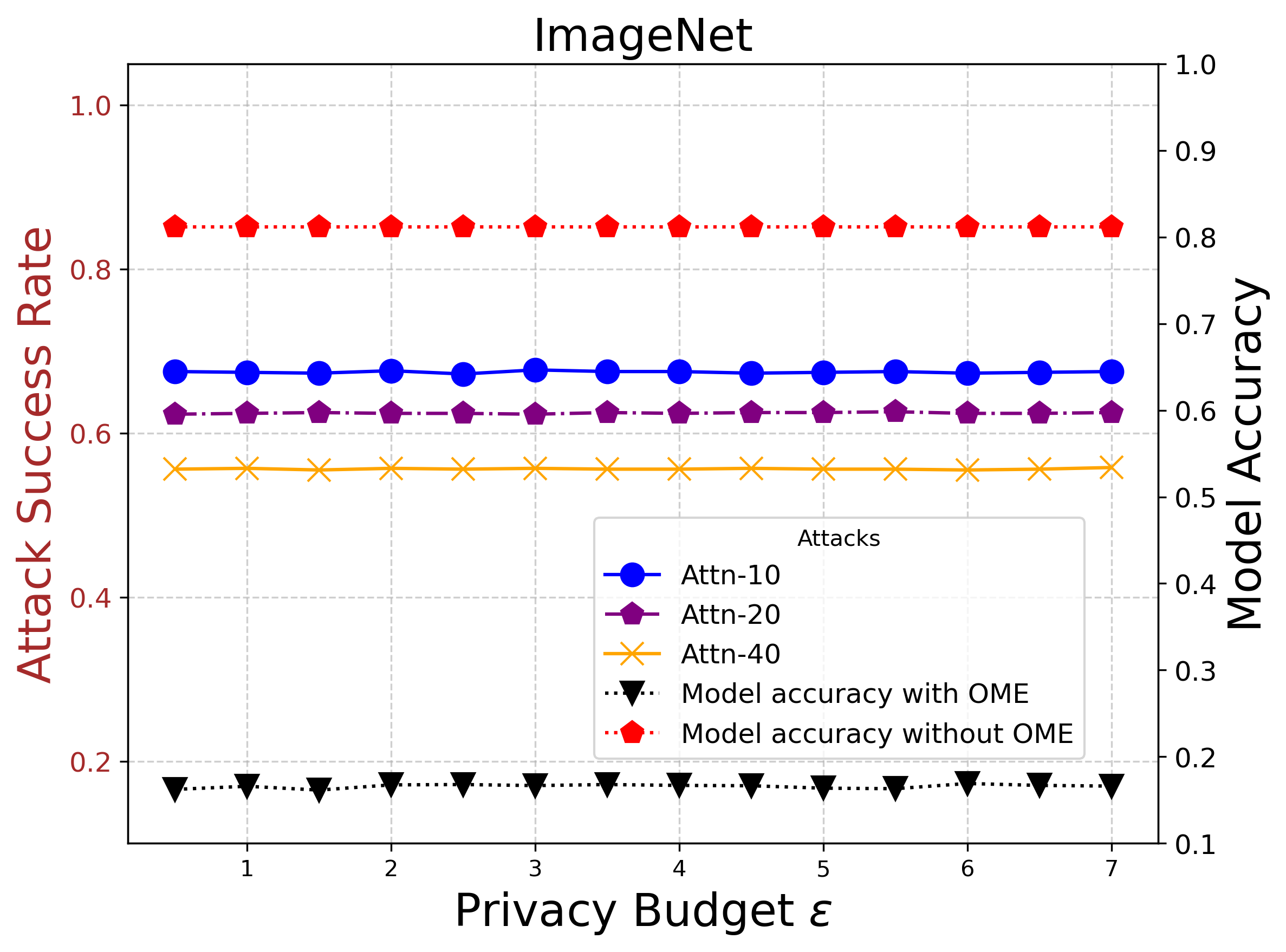}
  \caption{}
  \label{imagenetome}
\end{subfigure}%
\caption{Success rates of AMI adversaries against datasets protected by OME.}  
\label{fig:ldp_ome}
\end{figure*}
\subsection{Empirical results on NLP datasets} \label{appx:llm}
Table \ref{table:dp_grr}, \ref{table:dp_rappor}, \ref{table:dp_dbit} show the average accuracies, F1, and AUCs of AMI attacks on NLP datasets under GRR, RAPPOR and dBitFlipPM, respectively. Given the same privacy budget, $\mathcal{A}^\mathcal{D}_{\mathsf{FC}}$ has higher accuracy than $\mathcal{A}^\mathcal{D}_{\mathsf{Attn}}$. GRR also usually performs worse as a defense mechanism compared to RAPPOR or dBitFlipM. Attention-based AMI also performed worse on LLM data compared to vision data. This could be due to the fact that it hard to bound the norm budget $R^\varepsilon$ of noise $r_i$ due to the discrete nature of LDP noise when applied to NLP domain.

\begin{table*}[h!]
\caption{Average Accuracies, F1, and AUCs of AMI attacks under GRR defense.}
\label{table:dp_grr}
\centering
\begin{tabular}{ccccc|ccc|ccc|ccc}
\hline
\multirow{2}{*}{\textbf{$\varepsilon$}} & \multirow{2}{*}{\textbf{Method}} & \multicolumn{3}{c|}{\textbf{BERT}} & \multicolumn{3}{c|}{\textbf{RoBERTa}} & \multicolumn{3}{c|}{\textbf{DistilBERT}} & \multicolumn{3}{c}{\textbf{GPT1}} \\
 & & ACC & F1 & AUC & ACC & F1 & AUC & ACC & F1 & AUC & ACC & F1 & AUC \\
\hline
\multirow{2}{*}{$\infty$}
& $\mathcal{A}^\mathcal{D}_{\mathsf{FC}}$ & 1.00 & 1.00 & 1.00 & 1.00 & 1.00 & 1.00 & 1.00 & 1.00 & 1.00 & 1.00 & 1.00 & 1.00 \\
& $\mathcal{A}^\mathcal{D}_{\mathsf{Attn}}$ & 1.00 & 1.00 & 1.00 & 0.96 & 0.96 & 0.99 & 1.00 & 1.00 & 1.00 & 1.00 & 1.00 & 1.00 \\
\hline
\multirow{2}{*}{8}
& $\mathcal{A}^\mathcal{D}_{\mathsf{FC}}$ & 1.00 & 1.00 & 1.00 & 1.00 & 1.00 & 1.00 & 1.00 & 1.00 & 1.00 & 1.00 & 1.00 & 1.00 \\
& $\mathcal{A}^\mathcal{D}_{\mathsf{Attn}}$ & 0.99 & 0.99 & 1.00 & 0.89 & 0.88 & 0.95 & 0.99 & 0.98 & 1.00 & 0.97 & 0.97 & 0.99 \\
\hline
\multirow{2}{*}{6}
& $\mathcal{A}^\mathcal{D}_{\mathsf{FC}}$ & 0.99 & 0.99 & 1.00 & 0.97 & 0.97 & 1.00 & 0.97 & 0.97 & 1.00 & 0.99 & 0.99 & 1.00 \\
& $\mathcal{A}^\mathcal{D}_{\mathsf{Attn}}$ & 0.86 & 0.84 & 0.94 & 0.75 & 0.72 & 0.82 & 0.86 & 0.83 & 0.94 & 0.86 & 0.84 & 0.93 \\
\hline
\multirow{2}{*}{4}
& $\mathcal{A}^\mathcal{D}_{\mathsf{FC}}$ & 0.86 & 0.83 & 0.91 & 0.86 & 0.85 & 0.90 & 0.83 & 0.80 & 0.87 & 0.78 & 0.75 & 0.82 \\
& $\mathcal{A}^\mathcal{D}_{\mathsf{Attn}}$ & 0.56 & 0.52 & 0.59 & 0.56 & 0.53 & 0.55 & 0.57 & 0.52 & 0.60 & 0.57 & 0.53 & 0.60 \\
\hline
\multirow{2}{*}{2}
& $\mathcal{A}^\mathcal{D}_{\mathsf{FC}}$ & 0.59 & 0.57 & 0.63 & 0.59 & 0.55 & 0.62 & 0.50 & 0.50 & 0.50 & 0.55 & 0.52 & 0.52 \\
& $\mathcal{A}^\mathcal{D}_{\mathsf{Attn}}$ & 0.55 & 0.50 & 0.54 & 0.50 & 0.50 & 0.50 & 0.51 & 0.50 & 0.50 & 0.50 & 0.50 & 0.52 \\
\hline
\end{tabular}
\end{table*}

\begin{table*}[h!]
\caption{Average Accuracies, F1, and AUCs of AMI attacks under RAPPOR defense.}
\label{table:dp_rappor}
\centering
\begin{tabular}{ccccc|ccc|ccc|ccc}
\hline
\multirow{2}{*}{\textbf{$\varepsilon$}} & \multirow{2}{*}{\textbf{Method}} & \multicolumn{3}{c|}{\textbf{BERT}} & \multicolumn{3}{c|}{\textbf{RoBERTa}} & \multicolumn{3}{c|}{\textbf{DistilBERT}} & \multicolumn{3}{c}{\textbf{GPT1}} \\
 & & ACC & F1 & AUC & ACC & F1 & AUC & ACC & F1 & AUC & ACC & F1 & AUC \\
\hline
\multirow{2}{*}{$\infty$}
& $\mathcal{A}^\mathcal{D}_{\mathsf{FC}}$ & 1.00 & 1.00 & 1.00 & 1.00 & 1.00 & 1.00 & 1.00 & 1.00 & 1.00 & 1.00 & 1.00 & 1.00 \\
& $\mathcal{A}^\mathcal{D}_{\mathsf{Attn}}$ & 1.00 & 1.00 & 1.00 & 0.96 & 0.96 & 0.99 & 1.00 & 1.00 & 1.00 & 1.00 & 1.00 & 1.00 \\
\hline
\multirow{2}{*}{8}
& $\mathcal{A}^\mathcal{D}_{\mathsf{FC}}$ & 0.98 & 0.98 & 1.00 & 0.98 & 0.98 & 1.00 & 0.96 & 0.96 & 0.99 & 0.98 & 0.97 & 1.00 \\
& $\mathcal{A}^\mathcal{D}_{\mathsf{Attn}}$ & 0.73 & 0.69 & 0.79 & 0.52 & 0.49 & 0.54 & 0.79 & 0.77 & 0.86 & 0.66 & 0.61 & 0.72 \\
\hline
\multirow{2}{*}{6}
& $\mathcal{A}^\mathcal{D}_{\mathsf{FC}}$ & 0.81 & 0.79 & 0.89 & 0.88 & 0.87 & 0.94 & 0.73 & 0.72 & 0.80 & 0.76 & 0.73 & 0.83 \\
& $\mathcal{A}^\mathcal{D}_{\mathsf{Attn}}$ & 0.54 & 0.50 & 0.56 & 0.53 & 0.50 & 0.51 & 0.55 & 0.51 & 0.56 & 0.52 & 0.50 & 0.53 \\
\hline
\multirow{2}{*}{4}
& $\mathcal{A}^\mathcal{D}_{\mathsf{FC}}$ & 0.61 & 0.59 & 0.68 & 0.66 & 0.65 & 0.70 & 0.56 & 0.54 & 0.64 & 0.68 & 0.66 & 0.77 \\
& $\mathcal{A}^\mathcal{D}_{\mathsf{Attn}}$ & 0.50 & 0.50 & 0.50 & 0.50 & 0.50 & 0.52 & 0.50 & 0.50 & 0.50 & 0.50 & 0.50 & 0.50 \\
\hline
\multirow{2}{*}{2}
& $\mathcal{A}^\mathcal{D}_{\mathsf{FC}}$ & 0.61 & 0.58 & 0.67 & 0.58 & 0.56 & 0.58 & 0.60 & 0.58 & 0.65 & 0.61 & 0.60 & 0.60 \\
& $\mathcal{A}^\mathcal{D}_{\mathsf{Attn}}$ & 0.50 & 0.50 & 0.50 & 0.50 & 0.50 & 0.50 & 0.50 & 0.50 & 0.50 & 0.50 & 0.50 & 0.50 \\
\hline
\end{tabular}
\end{table*}

\begin{table*}[h!]
\caption{Average Accuracies, F1, and AUCs of AMI attacks under dBitFlipPM defense.}
\label{table:dp_dbit}
\centering
\begin{tabular}{ccccc|ccc|ccc|ccc}
\hline
\multirow{2}{*}{\textbf{$\varepsilon$}} & \multirow{2}{*}{\textbf{Method}} & \multicolumn{3}{c|}{\textbf{BERT}} & \multicolumn{3}{c|}{\textbf{RoBERTa}} & \multicolumn{3}{c|}{\textbf{DistilBERT}} & \multicolumn{3}{c}{\textbf{GPT1}} \\
 & & ACC & F1 & AUC & ACC & F1 & AUC & ACC & F1 & AUC & ACC & F1 & AUC \\
\hline
\multirow{2}{*}{$\infty$}
& $\mathcal{A}^\mathcal{D}_{\mathsf{FC}}$ & 1.00 & 1.00 & 1.00 & 1.00 & 1.00 & 1.00 & 1.00 & 1.00 & 1.00 & 1.00 & 1.00 & 1.00 \\
& $\mathcal{A}^\mathcal{D}_{\mathsf{Attn}}$ & 1.00 & 1.00 & 1.00 & 0.96 & 0.96 & 0.99 & 1.00 & 1.00 & 1.00 & 1.00 & 1.00 & 1.00 \\
\hline
\multirow{2}{*}{8}
& $\mathcal{A}^\mathcal{D}_{\mathsf{FC}}$ & 0.96 & 0.96 & 0.99 & 0.98 & 0.98 & 1.00 & 0.96 & 0.95 & 1.00 & 0.98 & 0.97 & 0.99 \\
& $\mathcal{A}^\mathcal{D}_{\mathsf{Attn}}$ & 0.76 & 0.72 & 0.84 & 0.65 & 0.59 & 0.73 & 0.76 & 0.72 & 0.84 & 0.65 & 0.59 & 0.73 \\
\hline
\multirow{2}{*}{6}
& $\mathcal{A}^\mathcal{D}_{\mathsf{FC}}$ & 0.79 & 0.77 & 0.87 & 0.84 & 0.82 & 0.92 & 0.69 & 0.67 & 0.76 & 0.77 & 0.75 & 0.82 \\
& $\mathcal{A}^\mathcal{D}_{\mathsf{Attn}}$ & 0.55 & 0.50 & 0.55 & 0.53 & 0.50 & 0.51 & 0.55 & 0.50 & 0.59 & 0.52 & 0.50 & 0.53 \\
\hline
\multirow{2}{*}{4}
& $\mathcal{A}^\mathcal{D}_{\mathsf{FC}}$ & 0.65 & 0.65 & 0.74 & 0.69 & 0.68 & 0.75 & 0.50 & 0.50 & 0.51 & 0.65 & 0.63 & 0.68 \\
& $\mathcal{A}^\mathcal{D}_{\mathsf{Attn}}$ & 0.50 & 0.50 & 0.50 & 0.50 & 0.50 & 0.52 & 0.50 & 0.50 & 0.50 & 0.50 & 0.50 & 0.50 \\
\hline
\multirow{2}{*}{2}
& $\mathcal{A}^\mathcal{D}_{\mathsf{FC}}$ & 0.61 & 0.59 & 0.64 & 0.56 & 0.53 & 0.59 & 0.59 & 0.56 & 0.59 & 0.63 & 0.61 & 0.67 \\
& $\mathcal{A}^\mathcal{D}_{\mathsf{Attn}}$ & 0.50 & 0.50 & 0.50 & 0.50 & 0.50 & 0.50 & 0.50 & 0.50 & 0.50 & 0.50 & 0.50 & 0.50 \\
\hline
\end{tabular}
\end{table*}
\subsection{ROC analysis} \label{appx:roc}
We also conduct an ROC analysis of the attack success rates (on IMDB dataset). GRR shows the worst privacy with attack AUCs of 0.946 ($\varepsilon = 6$) and 1.0 ($\varepsilon = 8$), while RAPPOR and dBitFlipPM provide stronger protection—achieving near-random performance at and moderate resistance at. Zoomed-in plots show that GRR leaks sensitive signals even at low FPRs and high TPRs, while RAPPOR and dBitFlipPM maintain partial robustness in these critical regions. Details are given in Fig. \ref{fig:roc}.
\begin{figure}[ht!]
    \centering
        \includegraphics[width=0.8\linewidth]{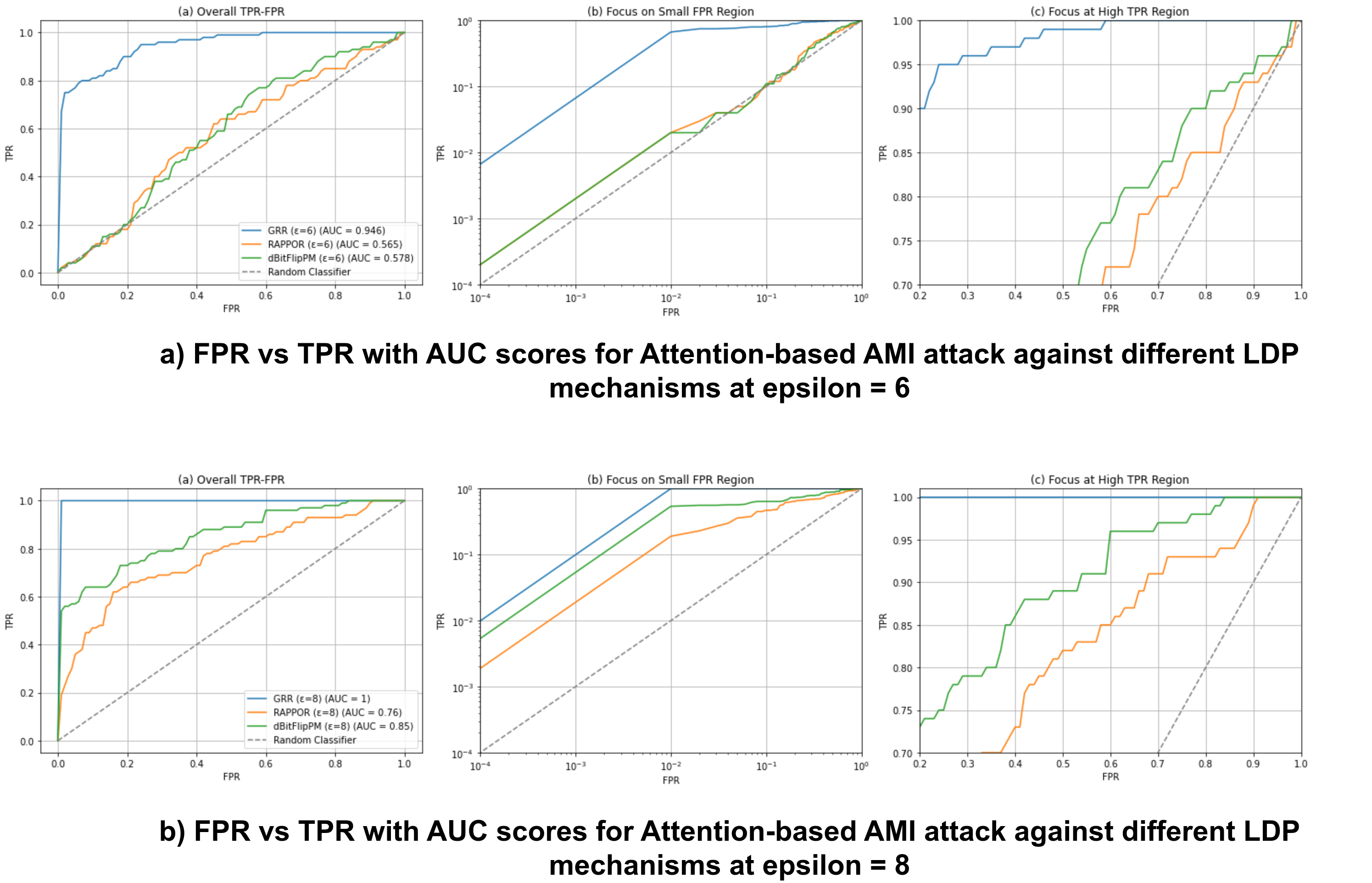}
    \caption{ROC analysis of Attention-based AMI on LDP-protected IMDB dataset.}
    \label{fig:roc}
\end{figure}

\section{Alternative privacy-preserving techniques beyond LDP} \label{appx:alternative_priv}

Other works have pursued cryptography-based techniques such as secure multi-party computation (SMPC) or homomorphic encryption (HE) that preserve privacy without adding excessive noise, hence preserving utility. SMPC-based FL systems offer strong privacy guarantees by ensuring no party learns individual updates via secure aggregation protocols \cite{ma2023flamingo,bonawitz2017practical}, but they incur high communication overhead, especially as the number of participants grows. On the other hand, HE provides end-to-end encryption in FL, allowing computations directly on encrypted data, but it introduces significant computational costs due to the complexity of cryptographic operations \cite{nguyen2023preserving,pan2024fedshe}. Furthermore, encryption alone does not protect against inference from the final global model: if an adversary obtains the trained model, they could still perform membership inference or other attacks. Recent works combine LDP and HE/SMPC, potentially providing a comprehensive solution \cite{aziz2023exploring}. Additionally, we note that due to the high communication and computation overhead, secure aggregation might not be feasible in certain FL applications.


It is worth noting that secure aggregation (e.g., Secure Multi-Party Computation, Homomorphic Encryption) and Local Differential Privacy (LDP) are orthogonal research directions. SMPC/HE primarily aims to conceal local gradients from the server during aggregation, ensuring no individual gradient is exposed. LDP focuses on preventing local gradients from revealing membership information about specific data points by adding noise to the local data/ gradient.

Our threat model assumes an actively dishonest server that has access to the local gradients of clients, and conducts the proposed AMI attacks based on the local gradients. Even though using SMPC or HE could potentially conceal such info from the server, previous research has shown that an actively dishonest adversary can circumvent secure aggregation protocols \cite{ngo2024secure,gao2021secure,liu2023tear, pasquini2022eluding,kariyappa2023cocktail, nguyen2022blockchain} to reconstruct the targeted client's gradients. Hence, even with secure aggregation, our threat model is still applicable when the server uses the above attacks to get around it and access local gradients before conducting the AMI attacks. Therefore, assuming that the dishonest server successfully circumvents secure aggregation, such techniques as SMPC or HE do not impact the success rates or the theoretical analysis of our proposed AMI attacks.

\end{document}